\newtheorem{thm}{Theorem}
\newtheorem{defin}{Definition}
\newtheorem{lem}{Lemma}
\newtheorem{assum}{Assumption}
\newtheorem{rem}{Remark}
\newtheorem{con}{Condition}
	\providecommand\BibTeX{{%
			\normalfont B\kern-0.5em{\scshape i\kern-0.25em b}\kern-0.8em\TeX}}}
\journal{Information Sciences}
\begin{document}

\captionsetup[figure]{labelfont={bf},labelformat={default},labelsep=period,name={Fig.}}

\begin{frontmatter}
\title{Finding mixed memberships in categorical data}

\author[label1]{Huan Qing\corref{cor1}}
\ead{qinghuan@cqut.edu.cn\&qinghuan@u.nus.edu\&qinghuan07131995@163.com}
\cortext[cor1]{Corresponding author.}
\address[label1]{School of Economics and Finance, Lab of Financial Risk Intelligent Early Warning and Modern Governance, Chongqing University of Technology, Chongqing, 400054, China}
\begin{abstract}
Latent class analysis, a fundamental problem in categorical data analysis, often encounters overlapping latent classes that introduce further challenges. This paper presents a solution to this problem by focusing on finding latent mixed memberships of subjects in categorical data with polytomous responses. We employ the Grade of Membership (GoM) model, which assigns each subject a membership score in each latent class. To address this, we propose two efficient spectral algorithms for estimating these mixed memberships and other GoM parameters. Our algorithms are based on the singular value decomposition of a regularized Laplacian matrix. We establish their convergence rates under a mild condition on data sparsity. Additionally, we introduce a metric to evaluate the quality of estimated mixed memberships for real-world categorical data and determine the optimal number of latent classes based on this metric. Finally, we demonstrate the practicality of our methods through experiments on both computer-generated and real-world categorical datasets.
\end{abstract}
\begin{keyword}
Categorical data\sep Grade of Membership model\sep regularized Laplacian matrix\sep sparsity \sep fuzzy modularity
\end{keyword}
\end{frontmatter}
\section{Introduction}\label{sec1}
Categorical data, commonly collected in social science research, involves a collection of subjects, items, and subjects' responses to items \citep{sloane1996introduction,agresti2012categorical}. For instance, in psychological evaluations, individuals serve as subjects while diverse statements represent items. Similarly, in educational assessments, students are the subjects, and questions serve as items. Another example is MovieLens 100k \citep{kunegis2013konect}, a user-rating-movie dataset where subjects are individual users and items are distinct movies. In most categorical data, a subject belongs to multiple latent classes. For instance, in personality tests, an individual can exhibit traits such as conscientiousness and dominance simultaneously. Similarly, in political surveys, individuals may hold both liberal and conservative ideologies with varying degrees of emphasis.

For the simple case that each subject belongs to a single latent class, perhaps the most popular model is the latent class model (LCM) \citep{goodman1974exploratory}. The LCM models the response matrix $R$ of categorical data by assuming that $R$'s expectation is the product of a classification matrix and the transpose of an item parameter matrix under a Binomial distribution. LCM can be estimated by Bayesian inference approaches using Markov chain Monte Carlo (MCMC) \citep{garrett2000latent,white2014bayeslca,li2018bayesian}, maximum likelihood estimation methods \citep{bakk2016robustness,chen2022beyond,gu2023joint}, tensor-based methods \citep{zeng2023tensor}, and provably consistent spectral clustering algorithms \citep{qing2023latent}. These methods are not immediately applicable to estimate subjects' mixed memberships when a subject may belong to multiple latent classes.

In this paper, we work with the well-known Grade of Membership (GoM) model \citep{woodbury1978mathematical}. GoM generalizes the LCM by letting each subject be associated with a membership score such that each subject belongs to different latent classes with different weights (i.e., extents). GoM also assumes that $R$ is generated from a Binomial distribution and $R$'s expectation is the product of a membership matrix and the transpose of an item parameter matrix. The goal of latent mixed membership analysis is to consistently estimate each subject's membership score from the observed response matrix $R$ of categorical data under the GoM model. Prior work on estimating latent mixed memberships under the GoM model includes Bayesian inference using MCMC \citep{erosheva2002grade,erosheva2007describing,gormley2009grade,gu2023dimension} and joint maximum likelihood (JML) algorithm  \citep{sirt_3.13-194}. However, both MCMC and JML do not have theoretical guarantees, and as pointed out in \citep{chen2024spectral}, they are time-consuming and perform unsatisfactorily when they are applied to deal with large-scale categorical data with many subjects and items. Recently, \citep{chen2024spectral} proposes an efficient method with theoretical guarantees under GoM for categorical data with binary responses. Though numerical results in \citep{chen2024spectral} show that their algorithm outperforms JML significantly both in efficiency and accuracy, we observe that work in \citep{chen2024spectral} still faces some limitations: (a) algorithm proposed in \citep{chen2024spectral} does not work for categorical data with polytomous responses; (b) work in \citep{chen2024spectral} does not consider data sparsity which measures the number of zeros (i.e., no-responses) in $R$; (c) work in \citep{chen2024spectral} does not consider how to accurately infer the number of latent classes $K$ for response matrices generated from the GoM model. Finally, there is no metric to measure the quality of estimated mixed memberships for real-world categorical data for all aforementioned works under GoM. Based on these observations, we make the following contributions in this paper:
\begin{itemize}
  \item \added{We introduce two efficient spectral methods for estimating latent mixed memberships and other GoM parameters in categorical data with polytomous responses. Our algorithms leverage the leading left singular vectors of a regularized Laplacian matrix derived from the response matrix. The fundamental concept underlying our methods is the existence of an ideal simplex structure and an ideal cone structure within the variants of these singular vectors for the oracle case where the response matrix's expectation is known under the GoM model. Leveraging this insight, we employ two established vertex-hunting algorithms — the successive projection algorithm \citep{gillis2015semidefinite} and the SVM-cone algorithm \citep{mao2018overlapping} — to identify the index set of pure subjects defined in Section \ref{sec2}. Utilizing this index set, we estimate the GoM parameters according to Lemma \ref{SVDPopulationLtau}. We further establish the convergence rates of the estimated membership scores for each subject under a mild condition on data sparsity. Given that our algorithms rely primarily on a select few leading singular vectors, they offer significant computational advantages for large-scale categorical data. To our knowledge, this is the first work to develop spectral methods based on a regularized Laplacian matrix for estimating the GoM model in categorical data with polytomous responses.} \deleted{We propose two efficient spectral methods to estimate latent mixed memberships and other GoM parameters for categorical data with polytomous responses. Our algorithms are developed by using a few leading left singular vectors of a regularized Laplacian matrix computed from the response matrix. The core idea behind our methods is the observation that there exist an ideal simplex structure and an ideal cone structure in the variants of the leading left singular vectors for the oracle case with known the expectation of the response matrix under the GoM model. Based on this observation, two popular vertex-hunting algorithms successive projection algorithm \citep{gillis2015semidefinite} and SVM-cone algorithm \citep{mao2018overlapping} are applied to the variants of the leading left singular vectors to find the index set of pure subjects defined in Section \ref{sec2}. With the help of the index set, we can estimate the GoM parameters based on Lemma \ref{SVDPopulationLtau}. We then establish the convergence rates of the estimated membership score for each subject under a mild condition on data sparsity. Since our algorithms are designed based on a few leading singular vectors of a matrix, they have a great advantage in computational cost for large-scale categorical data. To the best of our knowledge, this is the first work to develop spectral methods based on a regularized Laplacian matrix to estimate the GoM model for categorical data with polytomous responses.}
  \item \added{We introduce a metric to assess the quality of estimated mixed memberships in real-world categorical data. This metric builds upon the concept of fuzzy modularity, a widely used approach in social network analysis for evaluating the quality of overlapping community detection \citep{nepusz2008fuzzy}. By employing this metric, we determine the optimal value of $K$ by selecting the one that maximizes the fuzzy modularity.} \deleted{We propose a metric to measure the quality of estimated mixed memberships for real-world categorical data. Our metric is an application of the fuzzy modularity \citep{nepusz2008fuzzy} which has been widely used to measure the quality of overlapping community detection in social network analysis. We then estimate $K$ by choosing the one that maximizes the fuzzy modularity.} To our knowledge, our metric is the first to measure the quality of estimated mixed memberships for categorical data with polytomous responses.
  \item We conduct substantial numerical studies to investigate the efficiencies and accuracies of our algorithms. We also apply our algorithms and metric to four real-world categorical data with meaningful results.
\end{itemize}

The remaining sections of this work are structured as follows. Section \ref{relatedwork} offers a comprehensive overview of related works. Section \ref{sec2} lays the foundation for our model setup. Section \ref{sec3} introduces two algorithms, while Section \ref{sec4} provides theoretical guarantees for these algorithms. Section \ref{sec5} delves into the concept of fuzzy modularity and outlines a method to estimate the number of latent classes. Sections \ref{sec6} and \ref{sec7} present our simulated and empirical studies, respectively. Section \ref{sec8} concludes this paper. All proofs are included in the Appendix.
\section{Related works}\label{relatedwork}
\added{Previous research on categorical data has mainly centered around two general categories: algorithmic methods, which aim to optimize criteria that reveal disparities in responses across every conceivable partition, and model-based methods, which employ statistical models to theoretically model categorical data.}\deleted{Previous research on categorical data has primarily focused on two broad categories: algorithmic methods that optimize criteria reflecting differences among responses across all possible partitions, and model-based methods that utilize statistical models to theoretically model categorical data.}

For algorithmic methods, when each subject belongs to a single latent class, the K-modes (KM) algorithm \citep{huang1998extensions} is the first conventional algorithm for categorical data that uses Hamming
distance to replace Euclidean distance in the K-means algorithm and it enables the clustering of categorical data in a fashion similar to K-means. When each subject belongs to multiple latent classes, the fuzzy K-modes (FKM) algorithm \citep{huang1999fuzzy} is one of the most popular clustering algorithms for clustering categorical data. Both KM and FKM are sensitive to the initial cluster centers which can drop their performances in clustering. To overcome this limitation, some extensions of KM and FKM are developed in recent years, to name a few, \citep{cao2009new,khan2013cluster,jiang2016initialization,kuo2021metaheuristic,oskouei2021fkmawcw,xie2022dp,bai2022categorical}. However, these algorithmic methods do not have any theoretical guarantees of convergence rates (i.e., error rates) and they can not estimate the item parameter matrix $\Theta$ used for the LCM model and the GoM model in Equation (\ref{GoM}).

For model-based methods, latent class analysis (LCA) is a widely used statistical technique that identifies latent subgroups in categorical data across various fields, including machine learning, statistical analysis, social science, and psychology \citep{hagenaars2002applied, lanza2013latent, lanza2016latent,nylund2018ten,weller2020latent}. It is a method that groups subjects with similar responses in categorical data together \citep{nylund2018ten}. Compared to traditional clustering methods like K-means and K-modes, LCA offers a model-based clustering approach. A significant advantage of LCA over K-means and K-modes is that the selection of the clustering criterion is based on rigorous statistical tests. Consequently, LCA is considered more objective than other clustering methods \citep{He2018}. The LCM and GoM models are two popular statistical models used in LCA for categorical data. The LCM model assumes that each subject belongs to a single latent class, while the GoM model allows each subject to belong to multiple latent classes. As we mentioned earlier, though Bayesian inference and maximum likelihood estimation can be used to estimate LCM and GoM models, these methods lack theoretical guarantees and are computationally expensive \citep{chen2024spectral}. To develop methods with theoretical guarantees for categorical data with polytomous responses under the GoM model, this paper focuses on spectral clustering.

Spectral clustering approaches, relying on eigendecomposition or singular value decomposition of matrices, are widely utilized in model-based methods due to their theoretical soundness and ease of implementation \citep{ng2001spectral,von2007tutorial}. For example, in the realm of community detection, numerous spectral clustering methods have been developed under distinct statistical models for diverse types of networks \citep{rohe2016co,mao2018overlapping,mao2021estimating,qing2023regularized,guo2023randomized,qing2024bipartite}. Notably, spectral methods in \citep{mao2018overlapping,mao2021estimating,qing2023regularized,qing2024bipartite} are capable of estimating mixed memberships when nodes in networks belong to multiple communities. However, except for the spectral method introduced in \citep{chen2024spectral}, the concept of spectral clustering is seldom employed in LCA for categorical data with polytomous responses under the GoM model. To accommodate the GoM model for categorical data with polytomous responses, we present two novel spectral methods: Algorithm \ref{alg:SRSC} and \ref{alg:CRSC}, which are based on the regularized Laplacian matrix introduced in \citep{qing2023latent} and the vertex hunting techniques developed in \citep{mao2018overlapping, mao2021estimating}. Furthermore, we account for data sparsity and provide per-subject error rates for mixed memberships while theoretical analysis in \citep{chen2024spectral} ignores data sparsity and only offers overall error bounds for the estimated mixed membership matrix.

Meanwhile, the Newman-Girvan modularity \citep{newman2004finding,newman2006modularity} is a popular metric to measure the quality of community partition when the ground-truth communities are unknown and each node belongs to a single community. The fuzzy modularity introduced in \citep{nepusz2008fuzzy} extends the Newman-Girvan modularity to measure the quality of mixed membership community detection when nodes can belong to multiple communities. The Newman-Girvan modularity is used in \citep{qing2023latent} to measure the quality of latent class analysis when each subject belongs to a single latent class under the LCM model. However, to our knowledge, when the ground-truth mixed membership matrix is unknown for categorical data, there is no prior work to measure the quality of latent mixed membership analysis for categorical data. Compared to the model-based works in \citep{chen2024spectral,qing2023latent} and algorithmic-based works in \citep{huang1998extensions,huang1999fuzzy,cao2009new,khan2013cluster,jiang2016initialization,kuo2021metaheuristic,oskouei2021fkmawcw,xie2022dp}, the last novelty of this paper is that we use the fuzzy modularity in \citep{nepusz2008fuzzy} to measure the quality of latent mixed membership analysis and estimate the number of latent classes based on this modularity under the GoM model for categorical data with polytomous responses.
\section{The Grade of Membership (GoM) model}\label{sec2}
\begin{table}[ht]
\centering
\scriptsize
\rowcolors{1}{white!22}{white!22}
\resizebox{\columnwidth}{!}{
\begin{tabular}{cc|cc}
\hline
Symbol&Description&Symbol&Description\\
\hline
$N$&Number of subjects&$J$&Number of items\\
$M$&Positive integer&$R\in\{0,1,\ldots,M\}^{N\times J}$&Response matrix\\
$K$&Number of latent classes&$\Pi\in[0,1]^{N\times K}$&Membership matrix\\
$X(i,:)$&$i$-th row of any matrix $X$&$[m]$&$\{1,2,\ldots, m\}$ for any positive integer $m$\\
$\Theta\in[0,M]^{J\times K}$&Item parameter matrix&$\mathcal{R}\in[0,M]^{N\times J}$&$R$'s expectation matrix $\Pi\Theta'$\\
$\mathbb{E}(X)$&Expectation of $X$&$\mathbb{P}(x=a)$&Probability that $x$ equals to $a$ for any value $x$\\
$\mathcal{I}$&Index set of pure subjects&$X(S,:)$&Submatrix formed by rows in index set $S$ for any matrix $X$\\
$I_{m\times m}$&$m\times m$ identity matrix&$\tau$&Regularization parameter\\
$\mathscr{D}$&Diagonal matrix with $\mathscr{D}(i,i)=\sum_{j\in[J]}\mathscr{R}(i,j)$ for $i\in[N]$&$\mathscr{D}_{\tau}$&$\mathscr{D}+\tau I_{N\times N}$\\
$\mathscr{L}_{\tau}$&Population regularized Laplacian matrix $\mathscr{D}^{-1/2}_{\tau}\mathscr{R}$&$\|x\|_{q}$&$l_{q}$-norm for any vector $x$\\
$\mathrm{diag}(x)$&Diagonal matrix with $x$'s entries being its diagonal&$X(:,j)$&$j$-th column for any matrix $X$\\
$\|X\|_{F}$&Frobenius norm for any matrix $X$&$\sigma_{k}(X)$&$k$-th largest singular value for any matrix $X$\\
$U\in\mathbb{R}^{N\times K}$&Top $K$ left singular vectors of $\mathscr{L}_{\tau}$&$\Sigma\in\mathbb{R}^{K\times K}_{+}$&Diagonal matrix of top $K$ singular values of $\mathscr{L}_{\tau}$\\
$V\in\mathbb{R}^{J\times K}$&Top $K$ right singular vectors of $\mathscr{L}_{\tau}$&$U_{\tau}$&$\mathscr{D}^{1/2}_{\tau}U$\\
$D_{U}$&Diagonal matrix with $D_{U}(i,i)=\frac{1}{\|U(i,:)\|_{F}}$ for $i\in[N]$&$U_{*}$&$D_{U}U$\\
$X'$&Transpose of any matrix $X$&$\|X\|_{2\rightarrow\infty}$&$\mathrm{max}_{i}\|X(i,:)\|_{2}$ for any matrix $X$\\
$D_{o}$&Diagonal matrix with $D_{o}(i,i)=\frac{1}{\|\Pi(i,:)U_{\tau}(\mathcal{I},:)\|_{F}}$ for $i\in[N]$&$Y$&$D_{o}\Pi\mathscr{D}^{1/2}_{\tau}(\mathcal{I},\mathcal{I})D^{-1}_{U}(\mathcal{I},\mathcal{I})$\\
$D$&Diagonal matrix with $D(i,i)=\sum_{j\in[J]}R(i,j)$ for $i\in[N]$&$D_{\tau}$&$D+\tau I_{N\times N}$\\
$L_{\tau}$&Regularized Laplacian matrix $D^{-1/2}_{\tau}R$&$\hat{U}\in\mathbb{R}^{N\times K}$&Top $K$ left singular vectors of $L_{\tau}$\\
$\hat{\Sigma}\in\mathbb{R}^{K\times K}_{+}$&Diagonal matrix of top $K$ singular values of $L_{\tau}$&$\hat{V}\in\mathbb{R}^{J\times K}$&Top $K$ right singular vectors of $L_{\tau}$\\
$\hat{U}_{\tau}$&$D^{1/2}_{\tau}\hat{U}$&$\hat{\mathcal{I}}$&Estimated index set\\
$\hat{\Pi}$&Estimated membership matrix&$\hat{\Theta}$&Estimated item parameter matrix\\
$D_{\hat{U}}$&Diagonal matrix with $D_{\hat{U}}(i,i)=\frac{1}{\|\hat{U}(i,:)\|_{F}}$ for $i\in[N]$&$\hat{U}_{*}$&$D_{\hat{U}}\hat{U}$\\
$\rho$&Sparsity parameter $\mathrm{max}_{j\in[J],k\in[K]}\Theta(j,k)$&$B$&$\frac{\Theta}{\rho}$\\
$\pi_{\mathrm{min}}$&$\mathrm{min}_{k\in[K]}\sum_{i\in[N]}\Pi(i,k)$&$X^{-1}$&Inverse of any nonsingular matrix $X$\\
$\mathcal{P}$&$K\times K$ permutation matrix&$A$&Adjacency matrix defined as $RR'$\\
$d_{i}$&$\sum_{j\in[N]}A(i,j)$&$\omega$&$\sum_{i\in[N]}d_{i}$\\
$\lambda_{k}(X)$&$k$-th largest eigenvalue in magnitude for any matrix $X$&$\kappa(X)$&Condition number for any matrix $X$\\
$\mathbf{1}$&A vector with all elements being 1&$e_{i}$&$e_{i}(j)=1(i=j)$\\
$\mu$&Proportion of highly pure subjects&$\nu$&Proportion of highly mixed subjects\\
$\|X\|$&Spectral norm for any matrix $X$&$\delta_{\mathrm{min}}$&$\mathrm{min}_{i\in[N]}\mathscr{D}(i,i)$\\
$\delta_{\mathrm{max}}$&$\mathrm{max}_{i\in[N]}\mathscr{D}(i,i)$&$a=O(b)$&$a$ is of the same order as $b$\\
\hline
\end{tabular}
}
\caption{Main symbols used in this paper.}
\label{table-symbol}
\end{table}

This paper focuses on categorical data with polytomous responses, which are widely available in the field of sociology, psychology, and education, including never true/rarely true/sometimes true/often true/always true in psychological tests and a/b/c/d choices in educational assessments. For polytomous responses, as a convention, we use $\{0, 1, \ldots, M\}$ to denote different kinds of responses, where $0$ usually means no response and $M$ is a positive integer at least 1. Categorical data with polytomous responses can be mathematically represented by an $N\times J$ response matrix $R$ such that $R(i,j)$ takes values in $\{0,1,\ldots, M\}$ and it records the observed responses of the $i$-th subject to the $j$-th item, where $N$ denotes the number of subjects and $J$ represents the number of items. We present a list of main symbols in this paper in Table \ref{table-symbol}.

The GoM model assumes that all subjects belong to $K$ latent classes, where $K$ is assumed to be a known positive integer that is much smaller than $N$ and $J$ in this paper. Let $\Pi(i,:)$ be a $1\times K$ membership vector that satisfies $\Pi(i,:)\geq0$ and $\sum_{k=1}^{K}\Pi(i,k)=1$ for $i\in[N]$ and $k\in[K]$, where $\Pi(i,k)$ denotes the weight (probability) that the $i$-th subject belongs to the $k$-th latent class, $[m]=\{1,2,\ldots,m\}$ for any non-negative integer $m$, and $X(i,:)$ denotes $X$'s $i$-th row for any matrix $X$ in this paper. We call subject $i$ a pure subject if there is one entry of $\Pi(i,:)$ being 1 and call subject $i$ a mixed subject otherwise, i.e., a pure subject solely belongs to one latent class while a mixed subject belongs to multiple latent classes.

\added{Let $\Theta$ be a $J \times K$ item parameter matrix with elements ranging in $[0, M]$. The GoM model, designed to generate the response matrix $R$ for categorical data with polytomous responses, is formulated as follows:}\deleted{Let $\Theta$ be the $J\times K$ item parameter matrix whose elements range in $[0,M]$. The GoM model for generating the response matrix $R$ of categorical data with polytomous responses is as follows:}
\begin{align}\label{GoM}
\mathscr{R}:=\Pi\Theta'~~~~R(i,j)\sim \mathrm{Binomial}(M,\frac{\mathscr{R}(i,j)}{M})~~~~i\in[N], j\in[J],
\end{align}
where $R(i,j)\sim \mathrm{Binomial}(M,\frac{\mathscr{R}(i,j)}{M})$ \added{indicates}\deleted{means} that $R(i,j)$ is a random number generated from a Binomial distribution with $M$ \deleted{repeated} trials and \added{a} success probability \added{of} $\frac{\mathscr{R}(i,j)}{M}$. From Equation (\ref{GoM}), it is \added{evident} \deleted{clear to observe} that \added{the GoM model} \deleted{GoM} is \added{constituted} \deleted{formed} by the two model parameters\added{,} $\Pi$ and $\Theta$. For \added{brevity}\deleted{convenience}, we \deleted{denote the GoM model using the notation}\added{refer to the GoM model as} $\mathrm{GoM}(\Pi,\Theta)$. Equation (\ref{GoM}) \added{further} \deleted{also} implies that $R(i,j)$'s expectation under GoM is $\mathscr{R}(i,j)$, i.e., $\mathbb{E}(R)=\mathscr{R}$ under GoM. \added{Therefore, we designate $\mathscr{R}$ as the population response matrix. Furthermore, in scenarios where all subjects are pure (i.e., no mixed subjects), the GoM model simplifies to the widely used latent class model for categorical data.} \deleted{ Thus, we call $\mathscr{R}$ the population response matrix. What's more, when all subjects are pure (i.e., no mixed subjects), the GoM model reduces to the popular latent class model for categorical data.}

Since Binomial distribution is a discrete distribution, the probability of $R(i,j)$ equals choice $m$ is:
\begin{align}\label{GoMBinomial}
\mathbb{P}(R(i,j)=m)=\binom{M}{m}(\frac{\mathscr{R}(i,j)}{M})^{m}(1-\frac{\mathscr{R}(i,j)}{M})^{M-m} \qquad m=0,1,2,\ldots,M,
\end{align}
where $\binom{M}{m}$ represents the binomial coefficient. Recall that $\Theta$'s elements are required to be in the range $[0,M]$, this is because $\Pi$'s entries range in $[0,1]$, $\sum_{k=1}^{K}\Pi(i,k)=1$ for $i\in[N], \mathscr{R}=\Pi\Theta'$, and $\frac{\mathscr{R}(i,j)}{M}$ is a probability in $[0,1]$.
\begin{rem}
For the binary responses case (i.e., $R$'s elements are either 0 or 1), we can simply set $M$ as 1 in Equation (\ref{GoM}). For this case, Equation (\ref{GoMBinomial}) becomes $\mathbb{P}(R(i,j)=1)=\mathscr{R}(i,j)$ and $\mathbb{P}(R(i,j)=0)=1-\mathscr{R}(i,j)$.
\end{rem}

By Theorem 2 in \citep{chen2024spectral}, we know that the GoM model is identifiable (i.e., well-defined) when the following conditions hold
\begin{description}
  \item[(C1)] Each latent class has at least one pure subject.
  \item[(C2)] The rank of the $J\times K$ matrix $\Theta$ is $K$.
\end{description}

Throughout this paper, the two conditions (C1) and (C2) are treated as default. Let $\mathcal{I}$ be the index set of pure subjects such that $\mathcal{I}=\{p_{1}, p_{2}, \ldots,p_{K}\}$ with $p_{k}$ being an arbitrary pure subject in the $k$-th latent class for $k\in[K]$. Similar to \citep{mao2021estimating}, W.L.O.G., reorder the subjects so that $\Pi(\mathcal{I},:)=I_{K\times K}$, the $K$-by-$K$ identity matrix.

By Equation (\ref{GoM}), a response matrix $R\in\{0,1,2,\ldots,M\}^{N\times J}$ can be generated from the GoM mode for categorical data by below two steps.
\begin{description}
  \item[(a)] Set $\Pi$ and $\Theta$ satisfying Conditions (C1) and (C2). Set $\mathscr{R}=\Pi\Theta'$.
  \item[(b)] Generate $R(i,j)$ from a Binomial distribution with $M$ independent trails and success probability $\frac{\mathscr{R}(i,j)}{M}$ for $i\in[N], j\in[J]$.
\end{description}

By Steps (a) and (b), we can generate a response matrix $R$ with elements taking values in $\{0,1,\ldots, M\}$, true mixed membership matrix $\Pi$, and true item parameter matrix $\Theta$. A visualization of a response matrix $R$ as taking $\Pi$ and $\Theta$ as input in the GoM model when $K=2, N=20$, and $J=10$ can be found in Fig.~\ref{Ex4simR}. We see that such $R$ matches the real-world observed response matrix. Given $R$, the primal goal for latent class analysis is to estimate $\Pi$ and $\Theta$.
\section{Algorithms}\label{sec3}
To facilitate readers' understanding of our algorithms, we start from the oracle case where $R$'s expectation $\mathscr{R}$ is assumed to be known in advance. When Conditions (C1) and (C2) are satisfied, $\mathscr{R}$'s rank is $K$ since $\mathscr{R}=\Pi\Theta'$. Recall that $\mathscr{R}$ is an $N\times J$ matrix and $K\ll\mathrm{min}(N,J)$, as a result, $\mathscr{R}$ has only $K$ nonzero singular values. First, we define the following matrix based on $\mathscr{R}$,
\begin{align}\label{populationL}
\mathscr{L}_{\tau}=\mathscr{D}^{-1/2}_{\tau}\mathscr{R},
\end{align}
where $\mathscr{D}_{\tau}=\mathscr{D}+\tau I_{N\times N}$, $\mathscr{D}$ is an $N\times N$ diagonal matrix with $(i,i)$-th diagonal element being $\mathscr{D}(i,i)=\sum_{j=1}^{J}\mathscr{R}(i,j)$, and $\tau\geq0$ is the regularization parameter (we also call it regularizer occasionally). Sure, $\mathscr{D}$'s rank is $N$, which gives that the rank of $\mathscr{L}_{\tau}$ is $K$. Call $\mathscr{L}_{\tau}$ population regularized Laplacian matrix in this paper.

For any vector $x$, $\|x\|_{q}$ denotes its $l_{q}$-norm and $\mathrm{diag}(x)$ denotes a diagonal matrix with elements of $x$ on its diagonal. Set $X(:,j), X(S,:), \|X\|_{F}$, and $\sigma_{k}(X)$ as $X$'s $j$-th column, submatrix made by rows in set $S$, Frobenius norm, and the $k$-th largest singular value for any matrix $X$. Analyzing the singular value decomposition (SVD) of $\mathscr{L}_{\tau}$ gives the following lemma which serves as the starting point of our algorithms.
\begin{lem}\label{SVDPopulationLtau}
Let $\mathscr{L}_{\tau}=U\Sigma V'$ be $\mathscr{L}_{\tau}$'s top-$K$ SVD  such that $\Sigma=\mathrm{diag}(\sigma_{1}(\mathscr{L}_{\tau}), \sigma_{2}(\mathscr{L}_{\tau}),\ldots,\sigma_{K}(\mathscr{L}_{\tau}))$,  $U=[\eta_{1},\eta_{2},\ldots,\eta_{K}]$, and $V=[\xi_{1},\xi_{2},\ldots,\xi_{K}]$, where $U'U=I_{K\times K}, V'V=I_{K\times K}$, $\eta_{k}$ and $\xi_{k}$ denote the left and right singular vectors of $\sigma_{k}(\mathscr{L}_{\tau})$, respectively. Define two $N\times K$ matrices $U_{\tau}$ and $U_{*}$ as $U_{\tau}=\mathscr{D}^{1/2}_{\tau}U$ and $U_{*}=D_{U}U$, where $D_{U}$ is a diagonal matrix and $D_{U}(i,i)=\frac{1}{\|U(i,:)\|_{F}}$ for $i\in[N]$. Then, we have
\begin{itemize}
  \item ($\Theta$'s alternative form) $\Theta=\mathscr{R}'\Pi(\Pi'\Pi)^{-1}$.
  \item (Ideal Simplex) $U_{\tau}=\Pi U_{\tau}(\mathcal{I},:)$.
  \item (Ideal Cone) $U_{*}=YU_{*}(\mathcal{I},:)$, where $Y=D_{o}\Pi\mathscr{D}^{1/2}_{\tau}(\mathcal{I},\mathcal{I})D^{-1}_{U}(\mathcal{I},\mathcal{I})$ and $D_{o}$ is an $N\times N$ diagonal matrix with  $(i,i)$-th diagonal entry being $\frac{1}{\|\Pi(i,:)U_{\tau}(\mathcal{I},:)\|_{F}}$ for $i\in[N]$.
\end{itemize}
\end{lem}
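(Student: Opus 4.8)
The plan is to reduce all three bullets to a single structural fact: the top-$K$ left singular subspace of $\mathscr{L}_{\tau}$ coincides with the column space of $\mathscr{D}_{\tau}^{-1/2}\Pi$. The first bullet needs none of this and I would dispatch it immediately. From $\mathscr{R}=\Pi\Theta'$ we get $\mathscr{R}'\Pi=\Theta\Pi'\Pi$; Condition (C1), via the normalization $\Pi(\mathcal{I},:)=I_{K\times K}$, forces $\Pi$ to have full column rank $K$, so $\Pi'\Pi$ is invertible and right-multiplying by $(\Pi'\Pi)^{-1}$ gives $\Theta=\mathscr{R}'\Pi(\Pi'\Pi)^{-1}$.

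For the Ideal Simplex, I would write $\mathscr{L}_{\tau}=\mathscr{D}_{\tau}^{-1/2}\Pi\Theta'$ and show its column space equals that of $\mathscr{D}_{\tau}^{-1/2}\Pi$. Since Condition (C2) makes $\Theta'$ full row rank, the map $x\mapsto\Theta'x$ is onto $\mathbb{R}^{K}$, so letting $x$ range over $\mathbb{R}^{J}$ sweeps out exactly $\{\mathscr{D}_{\tau}^{-1/2}\Pi z : z\in\mathbb{R}^{K}\}$. Because $\mathscr{D}_{\tau}^{-1/2}$ is a positive diagonal matrix and $\Pi$ has rank $K$, this space is $K$-dimensional, matching the rank of $\mathscr{L}_{\tau}$. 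The columns of $U$ thus form an orthonormal basis of the same space, so there is an invertible $K\times K$ matrix $W$ with $U=\mathscr{D}_{\tau}^{-1/2}\Pi W$. Left-multiplying by $\mathscr{D}_{\tau}^{1/2}$ yields $U_{\tau}=\Pi W$; restricting to the pure-subject rows and using $\Pi(\mathcal{I},:)=I_{K\times K}$ identifies $W=U_{\tau}(\mathcal{I},:)$, whence $U_{\tau}=\Pi U_{\tau}(\mathcal{I},:)$.

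The Ideal Cone then follows by row-normalization bookkeeping. From $U_{\tau}=\Pi U_{\tau}(\mathcal{I},:)$ I would write the $i$-th row of $U$ as $U(i,:)=\mathscr{D}_{\tau}^{-1/2}(i,i)\,\Pi(i,:)U_{\tau}(\mathcal{I},:)$, so $\|U(i,:)\|_{F}=\mathscr{D}_{\tau}^{-1/2}(i,i)\,\|\Pi(i,:)U_{\tau}(\mathcal{I},:)\|_{F}$. Dividing, the scalar $\mathscr{D}_{\tau}^{-1/2}(i,i)$ cancels and leaves $U_{*}(i,:)=D_{o}(i,i)\,\Pi(i,:)U_{\tau}(\mathcal{I},:)$, i.e. $U_{*}=D_{o}\Pi U_{\tau}(\mathcal{I},:)$. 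Finally I would re-express $U_{\tau}(\mathcal{I},:)$ through $U_{*}(\mathcal{I},:)$: comparing $U_{\tau}(\mathcal{I},:)=\mathscr{D}_{\tau}^{1/2}(\mathcal{I},\mathcal{I})U(\mathcal{I},:)$ with $U_{*}(\mathcal{I},:)=D_{U}(\mathcal{I},\mathcal{I})U(\mathcal{I},:)$ gives $U_{\tau}(\mathcal{I},:)=\mathscr{D}_{\tau}^{1/2}(\mathcal{I},\mathcal{I})D_{U}^{-1}(\mathcal{I},\mathcal{I})U_{*}(\mathcal{I},:)$, and substituting produces $U_{*}=YU_{*}(\mathcal{I},:)$ with $Y$ exactly as stated.

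The cancellations in the last paragraph are routine; the step carrying the real content is the column-space identity, since everything downstream hinges on producing the invertible factor $W$ with $U=\mathscr{D}_{\tau}^{-1/2}\Pi W$. The main point to handle carefully there is invertibility of $W$ (not merely existence of some $W$), which requires both rank conditions at once — (C1) to give $\Pi$ rank $K$ and (C2) to keep $\Theta'$ from collapsing the column space — together with $\tau\geq 0$ ensuring $\mathscr{D}_{\tau}$ is invertible so that $\mathscr{D}_{\tau}^{-1/2}$ is well-defined.
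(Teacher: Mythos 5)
Your proposal is correct and follows essentially the same route as the paper: the first bullet is the identical algebraic manipulation, and the cone bullet is the same row-normalization cancellation of the scalar $\mathscr{D}_{\tau}^{-1/2}(i,i)$ followed by rewriting $U_{\tau}(\mathcal{I},:)=\mathscr{D}_{\tau}^{1/2}(\mathcal{I},\mathcal{I})D_{U}^{-1}(\mathcal{I},\mathcal{I})U_{*}(\mathcal{I},:)$. The only cosmetic difference is in the simplex step, where the paper exhibits the factor explicitly as $W=\Theta'V\Sigma^{-1}$ via $U=\mathscr{L}_{\tau}V\Sigma^{-1}$ rather than deducing an invertible $W$ from the column-space identity; both arguments are valid and interchangeable.
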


\begin{figure}
\centering
\vspace{-0.25cm}
\subfigcapskip=-20pt
\subfigure[$U_{\tau}$]{\includegraphics[width=0.35\textwidth]{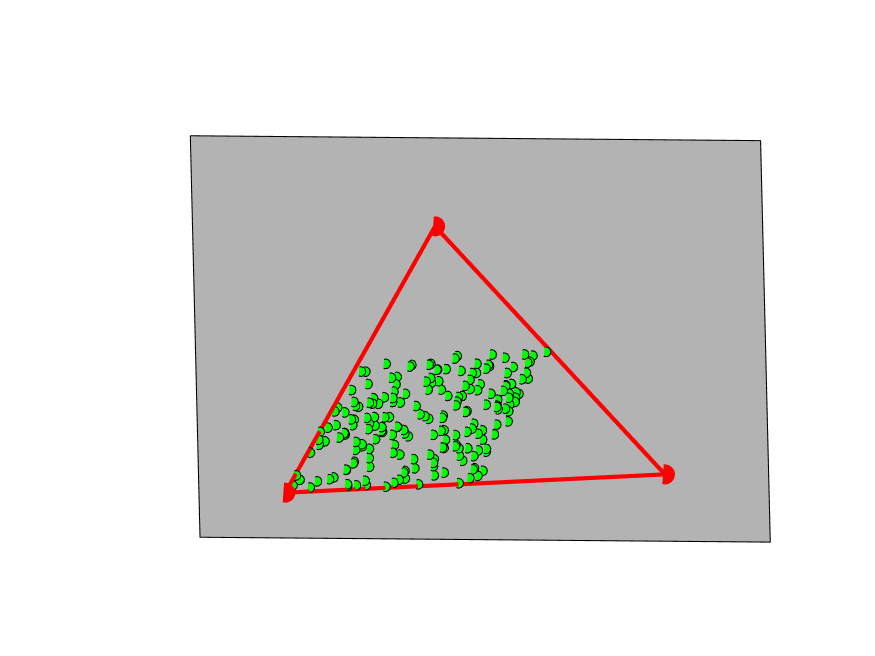}}
\subfigcapskip=-20pt
\subfigure[$U_{*}$]{\includegraphics[width=0.35\textwidth]{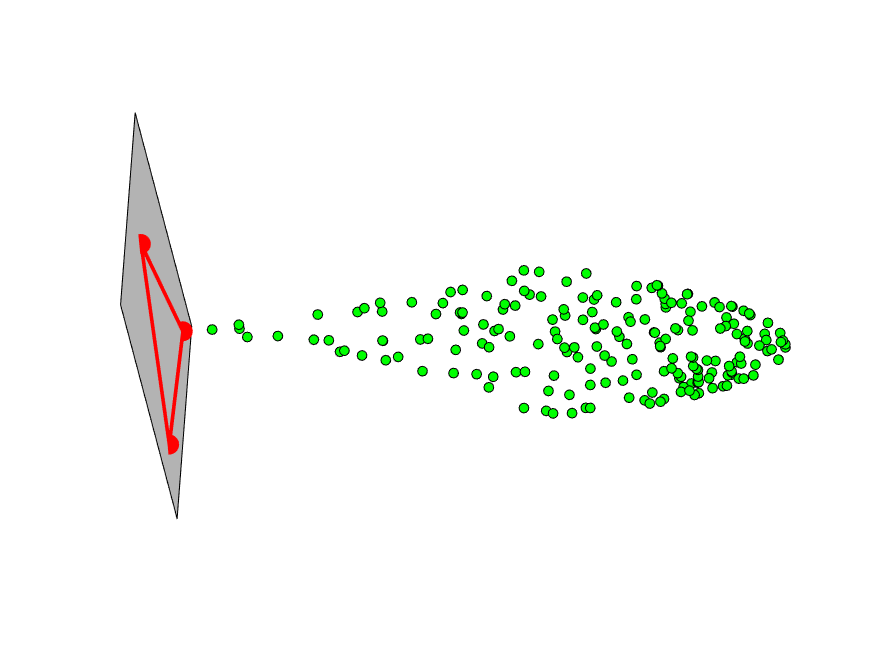}}
\vspace{-0.35cm}
\caption{Panel (a): illustration of the simplex structure of $U_{\tau}$ with $K=3$, where dots denote rows of $U_{\tau}$. Panel (b): illustration of the cone structure of $U_{*}$ with $K=3$, where dots denote rows of $U_{*}$. For both panels, red dots denote pure rows and green dots represent mixed rows. For both panels, the gray plane denotes the hyperplane formed by pure rows. For the simplex structure in panel (a), all mixed rows of $U_{\tau}$ (i.e., green dots) lie in the triangle formed by the three rows of $U_{\tau}(\mathcal{I},:)$. For the cone structure in panel (b), all mixed rows of $U_{*}$ (i.e., green dots) locate at one side of the hyperplane formed by the three rows of $U_{*}(\mathcal{I},:)$. For both panels, the settings of $\Pi$ and $\Theta$ are the same as that of Experiment 1 in Section \ref{SecSim}. Points in this figure have been projected and rotated from $\mathbb{R}^{3}$ to $\mathbb{R}^{2}$ for visualization.}
	\label{PlotIdealSC}
\vspace{-0.35cm}	
\end{figure}

Call the $i$-th row of $U_{\tau}$ (and $U_{*}$) a pure row if subject $i$ is pure and a mixed row otherwise. For the simplex structure of $U_{\tau}$, a pure row is one of the $K$ vertices of $U_{\tau}(\mathcal{I},:)$ and a mixed row lies inside of the simplex formed by $U_{\tau}(\mathcal{I},:)$'s $K$ rows. For the cone structure of $U_{*}$, all mixed rows locate at one side of the hyperplane formed by the $K$ rows of $U_{*}(\mathcal{I},:)$. Fig.~\ref{PlotIdealSC} gives illustrations of the simplex structure and the cone structure when $K=3$.

With the aid of the Ideal Simplex and Ideal Cone structures as outlined in Lemma \ref{SVDPopulationLtau}, we can devise two spectral methods. Subsequently, we delve into the development of our proposed methods, preceded by concise explanations of these two structures.
\subsection{The GoM-SRSC algorithm}
The form $U_{\tau}=\Pi U_{\tau}(\mathcal{I},:)$ is known as ideal simplex which comes from the fact that $U_{\tau}(i,:)$ is a convex linear combination of $U_{\tau}(\mathcal{I},:)$'s $K$ rows through $U_{\tau}=\Pi U_{\tau}(\mathcal{I},:)$ for $i\in[N]$. In detail, denote the simplex made by $U_{\tau}(\mathcal{I},:)$'s $K$ rows as $\mathcal{S}^{ideal}(s_{1},s_{2},\ldots,s_{K})$ with $s_{k}$ being $U_{\tau}(\mathcal{I},:)$'s $k$-th row for $k\in[K]$. We have $U_{\tau}(i,:)=\sum_{k=1}^{K}\Pi(i,k)s_{k}$, which suggests that $U_{\tau}(i,:)$ is one of the $K$ vertices of the simplex $\mathcal{S}^{ideal}(s_{1},s_{2},\ldots,s_{K})$ if subject $i$ is pure while $U_{\tau}(i,:)$ falls inside of the simplex otherwise because $\Pi\in[0,1]^{N\times K}, \|\Pi(i,:)\|_{1}=1$, and $\Pi$ satisfies Condition (C1). The ideal simplex structure is also found in the SVD of $\mathscr{R}$ under GoM in \citep{chen2024spectral} and the problem of estimating mixed memberships for social networks in \citep{mao2021estimating,qing2024bipartite}.

Since $U_{\tau}(\mathcal{I},:)$ is nonsingular, we have $\Pi=U_{\tau}U^{-1}_{\tau}(\mathcal{I},:)$, which suggests that as long as we know the index set $\mathcal{I}$, we can recover $\Pi$ using $U_{\tau}U^{-1}_{\tau}(\mathcal{I},:)$. As suggested by \citep{mao2018overlapping}, running the successive projection (SP) algorithm \citep{gillis2015semidefinite} to $U_{\tau}$'s rows can exactly return the index set $\mathcal{I}$, where SP's detail can be found in Algorithm 1 of \citep{gillis2015semidefinite}. Set $Z=U_{\tau}U^{-1}_{\tau}(\mathcal{I},:)\equiv \Pi$, we have $\Pi(i,:)=\frac{Z(i,:)}{\|Z(i,:)\|_{1}}$ for $i\in[N]$. After recovering $\Pi$, $\Theta$ can be recovered by the 1st statement of Lemma \ref{SVDPopulationLtau}. The above analysis suggests the following algorithm which we call Ideal GoM-SRSC, where SRSC is short for simplex regularized spectral clustering. Input: $\mathscr{R}, K$, and $\tau\geq0$. Output: $\Pi$ and $\Theta$.
\begin{itemize}
  \item Compute $\mathscr{L}_{\tau}$ using Equation (\ref{populationL}).
  \item Get $U\Sigma V'$, the compact SVD of $\mathscr{L}_{\tau}$. Set $U_{\tau}=\mathscr{D}^{1/2}_{\tau}U$.
  \item Apply the SP algorithm to $U_{\tau}$'s rows with $K$ vertices to obtain $U_{\tau}(\mathcal{I},:)$.
  \item Set $Z=U_{\tau}U^{-1}_{\tau}(\mathcal{I},:)$.
  \item Set $\Pi(i,:)=\frac{Z(i,:)}{\|Z(i,:)\|_{1}}$ for $i\in[N]$.
  \item Set $\Theta=\mathscr{R}'\Pi(\Pi'\Pi)^{-1}$.
\end{itemize}

In practice, we only observe the response matrix $R$ and we aim at providing good estimations of $\Pi$ and $\Theta$ from $R$. To extend the ideal algorithm to the real case, first, we introduce the regularized Laplacian matrix as below:
\begin{align}\label{realLtau}
L_{\tau}=D^{-1/2}_{\tau}R,
\end{align}
where $D_{\tau}=D+\tau I_{N\times N}$, and the $(i,i)$-th diagonal entry of the $N\times N$ diagonal matrix $D$ is $D(i,i)=\sum_{j=1}^{J}R(i,j)$ for $i\in[N]$. The regularized Laplacian matrix $L_{\tau}$ has been used in \citep{qing2023latent} to estimate a latent class model. By comparing Equations (\ref{populationL}) and (\ref{realLtau}), we observe that $\mathscr{L}_{\tau}$ has a similar form as $L_{\tau}$ and it is $L_{\tau}$'s population version. Recall that $\mathscr{L}_{\tau}$ is a rank-$K$ matrix, it is expected that the top-$K$ SVD of $L_{\tau}$  satisfactorily approximates $\mathscr{L}_{\tau}$. Therefore, we let $\hat{L}_{\tau}=\hat{U}\hat{\Sigma}\hat{V}'$ be the top-$K$ SVD of $L_{\tau}$ such that $\hat{\Sigma}=\mathrm{diag}(\sigma_{1}(L_{\tau}),\sigma_{2}(L_{\tau}),\ldots,\sigma_{K}(L_{\tau}))$, $\hat{U}=[\hat{\eta}_{1},\hat{\eta}_{2}, \ldots,\hat{\eta}_{K}], \hat{V}=[\hat{\xi}_{1}, \hat{\xi}_{2}, \ldots,\hat{\xi}_{K}]$, $\hat{U}$ and $\hat{V}$ satisfy $\hat{U}'\hat{U}=I_{K\times K}$ and $\hat{V}'\hat{V}=I_{K\times K}$, where $\hat{\eta}_{k}$ and $\hat{\xi}_{k}$ denote the left and right singular vector of the $k$-th largest singular value $\hat{\sigma}_{k}(L_{\tau})$, respectively. Let $\hat{U}_{\tau}=D^{1/2}_{\tau}\hat{U}$ be an approximation of $U_{\tau}$. Then applying the SP algorithm to $\hat{U}_{\tau}$'s rows with $K$ vertices gets the estimated index set $\hat{\mathcal{I}}$, which should be a good estimation of the index set $\mathcal{I}$ when the observed response matrix $R$ is generated from the GoM model with expectation $\mathscr{R}$. This is illustrated by the panel (a) of Fig.~\ref{PlotRealSC}, which demonstrates that the empirical simplex structure formed by the estimated pure nodes in $\hat{U}_{\tau}$ found by the SP algorithm closely resembles the Ideal Simplex in $U_{\tau}$. Let $\hat{Z}=\mathrm{max}(0,\hat{U}_{\tau}\hat{U}^{-1}_{\tau}(\hat{\mathcal{I}},:))$ be an estimation of $Z$, where we only consider the nonnegative part of $\hat{U}_{\tau}\hat{U}^{-1}_{\tau}(\hat{\mathcal{I}},:)$ since all entries of $Z$ are nonnegative. We estimate the membership score $\Pi(i,:)$ for the $i$-th subject by using $\hat{\Pi}=\frac{\hat{Z}(i,:)}{\|\hat{Z}(i,:)\|_{1}}$ for $i\in[N]$. Finally, we estimate the item parameter matrix $\Theta$ by using $\hat{\Theta}$ calculated as $\hat{\Theta}=\mathrm{min}(M,\mathrm{max}(0,R'\hat{\Pi}(\hat{\Pi}'\hat{\Pi})^{-1}))$, where we truncate all entries of $R'\hat{\Pi}(\hat{\Pi}'\hat{\Pi})^{-1}$ to be in the range $[0,M]$ since $\Theta$'s elements are in the range $[0,M]$ theoretically. We summarize the above analysis into the following algorithm which extends the Ideal GoM-SRSC algorithm naturally to the real case with known the observed response matrix $R$. The flowchart of GoM-SRSC is displayed in Fig.~\ref{IllusAlgorithm1}.
\begin{figure}
\centering
\vspace{-0.25cm}
\subfigcapskip=-20pt
\subfigure[]{\includegraphics[width=0.35\textwidth]{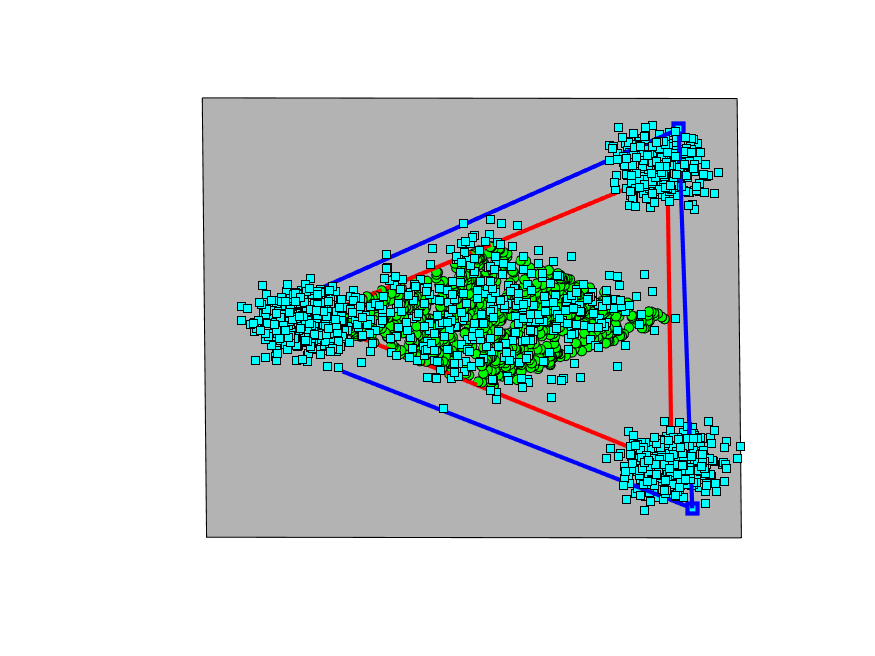}}
\subfigcapskip=-20pt
\subfigure[]{\includegraphics[width=0.35\textwidth]{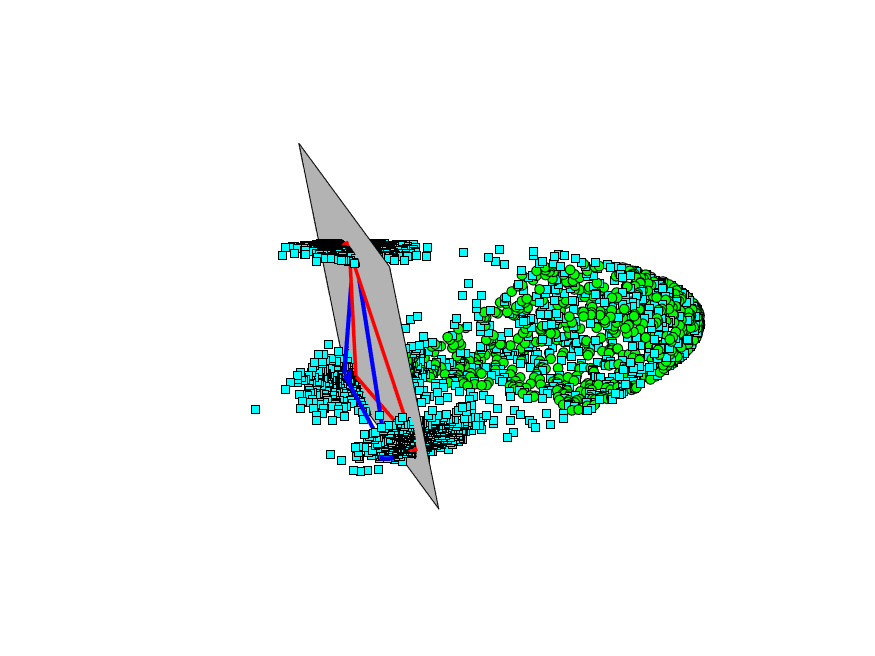}}
\vspace{-0.35cm}
\caption{Panel (a): row vectors of $U_{\tau}$ and $\hat{U}_{\tau}$ projected from $\mathbb{R}^{3}$ to $\mathbb{R}^{2}$. Panel (b): row vectors of $U_{*}$ and $\hat{U}_{*}$. For both panels, red dots (vertexes of the two red triangles) represent pure nodes, green dots represent mixed nodes, and the two gray planes denote the hyperplanes formed by pure rows. For panel (a), cyan squares represent rows of $\hat{U}_{\tau}$ and blue squares (vertexes of the blue triangle in panel (a)) represent the estimated pure nodes in $\hat{U}_{\tau}$ found by the SP algorithm. For panel (b), cyan squares represent rows of $\hat{U}_{*}$ and blue squares (vertexes of the blue triangle in panel (b)) represent the estimated pure nodes in $\hat{U}_{*}$ found by the SVM-cone algorithm. For both panels, the settings of $\Pi$ and $\Theta$ are the same as those of Experiment 3 in Section \ref{SecSim} when $N=3200, J=800, K=3,$ and $\rho=1$.}
\label{PlotRealSC}
\vspace{-0.35cm}	
\end{figure}

\begin{algorithm}
\caption{\textbf{GoM-SRSC}}
\label{alg:SRSC}
\begin{algorithmic}[1]
\Require The observed response matrix $R\in\{0,1,2,\ldots, M\}^{N\times J}$, the number of latent classes $K$, and the regularization parameter $\tau$  (a good choice for $\tau$ is $M\mathrm{max}(N,J)$.).
\Ensure The estimated membership matrix $\hat{\Pi}$ and the estimated item parameter matrix $\hat{\Theta}$.
\State Get $L_{\tau}$ by Equation (\ref{realLtau}).
\State Get $\hat{U}\hat{\Sigma}\hat{V}'$, the top-$K$ singular value decomposition of $L_{\tau}$. Set $\hat{U}_{\tau}=D^{1/2}_{\tau}\hat{U}$.
\State Apply the SP algorithm on $\hat{U}_{\tau}$'s rows with $K$ vertices to obtain $\hat{U}_{\tau}(\hat{\mathcal{I}},:)$, where $\hat{\mathcal{I}}$ represents the estimated index set returned by SP.
\State Set $\hat{Z}=\mathrm{max}(0,\hat{U}_{\tau}\hat{U}^{-1}_{\tau}(\hat{\mathcal{I}},:))$.
\State Set $\hat{\Pi}(i,:)=\frac{\hat{Z}(i,:)}{\|\hat{Z}(i,:)\|_{1}}$ for $i\in[N]$.
\State Set $\hat{\Theta}=\mathrm{min}(M,\mathrm{max}(0,R'\hat{\Pi}(\hat{\Pi}'\hat{\Pi})^{-1}))$
\end{algorithmic}
\end{algorithm}

The computational cost of our GoM-SRSC algorithm mainly comes from the top $K$ SVD of $L_{\tau}$, the SP algorithm, and the last step in estimating $\Theta$. The complexities of these three main steps are $O(\mathrm{max}(N^{2},J^{2})K), O(NK^{2})$, and $O(NJK)$, respectively. Because $K\ll \mathrm{min}(N,J)$ in this paper, as a result, our GoM-SRSC algorithm has a total complexity $O(\mathrm{max}(N^{2},J^{2})K)$.
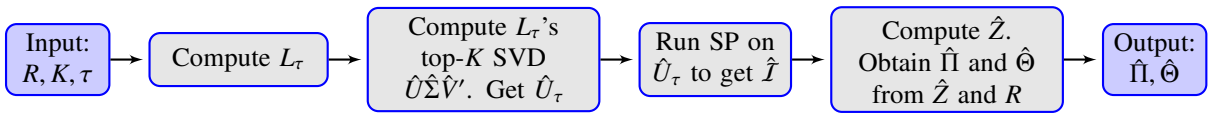
\begin{figure}[H]
\centering
\begin{tikzpicture}
[auto,
decision/.style={diamond, draw=blue, thick, fill=blue!20,
    text width=4.5em,align=flush center,
    inner sep=1pt},
block/.style ={rectangle, draw=blue, thick, fill=gray!20,
    text width=8em,align=center, rounded corners,
    minimum height=2em},
block1/.style ={diamond, draw=blue, thick, fill=orange!20,
    text width=1.4em,align=center, rounded corners,
    minimum height=2em},
block2/.style ={rectangle, draw=blue, thick, fill=blue!20,
    text width=3.2em,align=center, rounded corners,
    minimum height=2em},
block33/.style ={rectangle, draw=blue, thick, fill=gray!20,
    text width=6em,align=center, rounded corners,
    minimum height=2em},
block3/.style ={rectangle, draw=blue, thick, fill=gray!20,
    text width=8em,align=center, rounded corners,
    minimum height=2em},
block4/.style ={rectangle, draw=blue, thick, fill=gray!20,
    text width=5em,align=center, rounded corners,
    minimum height=2em},
line/.style ={draw, thick, -latex',shorten >=2pt},
cloud/.style ={draw=red, thick, ellipse,fill=red!20,
    minimum height=2em}]
\matrix [column sep=5mm,row sep=7mm]
{
&\node [block2] (input) {Input: $R, K, \tau$};
&\node [block33] (pca) {Compute $L_{\tau}$};
&\node [block3] (pca1) {Compute $L_{\tau}$'s top-$K$ SVD $\hat{U}\hat{\Sigma}\hat{V}'$. Get $\hat{U}_{\tau}$};
&\node [block4] (vh) {Run SP on $\hat{U}_{\tau}$ to get  $\hat{\mathcal{I}}$};
&\node [block] (mr) {Compute $\hat{Z}$. Obtain $\hat{\Pi}$ and $\hat{\Theta}$ from $\hat{Z}$ and $R$};
&\node [block2] (output) {Output: $\hat{\Pi}, \hat{\Theta}$};\\
};
\begin{scope}[every path/.style=line]
\path (input) -- (pca);
\path (pca) -- (pca1);
\path (pca1) -- (vh);
\path (vh) -- (mr);
\path (mr) -- (output);
\end{scope}
\end{tikzpicture}
\caption{Flowchart of Algorithm \ref{alg:SRSC}.}\label{IllusAlgorithm1}
\end{figure}
\subsection{The GoM-CRSC algorithm}
$U_{*}=YU_{*}(\mathcal{I},:)$ forms a cone structure introduced in Problem 1 of \citep{mao2018overlapping} since the $l_{2}$ norm of  $U_{*}(i,:)$ is 1 for $i\in[N]$. This form is different from the ideal simplex structure because the $N\times K$ matrix $Y$ has a different property as the membership matrix $\Pi$. Unlike the ideal simplex structure, applying the successive projection algorithm to all rows of $U_{*}$ cannot recover the $K\times K$ matrix $U_{*}(\mathcal{I},:)$. Luckily, as analyzed in \citep{mao2018overlapping}, running the SVM-cone algorithm of \citep{mao2018overlapping} to all rows of $U_{*}$ can exactly recover $U_{*}(\mathcal{I},:)$ as long as $(U_{*}(\mathcal{I},:)U'_{*}(\mathcal{I},:))^{-1}\mathbf{1}>0$ holds (here, $\mathbf{1}$ is a vector with all entries being 1) which is guaranteed by Lemma \ref{ConeCon} below, where the SVM-cone algorithm is summarized in Algorithm 1 of \citep{mao2018overlapping}.
\begin{lem}\label{ConeCon}
Under $\mathrm{GoM}(\Pi,\Theta)$, $(U_{*}(\mathcal{I},:)U'_{*}(\mathcal{I},:))^{-1}\mathbf{1}>0$ holds.
\end{lem}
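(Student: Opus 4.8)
The plan is to reduce the cone-positivity condition to a manifestly nonnegative algebraic expression by combining the Ideal Simplex identity of Lemma \ref{SVDPopulationLtau} with the orthonormality $U'U=I_{K\times K}$. A direct attack on $(U_{*}(\mathcal{I},:)U'_{*}(\mathcal{I},:))^{-1}\mathbf{1}$ via sign patterns of the Gram matrix is hopeless (the inverse of a nonnegative Gram matrix can have negative entries), so the whole proof hinges on re-expressing that inverse in a form whose positivity is transparent.

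First I would record how the pure rows of $U_{*}$ relate to those of $U_{\tau}$. Since $U_{*}=D_{U}U$ and $U_{\tau}=\mathscr{D}^{1/2}_{\tau}U$ are both diagonal rescalings of the same $U$, for each $k\in[K]$ the degree factor $\sqrt{\mathscr{D}_{\tau}(p_{k},p_{k})}$ cancels and one gets $U_{*}(p_{k},:)=U_{\tau}(p_{k},:)/\|U_{\tau}(p_{k},:)\|_{2}$. Writing $\nu_{k}=\|U_{\tau}(p_{k},:)\|_{2}>0$ (strictly positive because $U_{\tau}(\mathcal{I},:)$ is nonsingular), this reads $U_{*}(\mathcal{I},:)=\mathrm{diag}(\nu_{1}^{-1},\ldots,\nu_{K}^{-1})U_{\tau}(\mathcal{I},:)$. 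Substituting and inverting, the positive diagonal factors move outside, giving $(U_{*}(\mathcal{I},:)U'_{*}(\mathcal{I},:))^{-1}\mathbf{1}=\mathrm{diag}(\nu)\,(U_{\tau}(\mathcal{I},:)U'_{\tau}(\mathcal{I},:))^{-1}\nu$, where $\nu=(\nu_{1},\ldots,\nu_{K})'$. As $\mathrm{diag}(\nu)$ is a positive diagonal matrix, it suffices to prove $(U_{\tau}(\mathcal{I},:)U'_{\tau}(\mathcal{I},:))^{-1}\nu>0$.

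The hard part, and the key step, is to identify the inverse Gram matrix of the pure rows of $U_{\tau}$. Here I would feed the Ideal Simplex relation $U=\mathscr{D}^{-1/2}_{\tau}U_{\tau}=\mathscr{D}^{-1/2}_{\tau}\Pi U_{\tau}(\mathcal{I},:)$ into $U'U=I_{K\times K}$:
\[
I_{K\times K}=U'U=U'_{\tau}(\mathcal{I},:)\big(\Pi'\mathscr{D}^{-1}_{\tau}\Pi\big)U_{\tau}(\mathcal{I},:).
\]
Because $U_{\tau}(\mathcal{I},:)$ is invertible, this rearranges to the identity $\Pi'\mathscr{D}^{-1}_{\tau}\Pi=(U_{\tau}(\mathcal{I},:)U'_{\tau}(\mathcal{I},:))^{-1}$. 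This is the crux: it rewrites the a priori sign-indefinite inverse Gram matrix as $\Pi'\mathscr{D}^{-1}_{\tau}\Pi$, which is assembled entirely from nonnegative pieces. I expect this identity to be the only genuine obstacle; everything before it is bookkeeping and everything after it is a sign check.

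Finally I would read off the positivity by tracking signs from the inside out in $(U_{\tau}(\mathcal{I},:)U'_{\tau}(\mathcal{I},:))^{-1}\nu=\Pi'\mathscr{D}^{-1}_{\tau}\Pi\nu$. We have $\nu>0$; each entry $(\Pi\nu)_{i}=\sum_{k}\Pi(i,k)\nu_{k}>0$ since $\Pi(i,:)\geq0$ sums to $1$ and the $\nu_{k}$ are positive; the diagonal $\mathscr{D}^{-1}_{\tau}$ has strictly positive entries, so $\mathscr{D}^{-1}_{\tau}\Pi\nu>0$; and lastly $(\Pi'\mathscr{D}^{-1}_{\tau}\Pi\nu)_{k}=\sum_{i}\Pi(i,k)(\mathscr{D}^{-1}_{\tau}\Pi\nu)_{i}$ is strictly positive because Condition (C1) supplies a pure subject $p_{k}$ with $\Pi(p_{k},k)=1$, making at least one summand strictly positive. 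Hence $(U_{*}(\mathcal{I},:)U'_{*}(\mathcal{I},:))^{-1}\mathbf{1}>0$, completing the proof.
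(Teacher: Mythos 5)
Your proposal is correct and follows essentially the same route as the paper: both derive the identity $(U_{\tau}(\mathcal{I},:)U'_{\tau}(\mathcal{I},:))^{-1}=\Pi'\mathscr{D}^{-1}_{\tau}\Pi$ from $U'U=I_{K\times K}$ and the Ideal Simplex relation, pull out the positive diagonal normalization factors, and read off positivity from the nonnegativity of $\Pi$ and $\mathscr{D}^{-1}_{\tau}$. Your sign check at the end is in fact slightly more careful than the paper's, since you explicitly invoke Condition (C1) to upgrade nonnegativity to strict positivity.
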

Therefore, we can run the SVM-cone algorithm to $U_{*}$'s rows to recover $U_{*}(\mathcal{I},:)$. Then we can recover $Y$ by setting $Y=U_{*}U^{-1}_{*}(\mathcal{I},:)$. Set $Z_{*}=UU^{-1}_{*}(\mathcal{I},:)D_{U}(\mathcal{I},\mathcal{I})\mathscr{D}^{-1/2}_{\tau}(\mathcal{I},\mathcal{I})$. The following lemma guarantees that we can recover $\Pi$ from $Z_{*}$.
\begin{lem}\label{ZstarPi}
Under $\mathrm{GoM}(\Pi,\Theta)$, we have $\Pi(i,:)=\frac{Z_{*}(i,:)}{\|Z_{*}(i,:)\|_{1}}$ for $i\in[N]$.
\end{lem}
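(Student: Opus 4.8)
The plan is to show that $Z_{*}$ is, row by row, simply a positive scalar rescaling of $\Pi$, after which the claimed $l_{1}$-normalization is immediate. The cleanest route is to reduce everything to the Ideal Simplex identity $U_{\tau}=\Pi U_{\tau}(\mathcal{I},:)$ from Lemma \ref{SVDPopulationLtau}, rather than to the Ideal Cone, because it lets the various diagonal matrices collapse telescopically. First I would record three elementary consequences of the definitions $U_{\tau}=\mathscr{D}^{1/2}_{\tau}U$ and $U_{*}=D_{U}U$: restricting to the rows in $\mathcal{I}$ gives $U_{\tau}(\mathcal{I},:)=\mathscr{D}^{1/2}_{\tau}(\mathcal{I},\mathcal{I})U(\mathcal{I},:)$ and $U_{*}(\mathcal{I},:)=D_{U}(\mathcal{I},\mathcal{I})U(\mathcal{I},:)$, so that $U(\mathcal{I},:)=D^{-1}_{U}(\mathcal{I},\mathcal{I})U_{*}(\mathcal{I},:)$; all three hold because $\mathscr{D}^{1/2}_{\tau}$ and $D_{U}$ are diagonal. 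Here I use that $U_{*}(\mathcal{I},:)$ is invertible: this follows from the nonsingularity of $U_{\tau}(\mathcal{I},:)$ (already noted in the GoM-SRSC subsection, itself a consequence of Conditions (C1)--(C2)) together with the invertibility of the positive diagonal matrices $\mathscr{D}^{1/2}_{\tau}(\mathcal{I},\mathcal{I})$ and $D_{U}(\mathcal{I},\mathcal{I})$.

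Next I would express $U$ in terms of $\Pi$ and $U_{*}(\mathcal{I},:)$. From $U_{\tau}=\Pi U_{\tau}(\mathcal{I},:)$ and $U_{\tau}=\mathscr{D}^{1/2}_{\tau}U$ we get $U=\mathscr{D}^{-1/2}_{\tau}\Pi U_{\tau}(\mathcal{I},:)$, and substituting the relation $U_{\tau}(\mathcal{I},:)=\mathscr{D}^{1/2}_{\tau}(\mathcal{I},\mathcal{I})D^{-1}_{U}(\mathcal{I},\mathcal{I})U_{*}(\mathcal{I},:)$ obtained above yields
\begin{align*}
U=\mathscr{D}^{-1/2}_{\tau}\,\Pi\,\mathscr{D}^{1/2}_{\tau}(\mathcal{I},\mathcal{I})\,D^{-1}_{U}(\mathcal{I},\mathcal{I})\,U_{*}(\mathcal{I},:).
\end{align*}
Plugging this into the definition $Z_{*}=UU^{-1}_{*}(\mathcal{I},:)D_{U}(\mathcal{I},\mathcal{I})\mathscr{D}^{-1/2}_{\tau}(\mathcal{I},\mathcal{I})$, the factor $U_{*}(\mathcal{I},:)U^{-1}_{*}(\mathcal{I},:)=I_{K\times K}$ cancels, and then the adjacent pairs $D^{-1}_{U}(\mathcal{I},\mathcal{I})D_{U}(\mathcal{I},\mathcal{I})=I_{K\times K}$ and $\mathscr{D}^{1/2}_{\tau}(\mathcal{I},\mathcal{I})\mathscr{D}^{-1/2}_{\tau}(\mathcal{I},\mathcal{I})=I_{K\times K}$ cancel as well, leaving the compact identity $Z_{*}=\mathscr{D}^{-1/2}_{\tau}\Pi$.

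Finally I would read off the conclusion row-wise. Since $\mathscr{D}^{-1/2}_{\tau}$ is diagonal with strictly positive entries, $Z_{*}(i,:)=\mathscr{D}^{-1/2}_{\tau}(i,i)\,\Pi(i,:)$, i.e. $Z_{*}(i,:)$ is a positive multiple of $\Pi(i,:)$. Using $\Pi(i,:)\ge 0$ and $\sum_{k\in[K]}\Pi(i,k)=1$ (the membership constraints), we get $\|Z_{*}(i,:)\|_{1}=\mathscr{D}^{-1/2}_{\tau}(i,i)\|\Pi(i,:)\|_{1}=\mathscr{D}^{-1/2}_{\tau}(i,i)$, and therefore $Z_{*}(i,:)/\|Z_{*}(i,:)\|_{1}=\Pi(i,:)$, which is exactly the claim. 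I expect no genuine obstacle here: the argument is pure linear-algebraic bookkeeping, and the only point requiring a word of care is the invertibility of $U_{*}(\mathcal{I},:)$ (handled in the first paragraph) and keeping the diagonal submatrix cancellations in the correct left-to-right order.
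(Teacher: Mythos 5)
Your proof is correct, but it takes a slightly different route from the paper's. The paper works from the Ideal Cone representation in Lemma \ref{SVDPopulationLtau}: starting from $Y=D_{o}\Pi\mathscr{D}^{1/2}_{\tau}(\mathcal{I},\mathcal{I})D^{-1}_{U}(\mathcal{I},\mathcal{I})$ and $U_{*}=YU_{*}(\mathcal{I},:)$, it solves for $D_{o}\Pi$ and concludes $Z_{*}=D^{-1}_{U}D_{o}\Pi$, a positive diagonal rescaling of $\Pi$, whence the $l_{1}$-normalization recovers $\Pi$. You instead bypass $Y$ and $D_{o}$ entirely, substituting the Ideal Simplex form $U=\mathscr{D}^{-1/2}_{\tau}\Pi U_{\tau}(\mathcal{I},:)$ into the definition of $Z_{*}$ and cancelling the diagonal submatrices to obtain the explicit identity $Z_{*}=\mathscr{D}^{-1/2}_{\tau}\Pi$. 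The two results agree, since $D^{-1}_{U}(i,i)D_{o}(i,i)=\|U(i,:)\|_{F}/\|\Pi(i,:)U_{\tau}(\mathcal{I},:)\|_{F}=\|U(i,:)\|_{F}/\|U_{\tau}(i,:)\|_{F}=\mathscr{D}^{-1/2}_{\tau}(i,i)$, but your version identifies the scaling matrix in closed form rather than leaving it as $D^{-1}_{U}D_{o}$, and it makes the dependence on the regularizer transparent; the paper's version has the advantage of exercising the cone structure that the SVM-cone algorithm actually relies on. Your attention to the invertibility of $U_{*}(\mathcal{I},:)$ (via the nonsingularity of $U_{\tau}(\mathcal{I},:)$ and the positive diagonal factors) is a point the paper leaves implicit, and it is handled correctly.
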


After getting $\Pi$ from $Z_{*}$ by Lemma \ref{ZstarPi}, $\Theta$ can be obtained by the 1st result in Lemma \ref{SVDPopulationLtau}. Then we get the Ideal GoM-CRSC algorithm below, where CRSC means cone regularized spectral clustering. Input: $\mathscr{R}, K$, and $\tau\geq0$. Output: $\Pi$ and $\Theta$.
\begin{itemize}
  \item Compute $\mathscr{L}_{\tau}$ using Equation (\ref{populationL}).
  \item Get $U\Sigma V'$, the top-$K$ SVD of $\mathscr{L}_{\tau}$. Obtain $U_{*}$ from $U$.
  \item Run the SVM-cone algorithm with inputs $U_{*}$ and $K$ to obtain $U_{*}(\mathcal{I},:)$.
  \item Set $Z_{*}=UU^{-1}_{*}(\mathcal{I},:)D_{U}(\mathcal{I},\mathcal{I})\mathscr{D}^{-1/2}_{\tau}(\mathcal{I},\mathcal{I})$.
  \item Set $\Pi(i,:)=\frac{Z_{*}(i,:)}{\|Z_{*}(i,:)\|_{1}}$ for $i\in[N]$.
  \item Set $\Theta=\mathscr{R}'\Pi(\Pi'\Pi)^{-1}$.
\end{itemize}

Now, we consider the real case with known $R$ instead of its expectation $\mathscr{R}$. Let $\hat{U}_{*}=D_{\hat{U}}\hat{U}$, where $D_{\hat{U}}$ is an $N\times N$ diagonal matrix with $(i,i)$-th diagonal entry being $\frac{1}{\|\hat{U}(i,:)\|_{F}}$ for $i\in[N]$. Panel (b) of Fig.~\ref{PlotRealSC} illustrates that the empirical cone structure formed by the estimated pure nodes in $\hat{U}_{*}$ found by the SVM-cone algorithm closely resembles the Ideal Cone in $U_{*}$. The following algorithm extends the Ideal GoM-CRSC algorithm to the real case naturally. The flowchart of GoM-CRSC is displayed in Fig.~\ref{IllusAlgorithm2}. The computational cost of SVM-cone is $O(KN^{2})$ \citep{chang2011libsvm,mao2018overlapping}. Combing the complexity analysis of Algorithm \ref{alg:SRSC}, GoM-CRSC's complexity is $O(\mathrm{max}(N^{2},J^{2})K)$.
\begin{algorithm}
\caption{\textbf{GoM-CRSC}}
\label{alg:CRSC}
\begin{algorithmic}[1]
\Require $R, K$, and $\tau$  (a good default choice for $\tau$ is $M\mathrm{max}(N,J)$.).
\Ensure $\hat{\Pi}_{*}$ and $\hat{\Theta}_{*}$.
\State Get $L_{\tau}$ by Equation (\ref{realLtau}).
\State Get $\hat{U}\hat{\Sigma}\hat{V}'$, the top-$K$ SVD of $L_{\tau}$. Set $\hat{U}_{*}=D_{\hat{U}}\hat{U}$.
\State Run the SVM-cone algorithm with inputs $\hat{U}_{*}$ and $K$ to obtain $\hat{U}_{*}(\hat{\mathcal{I}}_{*},:)$, where $\hat{\mathcal{I}}_{*}$ is the estimated index set returned by SVM-cone.
\State Set $\hat{Z}_{*}=\mathrm{max}(0,\hat{U}\hat{U}^{-1}_{*}(\hat{\mathcal{I}}_{*},:)D_{\hat{U}}(\hat{\mathcal{I}}_{*},\hat{\mathcal{I}}_{*})D^{-1/2}_{\tau}(\hat{\mathcal{I}}_{*},\hat{\mathcal{I}}_{*}))$.
\State Set $\hat{\Pi}_{*}(i,:)=\frac{\hat{Z}_{*}(i,:)}{\|\hat{Z}_{*}(i,:)\|_{1}}$ for $i\in[N]$.
\State Set $\hat{\Theta}_{*}=\mathrm{min}(M,\mathrm{max}(0,R'\hat{\Pi}_{*}(\hat{\Pi}'_{*}\hat{\Pi}_{*})^{-1}))$
\end{algorithmic}
\end{algorithm}
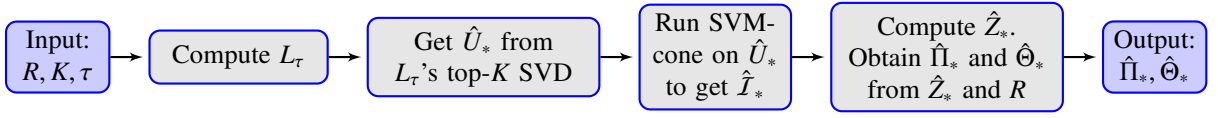
\begin{figure}[H]
\centering
\begin{tikzpicture}
[auto,
decision/.style={diamond, draw=blue, thick, fill=blue!20,
    text width=4.5em,align=flush center,
    inner sep=1pt},
block/.style ={rectangle, draw=blue, thick, fill=gray!20,
    text width=8em,align=center, rounded corners,
    minimum height=2em},
block1/.style ={diamond, draw=blue, thick, fill=orange!20,
    text width=1.4em,align=center, rounded corners,
    minimum height=2em},
block2/.style ={rectangle, draw=blue, thick, fill=blue!20,
    text width=3.2em,align=center, rounded corners,
    minimum height=2em},
block33/.style ={rectangle, draw=blue, thick, fill=gray!20,
    text width=6em,align=center, rounded corners,
    minimum height=2em},
block3/.style ={rectangle, draw=blue, thick, fill=gray!20,
    text width=8em,align=center, rounded corners,
    minimum height=2em},
block4/.style ={rectangle, draw=blue, thick, fill=gray!20,
    text width=5em,align=center, rounded corners,
    minimum height=2em},
line/.style ={draw, thick, -latex',shorten >=2pt},
cloud/.style ={draw=red, thick, ellipse,fill=red!20,
    minimum height=2em}]
\matrix [column sep=5mm,row sep=7mm]
{
&\node [block2] (input) {Input: $R, K, \tau$};
&\node [block33] (pca) {Compute $L_{\tau}$};
&\node [block3] (pca1) {Get $\hat{U}_{*}$ from $L_{\tau}$'s top-$K$ SVD};
&\node [block4] (vh) {Run SVM-cone on $\hat{U}_{*}$ to get  $\hat{\mathcal{I}}_{*}$};
&\node [block] (mr) {Compute $\hat{Z}_{*}$. Obtain $\hat{\Pi}_{*}$ and $\hat{\Theta}_{*}$ from $\hat{Z}_{*}$ and $R$};
&\node [block2] (output) {Output: $\hat{\Pi}_{*}, \hat{\Theta}_{*}$};\\
};
\begin{scope}[every path/.style=line]
\path (input) -- (pca);
\path (pca) -- (pca1);
\path (pca1) -- (vh);
\path (vh) -- (mr);
\path (mr) -- (output);
\end{scope}
\end{tikzpicture}
\caption{Flowchart of Algorithm \ref{alg:CRSC}.}\label{IllusAlgorithm2}
\end{figure}
\section{Asymptotic consistency}\label{sec4}
Before providing the theoretical error rates of our methods, we introduce the sparsity parameter of an observed response matrix $R$. Recall that when $R\in\{0,1,2,\ldots,M\}^{N\times J}$, GoM requires $\Theta\in[0,M]^{J\times K}$, which implies that the maximum entry of $\Theta$ should be no larger than $M$. Let $\rho=\mathrm{max}_{j\in[J], k\in[K]}\Theta(j,k)$ and $B=\frac{\Theta}{\rho}$, we see that $\rho$ ranges in $[0,M]$ and $\mathrm{max}_{j\in[J], k\in[K]}B(j,k)=1$. By Equation (\ref{GoMBinomial}) and $\mathscr{R}(i,j)=\Pi(i,:)\Theta'(j,:)=\rho \Pi(i,:)B'(j,:)$, we see that the probability of no-response (i.e., $R(i,j)=0$) equals $(1-\frac{\rho\Pi(i,:)B'(j,:)}{M})^{M}$, a value decreases as $\rho$ increases, i.e., $\rho$ controls the sparsity (i.e., number of zeros) of an observed response matrix $R$. For this reason, we call $\rho$ the sparsity parameter. In real-world categorical data such as psychological test datasets and educational assessment datasets, there usually exist several subjects that do not respond to all items, i.e., there exist some no-responses for real-world categorical data. Generally speaking, the task of estimating $\Pi$ and $\Theta$ is hard when categorical data is sparse. Therefore, it is critical to investigate the performances of different methods when categorical data has different levels of sparsity. We will study $\rho$'s influence on the performance of our approaches both theoretically and numerically. Our results rely on the following sparsity assumption, which means a lower bound requirement on the sparsity parameter $\rho$ for our theoretical analysis.
\begin{assum}\label{Assum1}
 $\rho\mathrm{max}(N,J)\geq M^{2}\mathrm{log}(N+J)$.
\end{assum}
Assumption \ref{Assum1} is mild since it requires $\rho\geq\frac{M^{2}\mathrm{log}(N+J)}{\mathrm{max}(N,J)}$, where the lower bound is a small value when $N$ and $J$ are large. Meanwhile, it is reasonable to consider a lower bound requirement on $\rho$ since if $\rho$ is too small (i.e., there are too many zeros in $R$), it is impossible for any algorithm to have a satisfactory performance in estimating $\Pi$ and $\Theta$. We also need the following conditions to simplify our analysis.
\begin{con}\label{Con1}
$K=O(1), J=O(N), \sigma_{K}(\Pi)=O(\sqrt{\frac{N}{K}}), \sigma_{K}(B)=O(\sqrt{\frac{J}{K}})$, and $\pi_{\mathrm{min}}=O(\frac{N}{K})$, where $\pi_{\mathrm{min}}=\mathrm{min}_{k\in[K]}\sum_{i=1}^{N}\Pi(i,k)$.
\end{con}
Condition \ref{Con1} is also mild and this can be understood as below: $K=O(1)$ implies that $K$ is a constant number, $J=O(N)$ means that $J$ has the same order as $N$, $\sigma_{K}(\Pi)=O(\sqrt{\frac{N}{K}})$ and $\pi_{\mathrm{min}}=O(\frac{N}{K})$ means that each latent class has close size, and $\sigma_{K}(B)=O(\sqrt{\frac{J}{K}})$ means that the summation of each column of $B$ is in the same order. Let $e_{i}$ be a vector whose $i$-th element equals 1 and all the other entries are 0 in this paper. Theorem \ref{Main} below establishes the per-subject error rate in estimating $\Pi$ and the relative $l_{2}$ error in estimating $\Theta$ of our GoM-SRSC, where the error rates of our GoM-CRSC are the same as that of GoM-SRSC.
\begin{thm}\label{Main}
Under $\mathrm{GoM}(\Pi,\Theta)$, suppose Assumption \ref{Assum1} and Condition \ref{Con1} hold, with high probability, we have
\begin{align*}
\mathrm{max}_{i\in[N]}\|e'_{i}(\hat{\Pi}-\Pi\mathcal{P})\|_{1}=O(\sqrt{\frac{\mathrm{log}(N)}{\rho N}})\mathrm{~and~}\frac{\|\hat{\Theta}-\Theta\mathcal{P}\|_{F}}{\|\Theta\|_{F}}=O(\sqrt{\frac{\mathrm{log}(N)}{\rho N}}),
\end{align*}
where $\mathcal{P}$ is a $K\times K$ permutation matrix.
\end{thm}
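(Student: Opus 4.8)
The plan is to follow the standard pipeline for spectral mixed-membership estimation: first control the spectral perturbation $L_\tau-\mathscr{L}_\tau$ in operator norm; then convert this into a row-wise (two-to-infinity) bound on the estimated left singular subspace $\hat U$; next track this error through the vertex-hunting step and the linear recovery of $\Pi$; and finally propagate to $\hat\Theta$.

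First I would establish a concentration bound on $\|L_\tau-\mathscr{L}_\tau\|$. Writing $R-\mathscr{R}$ as a sum of independent, mean-zero, entrywise-bounded random matrices (each $R(i,j)$ is $\mathrm{Binomial}(M,\mathscr{R}(i,j)/M)$ with $|R(i,j)-\mathscr{R}(i,j)|\le M$ and variance $\le\mathscr{R}(i,j)\le\rho$), a matrix Bernstein inequality yields $\|R-\mathscr{R}\|=O(\sqrt{\rho\max(N,J)}+M\log(N+J))$ with high probability; Assumption \ref{Assum1} forces the variance term to dominate, so $\|R-\mathscr{R}\|=O(\sqrt{\rho\max(N,J)})$. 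The regularizer $\tau=M\max(N,J)$ makes $D_\tau$ and $\mathscr{D}_\tau$ both comparable to $\tau$, so $\|D_\tau^{-1/2}-\mathscr{D}_\tau^{-1/2}\|$ is controlled by $\|D-\mathscr{D}\|$, itself a sum of independent bounded terms. Combining the two pieces through the decomposition $L_\tau-\mathscr{L}_\tau=D_\tau^{-1/2}(R-\mathscr{R})+(D_\tau^{-1/2}-\mathscr{D}_\tau^{-1/2})\mathscr{R}$ gives the operator-norm bound on $\|L_\tau-\mathscr{L}_\tau\|$.

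Next I would pass from the operator-norm bound to a per-row bound. A Wedin/Davis--Kahan argument produces an orthogonal $O$ with $\|\hat U-UO\|$ controlled by $\|L_\tau-\mathscr{L}_\tau\|/\sigma_K(\mathscr{L}_\tau)$, and under Condition \ref{Con1} one checks that $\sigma_K(\mathscr{L}_\tau)$ is bounded below in terms of $\sigma_K(\Pi)$, $\sigma_K(B)$, and $\delta_{\max}$. The delicate part is upgrading this to the two-to-infinity norm $\|\hat U-UO\|_{2\to\infty}$, which is what yields a bound uniform over subjects; here I would invoke an entrywise singular-vector perturbation result (a leave-one-out analysis in the spirit of the cited works), using the incoherence of $U$ that follows from the simplex structure $U_\tau=\Pi U_\tau(\mathcal{I},:)$ of Lemma \ref{SVDPopulationLtau}. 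Rescaling by $D_\tau^{1/2}$ then transfers the bound to $\hat U_\tau$, giving control of $\max_i\|\hat U_\tau(i,:)-(U_\tau O)(i,:)\|_2$. With this uniform row-wise bound, I would use the stability of the successive projection algorithm \citep{gillis2015semidefinite} on rows approximately lying in the ideal simplex (as deployed in \citep{mao2021estimating}): provided $U_\tau(\mathcal{I},:)$ is well-conditioned, its smallest singular value being bounded below via Lemma \ref{SVDPopulationLtau}, SP returns $\hat U_\tau(\hat{\mathcal{I}},:)$ equal to $U_\tau(\mathcal{I},:)O$ up to a permutation and an error of the same row-wise order. Since $\Pi=U_\tau U_\tau(\mathcal{I},:)^{-1}$, comparing $\hat Z=\hat U_\tau\hat U_\tau(\hat{\mathcal{I}},:)^{-1}$ with $\Pi$ and controlling the inverse through $\sigma_K(U_\tau(\mathcal{I},:))$ gives a per-row $\ell_1$ bound on $\hat\Pi-\Pi\mathcal{P}$; substituting the concentration and subspace estimates simplifies this to $O(\sqrt{\log N/(\rho N)})$. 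Finally, writing $\hat\Theta=\mathrm{min}(M,\mathrm{max}(0,R'\hat\Pi(\hat\Pi'\hat\Pi)^{-1}))$ and $\Theta=\mathscr{R}'\Pi(\Pi'\Pi)^{-1}$, I would split $\hat\Theta-\Theta\mathcal{P}$ into a term driven by $\|R-\mathscr{R}\|$ and a term driven by $\|\hat\Pi-\Pi\mathcal{P}\|$, note that the truncation only decreases the error, and divide by $\|\Theta\|_F=O(\rho\sqrt{JK})$ to obtain the stated relative bound.

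I expect the main obstacle to be the two-to-infinity control of $\hat U$, because the operator-norm subspace bound from Davis--Kahan is insufficient for a uniform per-subject guarantee; since the random normalization $D_\tau^{-1/2}$ sits inside the very matrix whose singular vectors are analyzed, the leave-one-out decoupling must handle the dependence introduced by the degree matrix, and the incoherence constants have to be tracked carefully through Condition \ref{Con1}.
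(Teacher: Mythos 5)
Your proposal follows essentially the same route as the paper: the paper also checks entrywise variance and boundedness of $L_{\tau}-\mathscr{L}_{\tau}$ together with incoherence of $U$ and $V$ (its Lemma \ref{BoundUiF}) in order to invoke an off-the-shelf two-to-infinity singular-subspace perturbation theorem (Theorem 4 of the cited Chen et al.\ reference, itself a leave-one-out result), then propagates the row-wise bound through the SP/SVM-cone step exactly as in the cited mixed-membership community-detection analysis, and bounds $\hat{\Theta}$ by the same two-term decomposition into $\|R-\mathscr{R}\|$ and $\|\hat{\Pi}-\Pi\mathcal{P}\|$ contributions. The only cosmetic difference is that the paper applies the $\ell_{2\rightarrow\infty}$ theorem directly rather than first proving an operator-norm bound and then upgrading it, but the ingredients and the resulting rates are the same.
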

We observe that the error rates decrease as the sparsity parameter $\rho$ increases according to Theorem \ref{Main}, which is in line with our intuition. We also observe that the error rates decrease to zero as the number of subjects $N$ and the number of items $J$ increase to infinity, which implies the estimation consistencies of our methods. Meanwhile, by the proof of Theorem \ref{Main}, $\tau\geq M\mathrm\mathrm{max}(N,J)$ should hold, where this requirement on $\tau$ is mainly used to simplify our theoretical bounds. \deleted{In fact, we} \added{We} find empirically in Section \ref{SecSim} that our algorithms GoM-SRSC and GoM-CRSC are insensitive to the choice of $\tau$.
\section{Quantifying the quality of latent mixed membership analysis}\label{sec5}
For real data, the true membership matrix $\Pi$ is often unknown, while an estimated membership matrix can always be obtained by applying a latent mixed membership analysis method to the observed response matrix $R$. Assuming $\hat{\Pi}_{1}$ and $\hat{\Pi}_{2}$ are the estimated membership matrices returned by two different algorithms applied to $R$ with the same number of latent classes. Since the real membership matrix $\Pi$ is unknown, it is natural to ask which algorithm returns a better partition for all subjects. This question prompts the need for a metric to assess the quality of the latent mixed membership analysis. Additionally, the number of latent classes, $K$, is usually not known for real-world categorical data, necessitating the development of a method to estimate it. In this paper, we address these two questions by providing a metric to measure the quality of the latent mixed membership analysis.

Next, we introduce our metric. Set $A=RR'$ and call $A$ adjacency matrix. We see that $A(i,\bar{i})=R(i,:)R'(\bar{i},:)=\sum_{j=1}^{J}R(i,j)R(\bar{i},j)$ for two distinct subjects $i$ and $\bar{i}$. To gain a better understanding of $A(i,\bar{i})$, we consider the binary response case. In this case, we see that $R(i,\bar{i})$ represents the common responses of $i$ and $\bar{i}$. Since subjects with similar memberships have similar response patterns, subjects with similar memberships should have more common responses than subjects with different memberships. This implies that $A$ can be viewed as the adjacency matrix of an assortative network in which nodes with similar memberships have more connections than nodes with different memberships \citep{newman2002assortative,newman2003mixing}. Our analysis suggests that applying the fuzzy modularity introduced in equation (14) of \citep{nepusz2008fuzzy} to $A$ is a good choice to measure the quality of the latent mixed membership analysis. Suppose that there is an $N\times k$ estimated membership matrix $\hat{\Pi}_{k}$ returned by applying algorithm $\mathcal{M}$ to $R$ with $k$ latent classes, where $\hat{\Pi}_{k}(i,:)\geq0$ and $\|\hat{\Pi}_{k}(i,:)\|_{1}=1$ for $i\in[N]$. The fuzzy modularity is computed in the following way:
\begin{align}\label{fuzzyModularity}
Q_{\mathcal{M}}(k)=\frac{1}{\omega}\sum_{i\in[N]}\sum_{\bar{i}\in[N]}(A(i,\bar{i})-\frac{d_{i}d_{\bar{i}}}{\omega})\hat{\Pi}_{k}(i,:)\hat{\Pi}'_{k}(\bar{i},:),
\end{align}
where $d_{i}=\sum_{j\in[N]}A(i,j)$ for $i\in[N], \omega=\sum_{i\in[N]}d_{i}$, and we denote the fuzzy modularity as $Q_{\mathcal{M}}(k)$ since it is computed using $\hat{\Pi}$, the estimated membership matrix of algorithm $\mathcal{M}$ with $k$ latent classes. When all subjects are pure, the fuzzy modularity reduces to the well-known Newman-Girvan modularity \citep{newman2004finding,newman2006modularity}. For the problem of latent mixed membership analysis for real-world categorical data, larger fuzzy modularity indicates better partition\added{,} and algorithms returning larger fuzzy modularity are preferred.

After defining the fuzzy modularity, we utilize the strategy proposed in \citep{nepusz2008fuzzy} to determine the optimal choice of $K$ for real categorical data. Specifically, we estimate $K$ by selecting the one that maximizes the fuzzy modularity. The pipeline of determining $K$ is shown in Fig.~\ref{IllusFindingK}. For convenience, when we apply the latent mixed membership analysis algorithm $\mathcal{M}$ on Equation (\ref{fuzzyModularity}) to estimate $K$, we refer to this method as K$\mathcal{M}$.
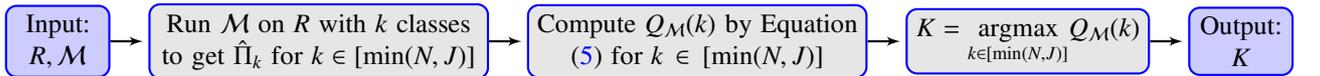
\begin{figure}[H]
\centering
\begin{tikzpicture}
[auto,
decision/.style={diamond, draw=blue, thick, fill=blue!20,
    text width=4.5em,align=flush center,
    inner sep=1pt},
block/.style ={rectangle, draw=blue, thick, fill=gray!20,
    text width=8.4em,align=center, rounded corners,
    minimum height=2em},
block1/.style ={diamond, draw=blue, thick, fill=orange!20,
    text width=1.4em,align=center, rounded corners,
    minimum height=2em},
block2/.style ={rectangle, draw=blue, thick, fill=blue!20,
    text width=3.2em,align=center, rounded corners,
    minimum height=2em},
block33/.style ={rectangle, draw=blue, thick, fill=gray!20,
    text width=6em,align=center, rounded corners,
    minimum height=2em},
block3/.style ={rectangle, draw=blue, thick, fill=gray!20,
    text width=8em,align=center, rounded corners,
    minimum height=2em},
block4/.style ={rectangle, draw=blue, thick, fill=gray!20,
    text width=12em,align=center, rounded corners,
    minimum height=2em},
line/.style ={draw, thick, -latex',shorten >=2pt},
cloud/.style ={draw=red, thick, ellipse,fill=red!20,
    minimum height=2em}]
\matrix [column sep=5mm,row sep=7mm]
{
&\node [block2] (input) {Input: $R, \mathcal{M}$};
&\node [block4] (vh) {Run $\mathcal{M}$ on $R$ with $k$ classes to get $\hat{\Pi}_{k}$ for $k\in[\mathrm{min}(N,J)]$};
&\node [block4] (mr) {Compute $Q_{\mathcal{M}}(k)$ by Equation (\ref{fuzzyModularity}) for $k\in[\mathrm{min}(N,J)]$};
&\node [block] (mr1) {$K=\underset{k\in[\mathrm{min}(N,J)]}{\mathrm{argmax}}Q_{\mathcal{M}}(k)$};
&\node [block2] (output) {Output: $K$};\\
};
\begin{scope}[every path/.style=line]
\path (input) -- (vh);
\path (vh) -- (mr);
\path (mr) -- (mr1);
\path (mr1) -- (output);
\end{scope}
\end{tikzpicture}
\caption{Pipeline of estimating $K$ for observed response matrix $R$ by combing the fuzzy modularity and algorithm $\mathcal{M}$.}\label{IllusFindingK}
\end{figure}
\section{Evaluation on synthetic categorical data}\label{sec6}
In this section, we present simulation studies. Firstly, we propose two alternative algorithms for GoM. Secondly, we introduce three evaluation metrics. Finally, we conduct thorough experimental studies.
\subsection{Two alternative algorithms for fitting GoM}
\subsubsection{GoM-SSC}
We call the first alternative method GoM-SSC, where SSC is short for simplex spectral clustering. Recall that $\mathscr{R}$'s rank is $K$ under GoM, without confusion, we let $\mathscr{R}=U\Sigma V'$ be $\mathscr{R}$'s compact SVD such that $U'U=I_{K\times K}$ and $V'V=I_{K\times K}$. Since $\mathscr{R}=\Pi\Theta'$, we have $U=\Pi\Theta'V\Sigma^{-1}=\Pi U(\mathcal{I},:)$, where $U=\Pi U(\mathcal{I},:)$ also forms a simplex structure. Then applying the SP algorithm to $U$ returns $U(\mathcal{I},:)$. Set $Z=UU^{-1}(\mathcal{I},:)$, we have $Z\equiv \Pi$ since $U=\Pi U(\mathcal{I},:)$. The above analysis suggests the following ideal algorithm named Ideal GoM-SSC. Input: $\mathscr{R}$ and $K$. Output: $\Pi$ and $\Theta$.
\begin{itemize}
  \item Let $U\Sigma V'$ be $\mathscr{R}$'s top-$K$ SVD.
  \item Apply the SP algorithm to $U$'s rows with $K$ vertices to get $U(\mathcal{I},:)$.
  \item Set $Z=UU^{-1}(\mathcal{I},:)$.
  \item Recover $\Pi$ and $\Theta$ by the last two steps of Ideal GoM-SRSC.
\end{itemize}

Algorithm \ref{alg:SSC} below extends the Ideal GoM-SSC to the real case with known $R$ naturally. Fig.~\ref{IllusAlgorithm3} displays the flowchart of GoM-SSC.
\begin{algorithm}
\caption{\textbf{GoM-SSC}}
\label{alg:SSC}
\begin{algorithmic}[1]
\Require $R$ and $K$.
\Ensure $\hat{\Pi}$ and $\hat{\Theta}$.
\State Get $\hat{U}\hat{\Sigma}\hat{V}'$, the top-$K$ SVD of $R$.
\State Apply the SP algorithm on $\hat{U}$'s rows with $K$ vertices to get $\hat{U}(\hat{\mathcal{I}},:)$.
\State $\hat{Z}_{}=\mathrm{max}(0,\hat{U}\hat{U}^{-1}(\hat{\mathcal{I}},:))$.
\State Obtain $\hat{\Pi}$ and $\hat{\Theta}$ by steps 5 and 6 of Algorithm \ref{alg:SRSC}.
\end{algorithmic}
\end{algorithm}
The computational cost of GoM-SSC is the same as our GoM-SRSC. Following a similar theoretical analysis to that of GoM-SRSC, we find that GoM-SSC's theoretical error rates are the same as that of GoM-SRSC under Assumption \ref{Assum1} and Condition \ref{Con1}. Note that GoM-SSC differs slightly from Algorithm 2 of \citep{chen2024spectral}. The slight differences are two-fold. First, similar to \citep{mao2021estimating}, Algorithm 2 of \citep{chen2024spectral} considers an extra pruning step to reduce noise. Second, $\hat{\Theta}$ in Algorithm 2 of \citep{chen2024spectral} is obtained by setting $\hat{\Theta}=\mathrm{min}(1-\epsilon, \mathrm{max}(\epsilon,\hat{V}\hat{\Sigma}\hat{U}'\hat{\Pi}(\hat{\Pi}'\hat{\Pi})^{-1}))$ with $\epsilon=0.001$ while $\hat{\Theta}$ is $\mathrm{min}(M,\mathrm{max}(0,R'\hat{\Pi}(\hat{\Pi}'\hat{\Pi})^{-1}))$ in our GoM-SSC and this difference occurs because we consider data with polytomous responses in this paper while \citep{chen2024spectral} only considers data with binary responses. As a result, GoM-SSC is a refinement of Algorithm 2 in \citep{chen2024spectral}.

\begin{figure}[H]
\centering
\begin{tikzpicture}
[auto,
decision/.style={diamond, draw=blue, thick, fill=blue!20,
    text width=4.5em,align=flush center,
    inner sep=1pt},
block/.style ={rectangle, draw=blue, thick, fill=gray!20,
    text width=8em,align=center, rounded corners,
    minimum height=2em},
block1/.style ={diamond, draw=blue, thick, fill=orange!20,
    text width=1.4em,align=center, rounded corners,
    minimum height=2em},
block2/.style ={rectangle, draw=blue, thick, fill=blue!20,
    text width=3.2em,align=center, rounded corners,
    minimum height=2em},
block33/.style ={rectangle, draw=blue, thick, fill=gray!20,
    text width=6em,align=center, rounded corners,
    minimum height=2em},
block3/.style ={rectangle, draw=blue, thick, fill=gray!20,
    text width=8em,align=center, rounded corners,
    minimum height=2em},
block4/.style ={rectangle, draw=blue, thick, fill=gray!20,
    text width=5em,align=center, rounded corners,
    minimum height=2em},
line/.style ={draw, thick, -latex',shorten >=2pt},
cloud/.style ={draw=red, thick, ellipse,fill=red!20,
    minimum height=2em}]
\matrix [column sep=5mm,row sep=7mm]
{
&\node [block2] (input) {Input: $R, K$};
&\node [block3] (pca) {Get $R$'s top-$K$ SVD $\hat{U}\hat{\Sigma}\hat{V}'$};
&\node [block4] (vh) {Run SP on $\hat{U}$ to get  $\hat{\mathcal{I}}$};
&\node [block] (mr) {Compute $\hat{Z}$. Obtain $\hat{\Pi}$ and $\hat{\Theta}$ from $\hat{Z}$ and $R$};
&\node [block2] (output) {Output: $\hat{\Pi}, \hat{\Theta}$};\\
};
\begin{scope}[every path/.style=line]
\path (input) -- (pca);
\path (pca) -- (vh);
\path (vh) -- (mr);
\path (mr) -- (output);
\end{scope}
\end{tikzpicture}
\caption{Flowchart of Algorithm \ref{alg:SSC}.}\label{IllusAlgorithm3}
\end{figure}
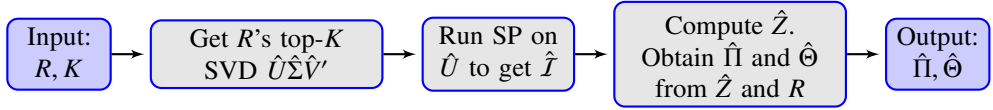
\subsubsection{GoM-SRM}
We call the second alternative method GoM-SRM, where SRM means simplex response matrix. Recall that $\mathscr{R}=\Pi\Theta'$ under GoM, we observe that $\mathscr{R}=\Pi\Theta'=\Pi\mathscr{R}(\mathcal{I},:)$ naturally forms a simplex structure. Thus applying the SP algorithm to $\mathscr{R}$'s rows with $K$ vertices gets $\mathscr{R}(\mathcal{I},:)$. Note that since $\mathscr{R}(\mathcal{I},:)$ is a $K\times J$ matrix with rank $K$ (recall that $K\ll J$ in this paper), its inverse does not exist but the inverse of $\mathscr{R}(\mathcal{I},:)\mathscr{R}(\mathcal{I},:)'$ exists. Set $Z=\mathscr{R}\mathscr{R}(\mathcal{I},:)'(\mathscr{R}(\mathcal{I},:)\mathscr{R}'(\mathcal{I},:))^{-1}$, we have $Z\equiv \Pi$ since $\mathscr{R}=\Pi\mathscr{R}(\mathcal{I},:)$. The following algorithm named Ideal GoM-SRM summarizes the above analysis. Input: $\mathscr{R}$ and $K$. Output: $\Pi$ and $\Theta$.
\begin{itemize}
  \item Apply SP to $\mathscr{R}$'s rows with $K$ vertices to get $\mathscr{R}(\mathcal{I},:)$.
  \item Set $Z=\mathscr{R}\mathscr{R}(\mathcal{I},:)'(\mathscr{R}(\mathcal{I},:)\mathscr{R}'(\mathcal{I},:))^{-1}$.
  \item Recover $\Pi$ and $\Theta$ by the last two steps of Ideal GoM-SRSC.
\end{itemize}
\begin{algorithm}
\caption{\textbf{GoM-SRM}}
\label{alg:SRM}
\begin{algorithmic}[1]
\Require $R$ and $K$.
\Ensure $\hat{\Pi}$ and $\hat{\Theta}$.
\State Apply SP algorithm to $R$'s rows with $K$ vertices to get $R(\hat{\mathcal{I}},:)$.
\State $\hat{Z}_{}=\mathrm{max}(0,RR(\hat{\mathcal{I}},:)'(R(\hat{\mathcal{I}},:)R'(\hat{\mathcal{I}},:))^{-1})$.
\State Obtain $\hat{\Pi}$ and $\hat{\Theta}$ by steps 5 and 6 of Algorithm \ref{alg:SRSC}.
\end{algorithmic}
\end{algorithm}

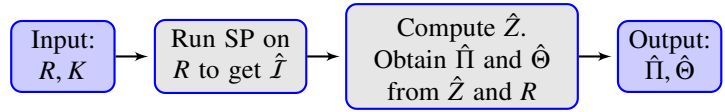
\begin{figure}[H]
\centering
\begin{tikzpicture}
[auto,
decision/.style={diamond, draw=blue, thick, fill=blue!20,
    text width=4.5em,align=flush center,
    inner sep=1pt},
block/.style ={rectangle, draw=blue, thick, fill=gray!20,
    text width=8em,align=center, rounded corners,
    minimum height=2em},
block1/.style ={diamond, draw=blue, thick, fill=orange!20,
    text width=1.4em,align=center, rounded corners,
    minimum height=2em},
block2/.style ={rectangle, draw=blue, thick, fill=blue!20,
    text width=3.2em,align=center, rounded corners,
    minimum height=2em},
block33/.style ={rectangle, draw=blue, thick, fill=gray!20,
    text width=6em,align=center, rounded corners,
    minimum height=2em},
block3/.style ={rectangle, draw=blue, thick, fill=gray!20,
    text width=8em,align=center, rounded corners,
    minimum height=2em},
block4/.style ={rectangle, draw=blue, thick, fill=gray!20,
    text width=5em,align=center, rounded corners,
    minimum height=2em},
line/.style ={draw, thick, -latex',shorten >=2pt},
cloud/.style ={draw=red, thick, ellipse,fill=red!20,
    minimum height=2em}]
\matrix [column sep=5mm,row sep=7mm]
{
&\node [block2] (input) {Input: $R, K$};
&\node [block4] (vh) {Run SP on $R$ to get $\hat{\mathcal{I}}$};
&\node [block] (mr) {Compute $\hat{Z}$. Obtain $\hat{\Pi}$ and $\hat{\Theta}$ from $\hat{Z}$ and $R$};
&\node [block2] (output) {Output: $\hat{\Pi}, \hat{\Theta}$};\\
};
\begin{scope}[every path/.style=line]
\path (input) -- (vh);
\path (vh) -- (mr);
\path (mr) -- (output);
\end{scope}
\end{tikzpicture}
\caption{Flowchart of Algorithm \ref{alg:SRM}.}\label{IllusAlgorithm4}
\end{figure}

The GoM-SRM algorithm (i.e., Algorithm \ref{alg:SRM}) is the real case of the Ideal GoM-SRM algorithm and its flowchart is shown in Fig.~\ref{IllusAlgorithm4}. Compared with GoM-SRSC, GoM-CRSC, and GoM-SSC, there is no SVD step in GoM-SRM. GoM-SRM's complexity is $O(NJK)$. Since the complexities of GoM-SRSC, GoM-CRSC, and GoM-SSC are $O(\mathrm{max}(N^{2}, J^{2})K)$, GoM-SRM runs faster than the other three methods when $N\neq J$.
\begin{rem}
Recall that GoM-SSC is a refinement of Algorithm 2 in \citep{chen2024spectral}, originally designed for categorical data with binary responses, to now handle categorical data with polytomous responses. This makes GoM-SSC a predecessor algorithm for comparison. We have excluded horizontal comparisons with the JML algorithm \citep{sirt_3.13-194} \deleted{due to the fact}\added{because} that Algorithm 2 \citep{chen2024spectral}, the simplified version of GoM-SSC, outperforms JML in both accuracy and efficiency \citep{chen2024spectral}. Furthermore, the JML algorithm is limited to categorical data with binary responses, and extending it to handle polytomous responses is a fascinating and promising future direction.
\end{rem}
\subsection{Evaluation metrics}
In our simulation studies, we assess the performance of our algorithms in accurately estimating the membership matrix $\Pi$, the item parameter matrix $\Theta$, and the number of latent classes $K$ using \added{the} three metrics below. For the estimation of $\Pi$ in simulated categorical data with known $\Pi$, we calculate the Hamming error, defined as the minimum over all $K\times K$ permutation matrices $\mathcal{P}$ of the relative $\ell_1$ norm difference between $\hat{\Pi}$ and $\Pi\mathcal{P}$, i.e., $\mathrm{min}_{\mathcal{P}\in \mathcal{S}}\frac{1}{N}\|\hat{\Pi}-\Pi\mathcal{P}\|_1$, where $\mathcal{S}$ denotes the set collecting all $K\times K$ permutation matrices. This metric, ranging between 0 and 1, quantifies the discrepancy between $\Pi$ and its estimation $\hat{\Pi}$ with smaller values indicating better estimation accuracy.

For the estimation of $\Theta$, we utilize the Relative error, which is defined as the minimum over all $K\times K$ permutation matrices $\mathcal{P}$ of the relative Frobenius norm difference between $\hat{\Theta}$ and $\Theta\mathcal{P}$, i.e., $\mathrm{min}_{\mathcal{P}\in \mathcal{S}}\frac{\|\hat{\Theta}-\Theta\mathcal{P}\|_F}{\|\Theta\|_F}$. This metric measures the discrepancy between $\Theta$ and its estimation $\hat{\Theta}$ with smaller values indicating better estimation accuracy.

Finally, to measure the performance of algorithms in inferring $K$, we calculate the accuracy rate, defined as the fraction of times an algorithm correctly identifies $K$ out of a total number of independent trials. This metric, ranging between 0 and 1, quantifies the success rate of algorithms in estimating $K$ with larger values indicating higher accuracy. By using these three metrics, we can objectively evaluate the performance of our proposed algorithms in estimating $\Pi$, $\Theta$, and $K$ in a controlled simulation environment.
\subsection{Simulations}\label{SecSim}
We test the effectiveness and accuracies of our methods by investigating their sensitivities to the regularization parameter $\tau$, the sparsity parameter $\rho$, and the number of subjects $N$ in this part. Unless specified, for all simulations, we set $J=\frac{N}{4}, K=3$, and $M=4$ (i.e., $R(i,j)\in\{0,1,2,3,4\}$ for $i\in[N], j\in[J]$). Let each latent class have $N_{0}$ pure subjects. Let the top $3N_{0}$ subjects $\{1,2,\ldots, N_{0}\}$ be pure and the rest $(N-3N_{0})$ subjects be mixed. For mixed subject $i$, we set its membership score $\Pi(i,:)$ as $\Pi(i,:)=(r_{1}, r_{2}, 1-r_{1}-r_{2})$, where $r_{1}=\frac{\mathrm{rand}(1)}{2}$, $r_{1}=\frac{\mathrm{rand}(1)}{2}$, and $\mathrm{rand}(1)$ is a random number generated from a Uniform distribution on $[0,1]$. Let $N_{0}=\frac{N}{4}$ in all simulations. Set $\bar{B}$ as an $J\times K$ matrix such that its $(j,k)$-th entry is $\mathrm{rand}(1)$. Set $B=\frac{\bar{B}}{\mathrm{max}_{j\in[J], k\in[K]}\bar{B}(j,k)}$ (in this way, $B$'s maximum entry is 1). Except for the case where we study the choice of $\tau$, we set $\tau$ as its default value $M\mathrm{max}(N,J)$. $\rho$ and $N$ are set independently for each experiment. For each parameter setting, we report the averaged metric over 100 repetitions.
\begin{figure}
\centering
\resizebox{\columnwidth}{!}{
\subfigure[Hamming error against $\tau$]{\includegraphics[width=0.2\textwidth]{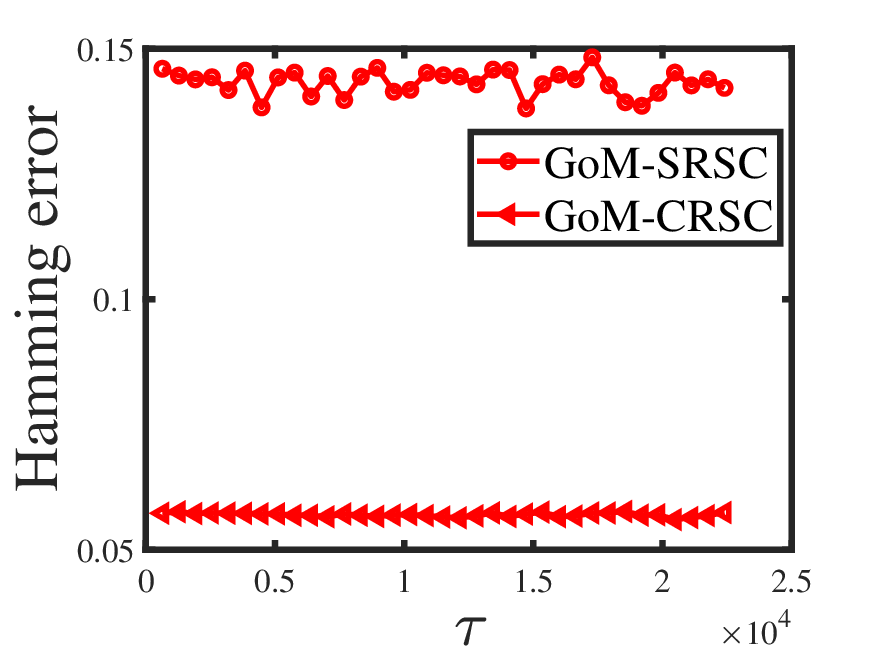}}
\subfigure[Relative error against $\tau$]{\includegraphics[width=0.2\textwidth]{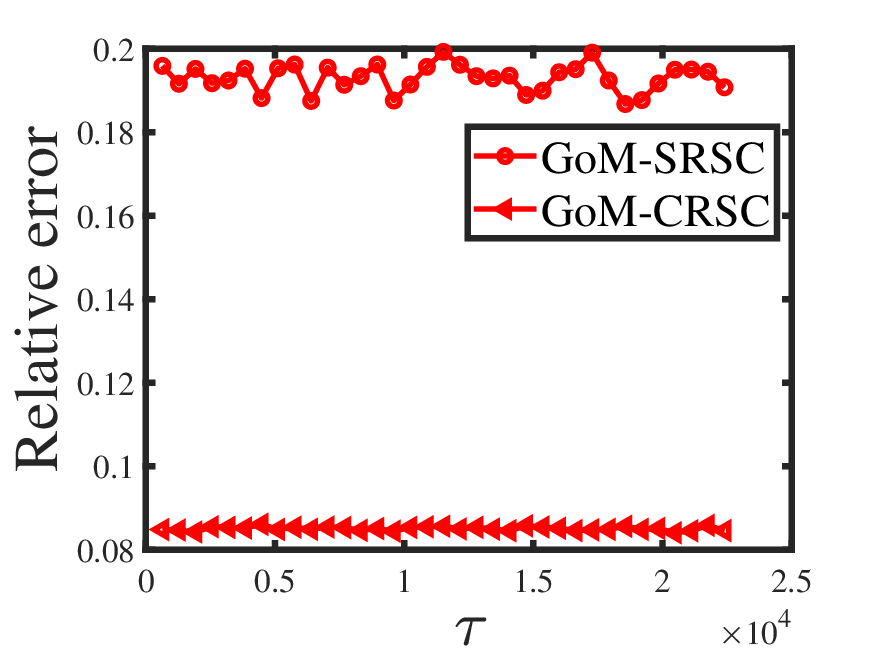}}
\subfigure[Running time against $\tau$]{\includegraphics[width=0.2\textwidth]{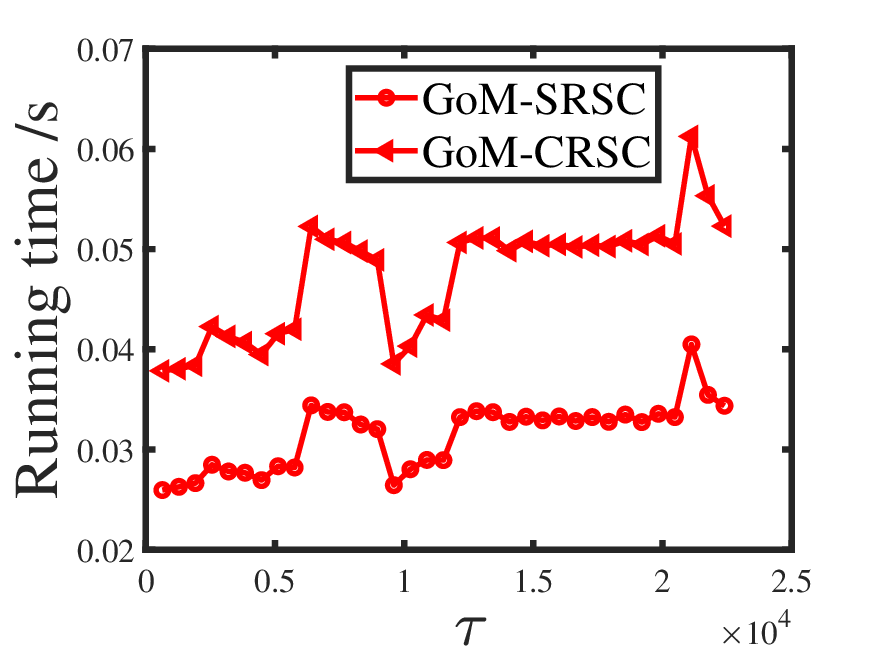}}
\subfigure[Accuracy against $\tau$]{\includegraphics[width=0.2\textwidth]{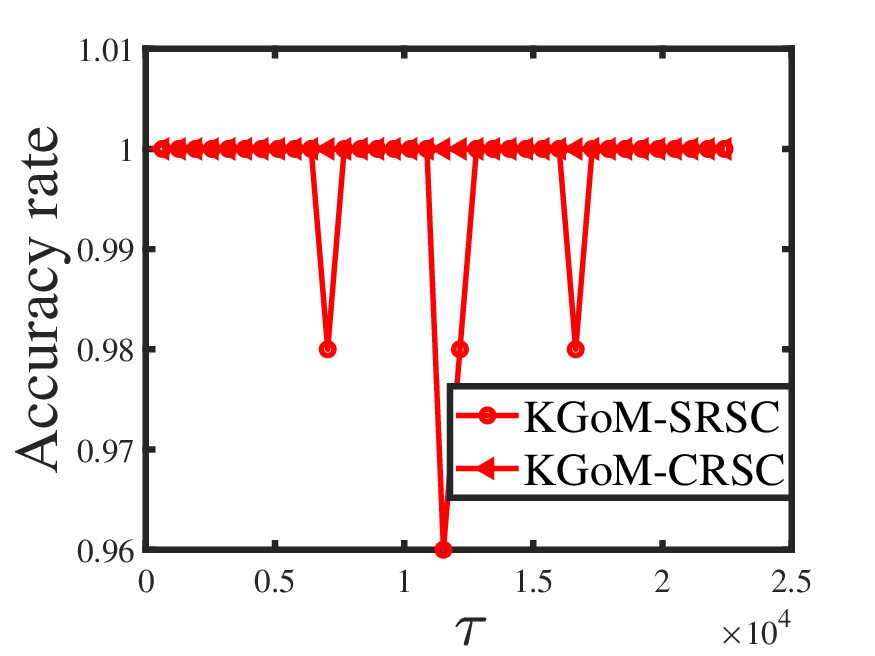}}
}
\caption{Results of Experiment 1.}
\label{Ex1} 
\end{figure}

\texttt{Experiment 1: Changing $\tau$.} Recall that there is a regularization parameter $\tau$ in our algorithms GoM-SRSC and GoM-CRSC, in this experiment, we investigate the effect of $\tau$ empirically. We set $(\rho, N)=(1,800)$ and $\tau=\alpha M\mathrm{max}(N, J)$. We vary $\alpha$ in the range $\{0.2, 0.4, \ldots, 7\}$. The results, displayed in Fig.~\ref{Ex1}, indicate that (1) our GoM-SRSC and GoM-CRSC are insensitive to the choice of the regularizer $\tau$. In this paper we choose the default value of $\tau$ as $M\mathrm{max}(N,J)$ because it matches our theoretical analysis and our methods are insensitive to $\tau$; (2) GoM-CRSC returns more accurate estimations of $\Pi$ and $\Theta$ than GoM-SRSC; (3) GoM-SRSC runs slightly faster than GoM-CRSC; (4) Our KGoM-SRSC and KGoM-CRSC have high accuracies in estimating $K$, indicating the effectiveness of our metric computed by Equation (\ref{fuzzyModularity}) in measuring the quality of latent mixed membership analysis.

\begin{figure}
\centering
\resizebox{\columnwidth}{!}{
\subfigure[Hamming error against $\rho$]{\includegraphics[width=0.2\textwidth]{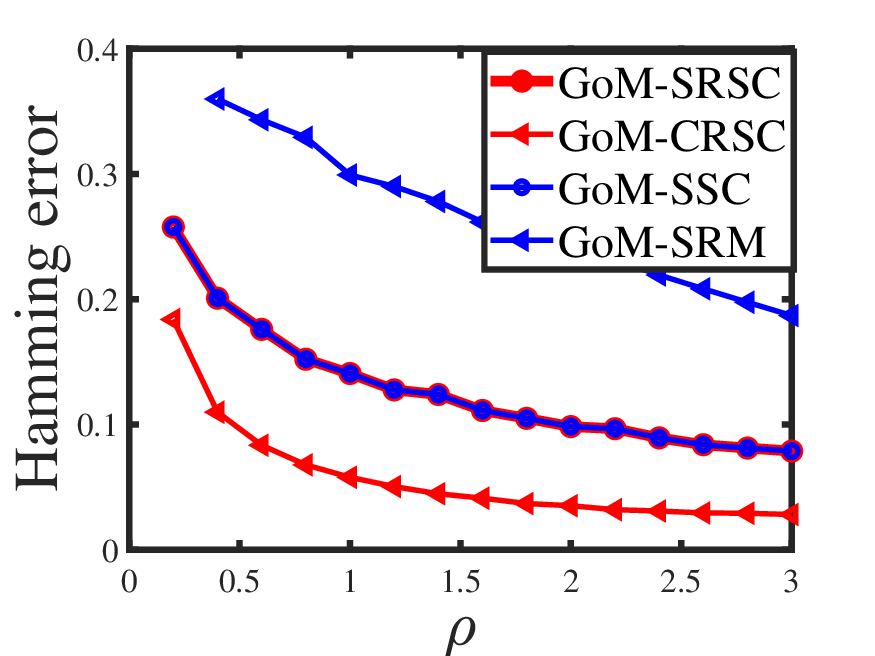}}
\subfigure[Relative error against $\rho$]{\includegraphics[width=0.2\textwidth]{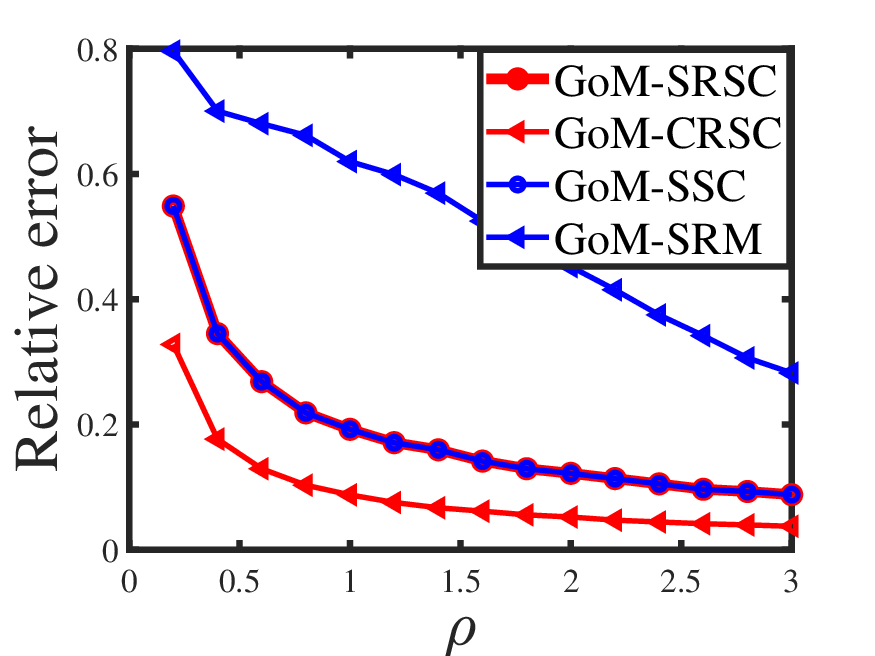}}
\subfigure[Running time against $\rho$]{\includegraphics[width=0.2\textwidth]{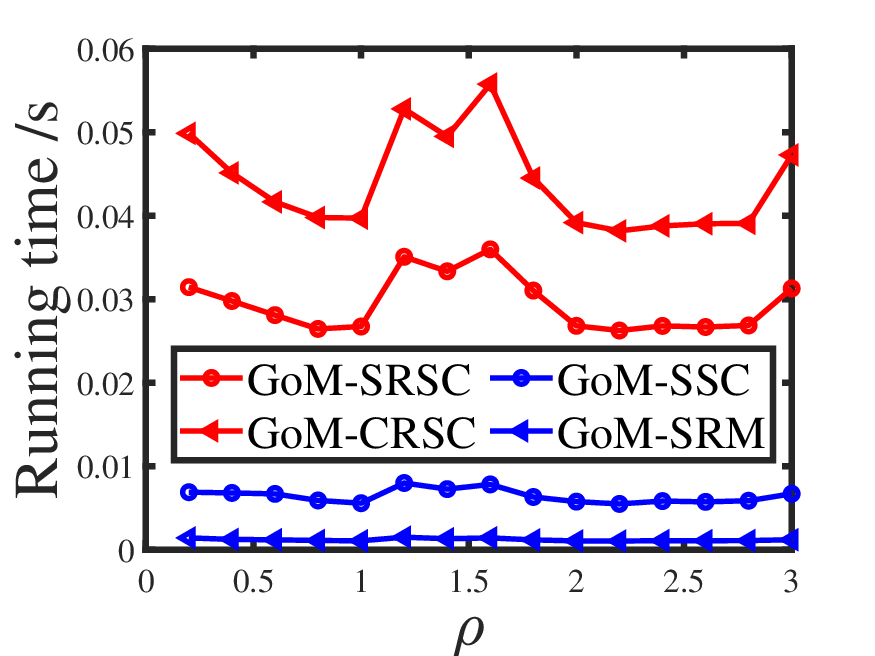}}
\subfigure[Accuracy against $\rho$]{\includegraphics[width=0.2\textwidth]{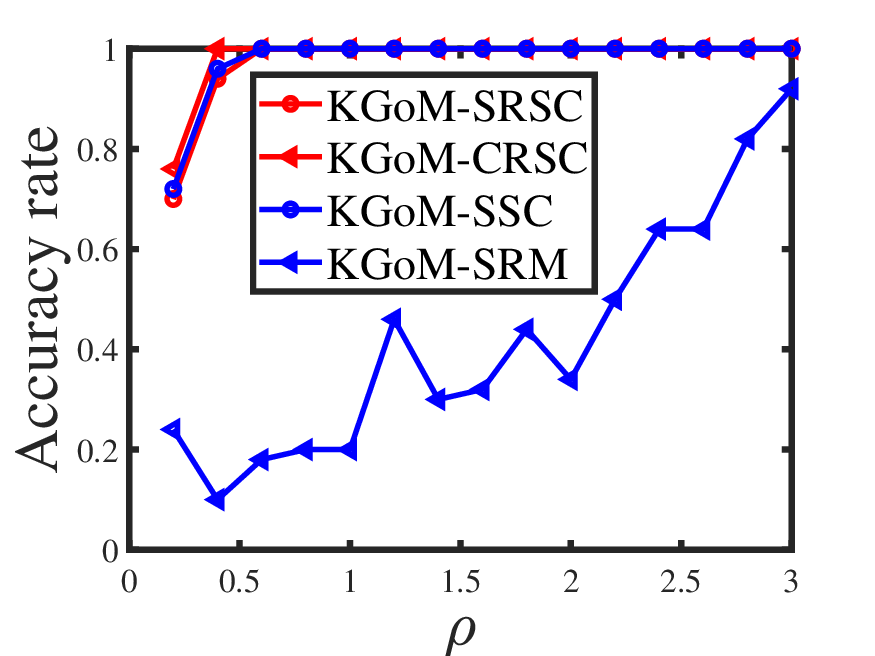}}
}
\caption{Results of Experiment 2.}
\label{Ex2} 
\end{figure}
\texttt{Experiment 2: Changing $\rho$.} Here, we investigate the sensitivity of our methods to the sparsity parameter $\rho$. Let $N=800$ and $\rho$ range in $\{0.2, 0.4, \ldots, 3\}$. Results are shown in Fig.~\ref{Ex2}. We have the following conclusions: (1) as we expect, our methods have better performances in estimating $\Pi$ and $\Theta$ as $\rho$ increases, and this verifies our analysis after Theorem \ref{Main}; (2) for the task of estimating $\Pi$ and $\Theta$,  GoM-CRSC outperforms the other three methods, GoM-SRSC enjoys similar performances as GoM-SSC, and GoM-SRM performs poorest. For computing time, GoM-SRM runs fastest and this supports our computational cost analysis for GoM-SRM; (3) for the task of inferring $K$, our KGoM-SRSC, KGoM-CRSC, and KGoM-SSC enjoy high accuracies while KGoM-SRM performs poorest. Again, the high accuracy of KGoM-SRSC, KGoM-CRSC, and KGoM-SSC guarantees the high effectiveness of our metric in quantitating the strength of latent mixed membership analysis.

\begin{figure}
\centering
\resizebox{\columnwidth}{!}{
\subfigure[Hamming error against $N$]{\includegraphics[width=0.2\textwidth]{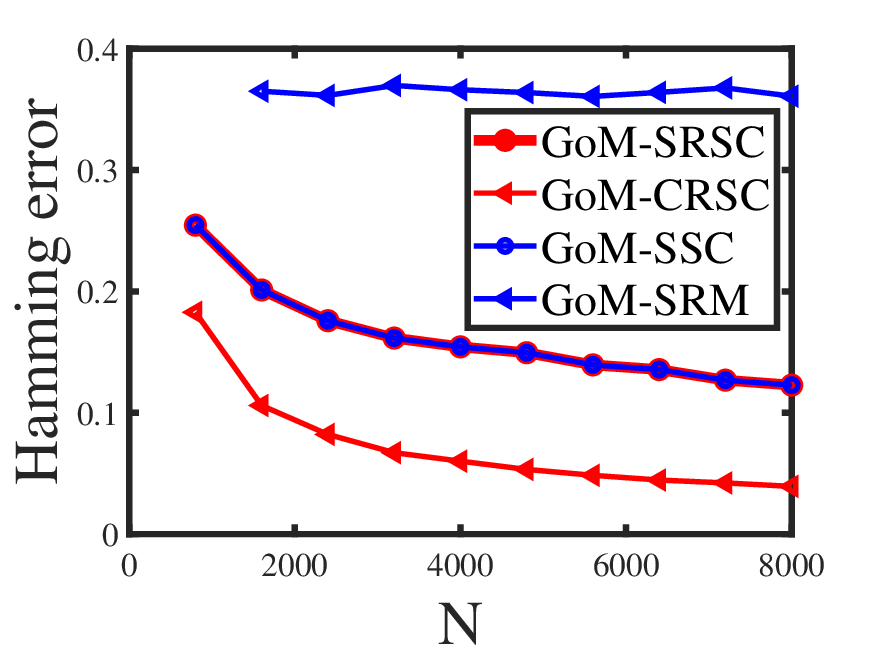}}
\subfigure[Relative error against $N$]{\includegraphics[width=0.2\textwidth]{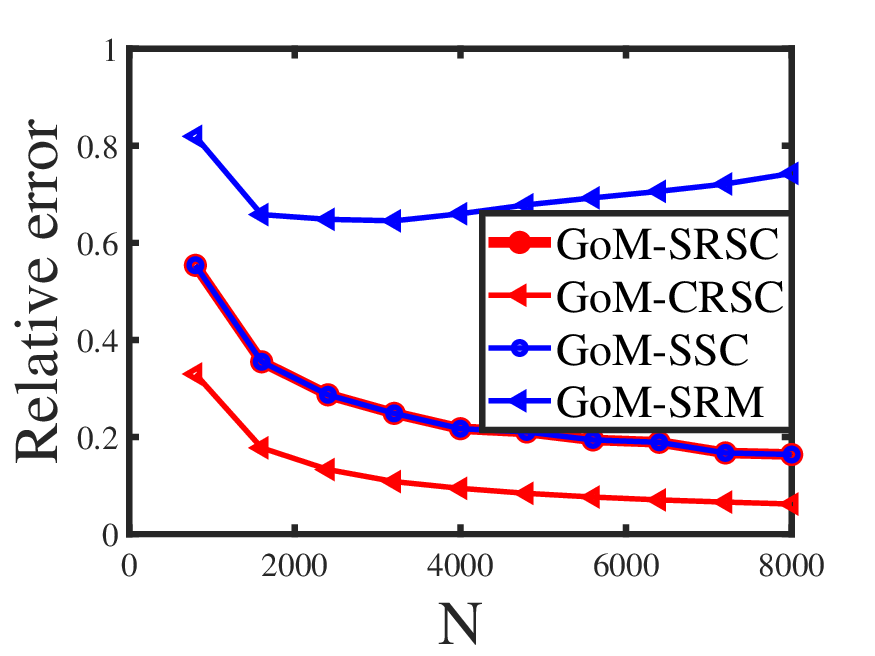}}
\subfigure[Running time against $N$]{\includegraphics[width=0.2\textwidth]{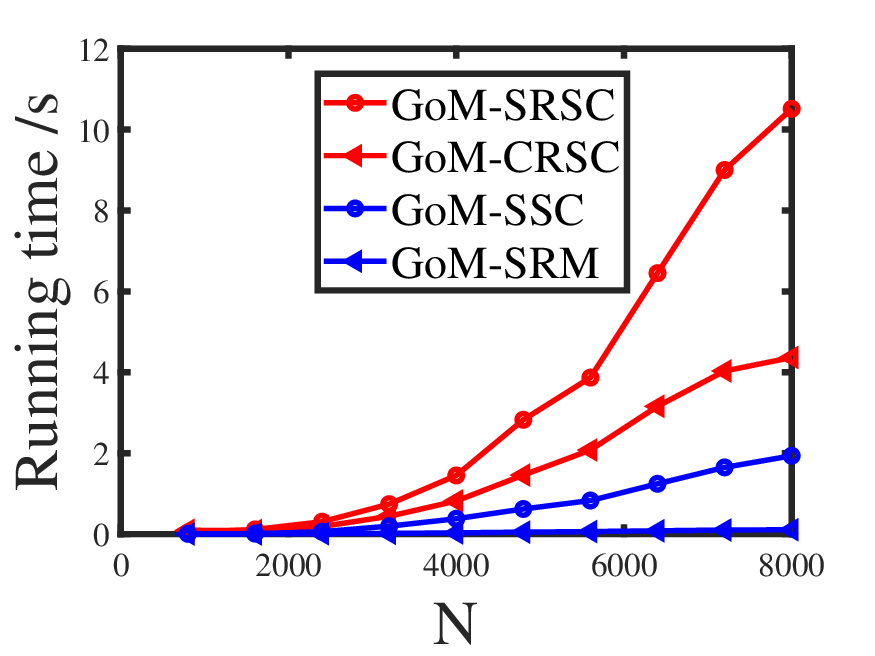}}
\subfigure[Accuracy against $N$]{\includegraphics[width=0.2\textwidth]{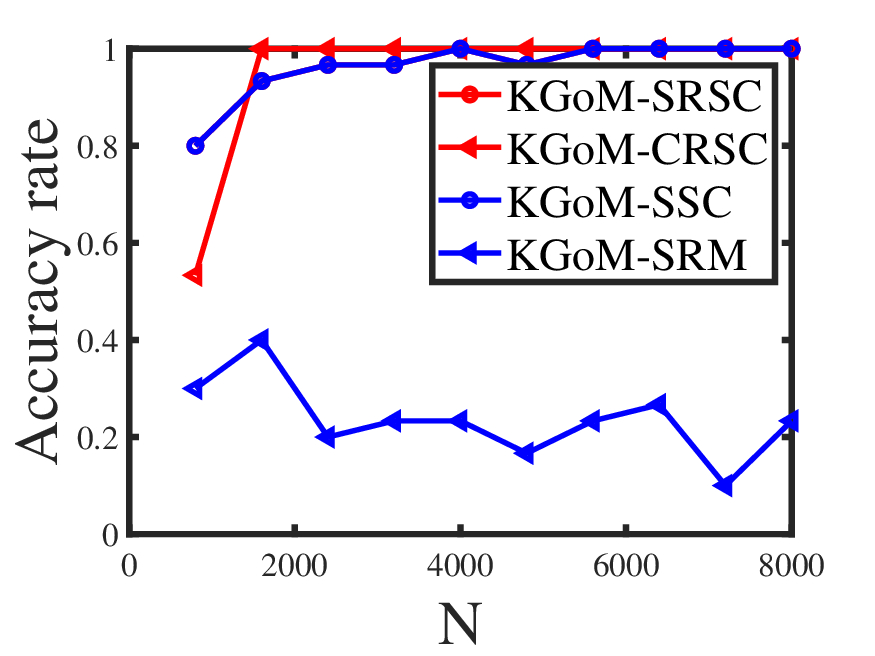}}
}
\caption{Results of Experiment 3.}
\label{Ex3} 
\end{figure}

\texttt{Experiment 3: Changing $N$.} Let $\rho=0.2$ and $N$ take values from the set $\{800, 1600, \ldots, 8000\}$. Fig.~\ref{Ex3} shows the results. We see that (1) GoM-SRSC, GoM-CRSC, and GoM-SCC behave better as $N$ and $J$ increase, which verifies our analysis after Theorem \ref{Main}; (2) GoM-CRSC has the best performances in estimating $\Pi$ and $\Theta$ though GoM-CRSC runs slowest; (3) GoM-SRSC performs similarly to GoM-SCC and GoM-SRM performs poorest; (4) Our methods process categorical data with 8000 subjects and 2000 items within 12 seconds; (5) Our KGoM-SRSC, KGoM-CRSC, and KGoM-SSC enjoy satisfactory performances in estimate $K$, indicating the powerfulness of our metric.

\texttt{Experiment 4: A toy example.} Set $K=2, N=20, J=10$, and $M=5$. Set $\Pi$ and $\Theta$ as the 1st and 2nd matrices in Fig.~\ref{Ex4simR}, respectively. The last matrix of Fig.~\ref{Ex4simR} displays a response matrix $R$ generated from the  GoM model under the above setting. We apply our algorithms to this $R$ and Table \ref{ErrorRatesSimulatedR} records the numerical results. We see that our GoM-CRSC performs best in estimating $\Pi$ and $\Theta$, GoM-SRSC outperforms GoM-SCC slightly, and all methods estimate $K$ correctly for this toy example. Meanwhile, applying our methods to the $R$ in the last matrix of Fig.~\ref{Ex4simR} gets the estimations of $\Pi$ and $\Theta$, and we show these estimations in Fig.~\ref{Ex4simePieTheta}. As we expect, since the Hamming error and Relative error are non-zero for all methods by Table \ref{ErrorRatesSimulatedR}, $\hat{\Pi}$ (and $\hat{\Theta}$) differs slightly from $\Pi$ (and $\Theta$) for each method. We also find that estimating the memberships for mixed subjects is more challenging than that of pure subjects for all methods. Finally, as suggested by our theoretical analysis after Theorem \ref{Main} and numerical analysis for Experiment 3, to make the estimations more accurate, we should increase $N$ and $J$. However, we set $N=20$ and $J=10$ for this experiment mainly for the visualizations of $\Pi, \Theta, R, \hat{\Pi},$ and $\hat{\Theta}$.
\begin{figure}
\centering
\includegraphics[width=1\textwidth]{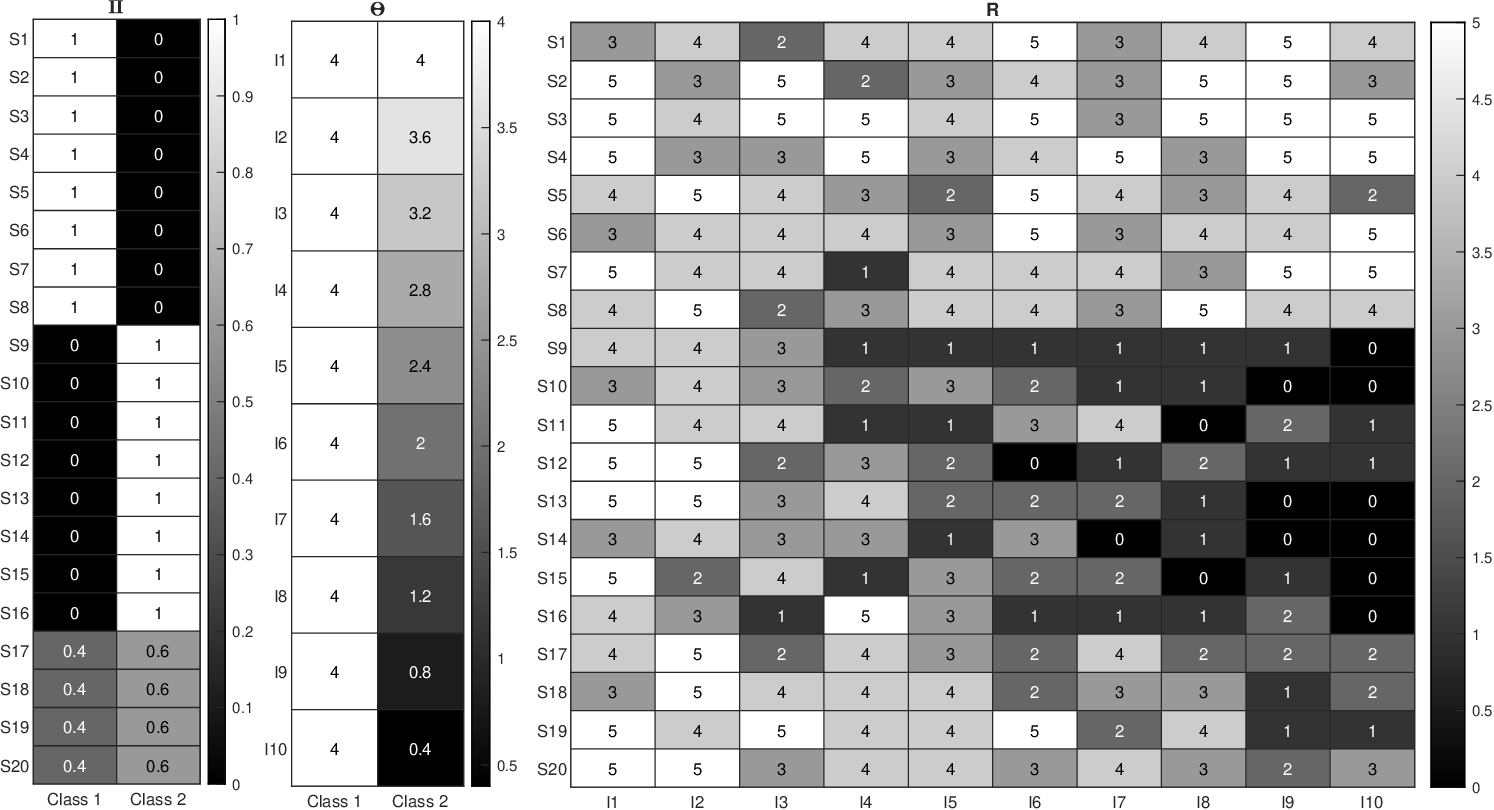}
\caption{We set $\Pi$ and $\Theta$ for Experiment 4 as the 1st and 2nd matrices, respectively. The 3rd matrix shows a $R$ generated from the GoM model in Experiment 4. Here, S$i$ denotes subject $i$ and I$j$ means item $j$.}
\label{Ex4simR} 
\end{figure}

\begin{figure}
\centering
\includegraphics[width=1\textwidth]{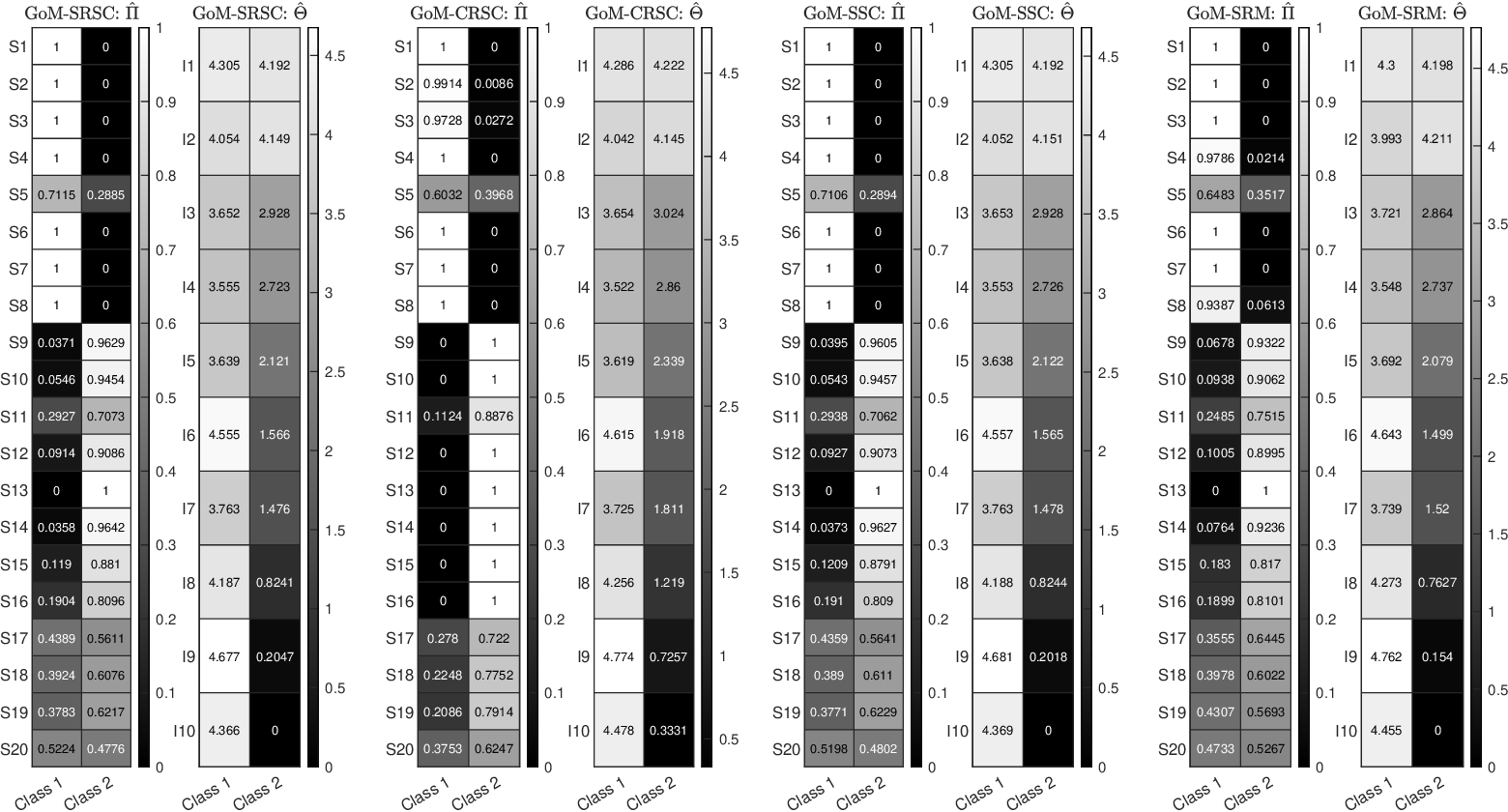}
\caption{Estimated $\Pi$ and $\Theta$ of each method for Experiment 4. Here, S$i$ denotes subject $i$ and I$j$ means item $j$.}
\label{Ex4simePieTheta} 
\end{figure}

\begin{table}[h!]
\footnotesize
	\centering
	\caption{Results of Experiment 4.}
	\label{ErrorRatesSimulatedR}
	\begin{tabular}{cccccccccccc}
\hline\hline&Hamming error&Relative error&Estimated $K$\\
\hline
GoM-SRSC&0.0650&0.1142&2\\
GoM-CRSC&0.0529&0.1034&2\\
GoM-SCC&0.0654&0.1146&2\\
GoM-SRM&0.0772&0.1251&2\\
\hline\hline
\end{tabular}
\end{table}
\section{Application to real-world categorical data}\label{sec7}
We apply our methods to real categorical data in this part. Table \ref{realdata} summarizes basic information for real categorical data considered in this paper, where  $\varsigma=\frac{\mathrm{Number~of~zero~elements~of~}R}{NJ}$ denotes the proportion of no-responses in $R$ and we have removed those subjects that have no response to all items for all data.  The true mixed memberships and number of latent classes are unknown for all real data used here. These datasets can be downloaded from \url{http://konect.cc/networks/movielens-100k_rating/} (accessed
on January 1, 2024) and \url{https://openpsychometrics.org/_rawdata/} (accessed
on January 1, 2024).

\added{For MovieLens 100k, users rate movies, with a higher score indicating greater preference. For IPIP, responses range from 1 (Strongly disagree) to 5 (Strongly agree), with 3 representing neutrality. Similarly, KIMS uses a 1-5 scale, where 1 signifies 'Never or very rarely true' and 5 'Very often or always true'. For NPI, 1 and 2 represent different choices for each question. Across all datasets, 0 signifies no response. Comprehensive details on the statements for IPIP, KIMS, and NPI are presented in Fig.~\ref{IPIPPiTheta}, Fig.~\ref{KIMSTheta}, and Fig.~\ref{NPITheta}, respectively.} \deleted{For MovieLens 100k, users rate movies and a higher rating score means more likes. For IPIP, 1=Strongly disagree, 2=Disagree, 3=Neither agree nor disagree, 4=Agree, and 5=Strongly agree. For KIMS, 1=Never or very rarely true, 2=Rarely true, 3=Sometimes true, 4=Often true, and 5=Very often or always true. For NPI, 1 means one choice of a question and 2 means another choice. For all data, 0 means no response. Details of all statements for IPIP, KIMS, and NPI can be found in Fig.~\ref{IPIPPiTheta}, Fig.~\ref{KIMSTheta}, and Fig.~\ref{NPITheta}, respectively.}

By Table \ref{realdata}, we see that the MovieLens 100k data is the sparsest while there are only a few no-responses for the other three datasets. This can be understood in the following way. Because MovieLens 100k is a user-rating-movie network and the number of users is smaller than that of movies, a user usually rates just a few movies. For comparison, because the other three datasets are test data and the number of individuals is much larger than that of questions, an individual usually answers all questions and only a few individuals have no responses to some questions.

To determine how many latent classes one should use for each data, we run our algorithms to $R$ to compute the modularity by Equation (\ref{fuzzyModularity}). Based on the results reported in Table \ref{RealDataModularityK}, we observe that (1) our GoM-CRSC always returns the largest modularity for all data, which suggests that the estimated number of latent classes for MovieLens 100k, IPIP, KIMS, and NPI should be 3, 2, 2, and 2, respectively; (2) GoM-SRSC performs similarly to GoM-SSC; (3) GoM-SRM returns poorest partitions of latent mixed memberships since its modularity is the smallest for each data. These observations are consistent with our numerical findings in Experiments 2-4. From now on, we only consider our GoM-CRSC for real data analysis since its modularity is the largest for each data.
\begin{table}[h!]
	\centering
	\caption{Basic information of real-world categorical data used in this paper.}
	\label{realdata}
	\resizebox{\columnwidth}{!}{
	\begin{tabular}{cccccccccccc}
\hline\hline&Subject meaning&Item meaning&$N$&$J$&$M$&$\varsigma$\\
\hline
MovieLens 100k&User&Movie&943&1682&5&93.7\%\\
International Personality Item Pool (IPIP) personality test&Individual&Statement&1004&40&5&0.6\%\\
Kentucky Inventory of Mindfulness Skills (KIMS) test&Individual&Statement&601&39&5&0.59\%\\
Narcissistic Personality Inventory (NPI) test&Individual&Question&11241&40&2&0.3\%\\
\hline\hline
\end{tabular}}
\end{table}

\begin{table}[h!]
\footnotesize
	\centering
	\caption{The $K$ maximizing the modularity and the respective modularity of each method for data in Table \ref{realdata}.}
	\label{RealDataModularityK}
	\begin{tabular}{cccccccccccc}
\hline\hline
Dataset&GoM-SRSC&GoM-CRSC&GoM-SSC&GoM-SRM\\
\hline
MovieLens 100k&(2, 0.0461)&(3, 0.0730)&(2, 0.0501)&(2,0176)\\
IPIP&(4, 0.0021)&(2, 0.0067)&(4, 0.0021)&(4,0.0019)\\
KIMS&(3, 0.0027)&(2, 0.0081)&(3, 0.0027)&(3, 0.0024)\\
NPI&(4, 0.0017)&(2, 0.0054)&(4, 0.0017)&(14, 0.00028)\\
\hline\hline
\end{tabular}
\end{table}

Let $\hat{\Pi}$ and $\hat{\Theta}$ be the estimated membership matrix and the estimated item parameter matrix returned by applying GoM-CRSC to real data with $K$ latent classes, where $K$ is the one returned by KGoM-CRSC. To simplify our analysis, we name the $K$ latent classes as Class 1, Class 2, $\ldots$, and Class $K$ for each data.
\begin{defin}\label{munuPhi}
Call subject $i$ a highly pure subject if $\mathrm{max}_{k\in[K]}\hat{\Pi}(i,k)\geq0.9$, a highly mixed subject if $\mathrm{max}_{k\in[K]}\hat{\Pi}(i,k)\leq0.7$ for $i\in[N]$, and a moderate pure subject otherwise. Define $\mu=\frac{|\{i\in[N]: \mathrm{max}_{k\in[K]}\hat{\Pi}(i,k)\geq0.9\}|}{N}$ and $\nu=\frac{|\{i\in[N]: \mathrm{max}_{k\in[K]}\hat{\Pi}(i,k)\leq0.7\}|}{N}$.
\end{defin}
Definition \ref{munuPhi} says that $\mu$ ($\nu$) is the proportion of highly pure (mixed) subjects. Results reported in Table \ref{RealDataResults} show that though most subjects are highly pure, there still exist many highly mixed subjects for each data. By Table \ref{RealDataResults}, we see that GoM-CRSC processes data with 11241 subjects and 40 items within 2 seconds, which suggests its efficiency.

Recall that $\mathscr{R}=\Pi\Theta'$ under GoM, we have  $\mathscr{R}(i,j)=\sum_{k\in[K]}\Pi(i,k)\Theta(j,k)$, which suggests that a larger $\Theta(j,k)$ gives a larger $\mathscr{R}(i,j)$ given the membership score $\Pi(i,:)$ of subject $i$, i.e., a larger $\Theta(j,k)$ tends to make $R(i,j)$ larger since $\mathscr{R}(i,j)$ is the expectation of $R(i,j)$ under GoM. Based on this observation, next, we interpret the estimated latent classes for each data briefly.
\begin{table}[h!]
\footnotesize
	\centering
	\caption{$\mu, \nu$, and runtime when applying GoM-CRSC to real  data considered in this paper.}
	\label{RealDataResults}
\begin{tabular}{cccccccccc}
\hline\hline
dataset&$\mu$&$\nu$&Runtime\\
\hline
MovieLens 100k&0.4602&0.2333&0.1077s\\
IPIP&0.6036&0.2112&0.0467s\\
KIMS&0.6040&0.1947&0.0345s\\
NPI&0.6249&0.1882&1.5903s\\
\hline\hline
\end{tabular}
\end{table}

For MovieLens 100k, Fig.~\ref{100KPiTheta} displays a subset of $\hat{\Pi}$ and a subset of $\hat{\Theta}$ for better visualization. From the estimated mixed memberships of subjects $2, 3, 4, 15, 17, 21, 26, 27$, and 30, we see that these subjects are highly pure and they are more likely to belong to Class 3. Meanwhile, subjects 12, 20, and 28 are more likely to belong to Class 1. From the right matrix of Fig.~\ref{100KPiTheta}, we find that $\hat{\Theta}(j,1)$ is usually larger than $\hat{\Theta}(j,2)$ while $\hat{\Theta}(j,2)$ is usually larger than $\hat{\Theta}(j,3)$ for $j\in[30]$. In fact, we have $\sum_{j\in[1682]}\hat{\Theta}(j,1)=704.6280, \sum_{j\in[1682]}\hat{\Theta}(j,2)=552.2857$, and  $\sum_{j\in[1682]}\hat{\Theta}(j,3)=178.6997$. Based on the above observation, Class 1 can be interpreted as users with an optimistic view of movies, Class 2 can be interpreted as users with a neutral view of movies, and Class 3 can be interpreted as users with a passive view of movies. Users in Class 1 are more likely to give higher ratings to movies than users in Class 2 and Class 3.

Fig.~\ref{RealPi100k} depicts the ternary diagram of $\hat{\Pi}$ for the MovieLens 100k dataset. This diagram provides a visual representation of the purity of each user within the dataset. Users exhibiting varying degrees of purity are distinctively represented by different colors and shapes, allowing for a direct assessment of their purity levels. It is noteworthy that the positions of the shapes within the ternary diagram carry significant information. Specifically, the location of the shape corresponding to user $i$, who exhibits a higher value of $\mathrm{max}_{k\in[K]}\hat{\Pi}(i,k)$, tends to be closer to one of the vertexes of the triangle for all $i\in[N]$. Conversely, the shapes representing users who exhibit a high degree of mixing are positioned towards the interior of the triangle. This spatial arrangement within the ternary diagram effectively illustrates the varying purity levels among users in the MovieLens 100k dataset.

\begin{figure}
\centering
\includegraphics[width=0.7\textwidth]{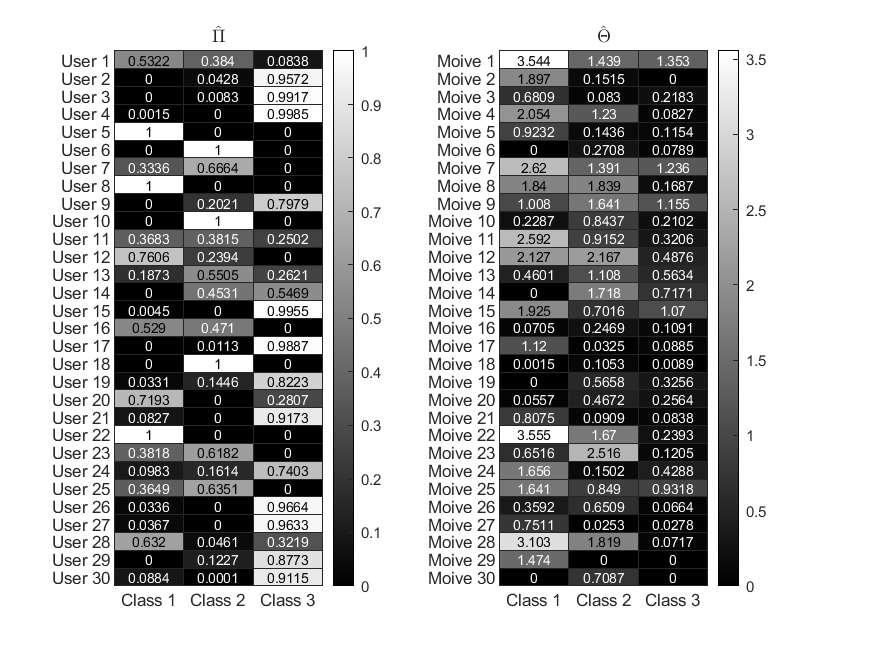}
\caption{Left matrix: heatmap of the estimated mixed memberships of the first 30 users for the MovieLens 100k data. Right matrix: heatmap of the estimated item parameters of the first 30 items for the MovieLens 100k data.}
\label{100KPiTheta} 
\end{figure}

\begin{figure}
\centering
\includegraphics[width=0.8\textwidth]{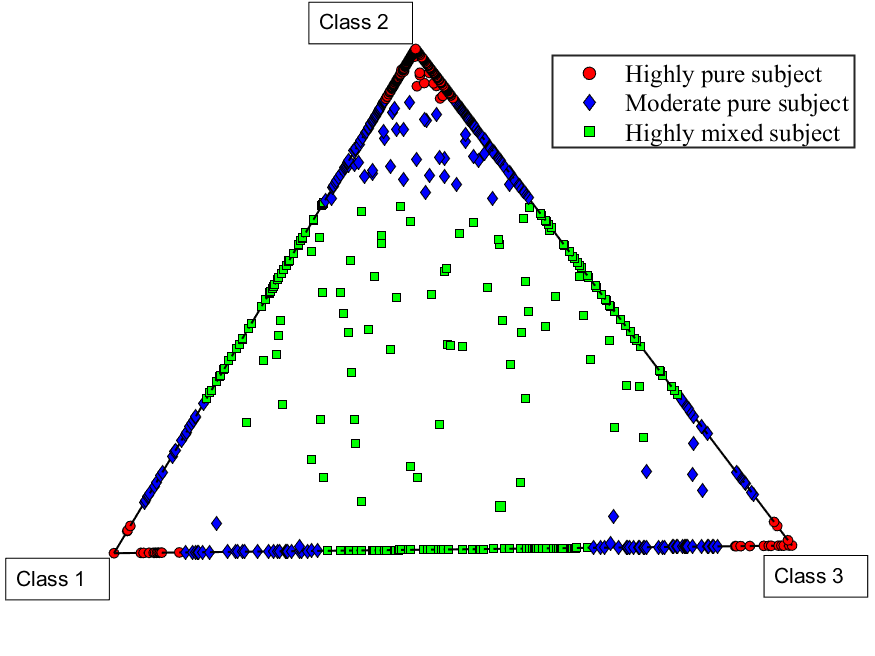}
\caption{Ternary diagram of the $943\times3$ estimated membership matrix $\hat{\Pi}$ for the MovieLens 100k data. Within this diagram, each distinct shape (dot, square, and diamond) represents an individual user, and its specific location within the triangle corresponds to the user's three membership scores.}
\label{RealPi100k} 
\end{figure}

\begin{figure}
\centering
\includegraphics[width=0.5\textwidth]{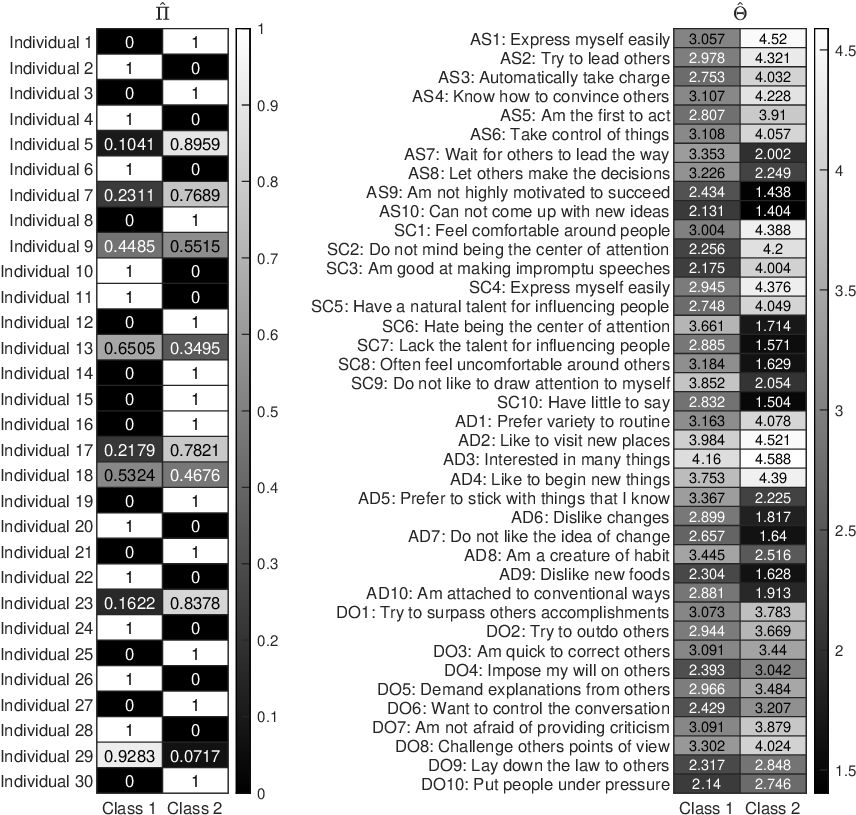}
\caption{Left matrix: the estimated mixed memberships of the first 30 individuals for the IPIP data. Right matrix: heatmap of $\hat{\Theta}$ for the IPIP data, where AS, SC, AD, and DO mean four personality factors assertiveness, social confidence, adventurousness, and dominance, respectively.}
\label{IPIPPiTheta} 
\end{figure}

\begin{figure}
\centering
\resizebox{\columnwidth}{!}{
\subfigure[$\hat{\Pi}$: IPIP]{\includegraphics[width=0.4\textwidth]{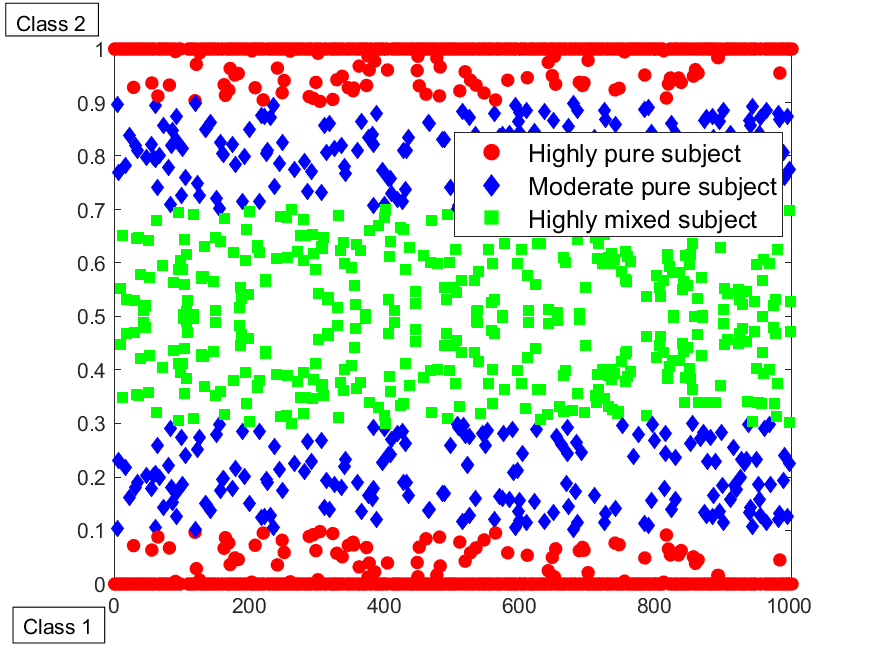}}
\subfigure[$\hat{\Pi}$: KIMS]{\includegraphics[width=0.4\textwidth]{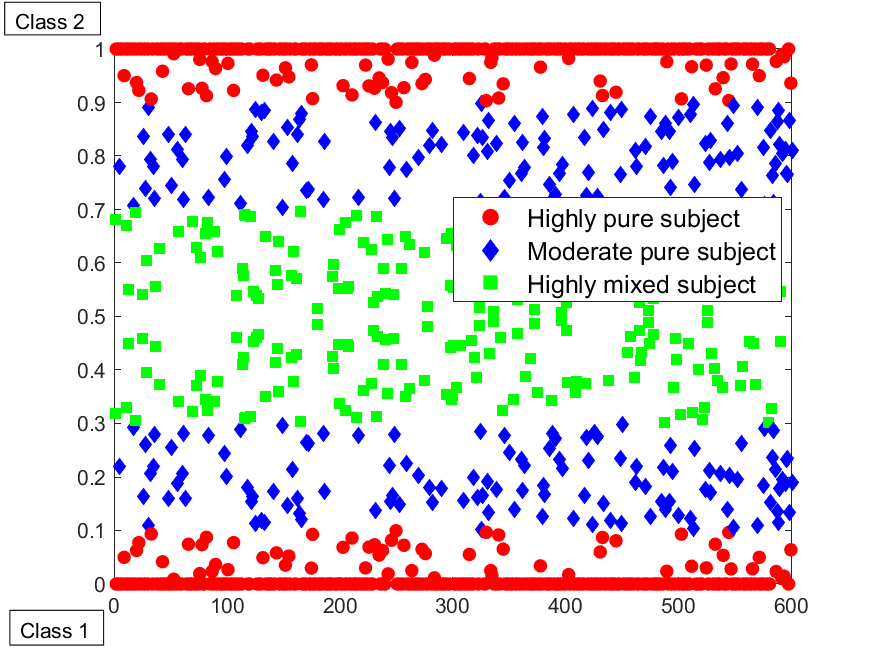}}
}
\caption{Panel (a): binary diagram of the $1004\times2$ estimated membership matrix $\hat{\Pi}$ for the IPIP data. Panel (b): binary diagram of the $601\times2$ estimated membership matrix $\hat{\Pi}$ for the KIMS data. Within both diagrams, each distinct shape (dot, square, and diamond) represents an individual, and its specific location within the interval $[0,1]$ corresponds to the individual's two membership scores.}
\label{RealPiIPIPKIMS} 
\end{figure}

For IPIP, the estimated mixed memberships of the first 30 individuals and the estimated item parameter matrix $\hat{\Theta}$ are displayed in Fig.~\ref{IPIPPiTheta}. From $\hat{\Pi}$, we can find subjects' memberships. Recall that for the IPIP data, a higher response value to a statement means a stronger agreement, by analyzing $\hat{\Theta}$, we interpret Class 1 as people who are socially passive and Class 2 as socially optimistic. People in Class 2 are usually more assertive, adventurous, and dominant than those in Class 1.
\begin{figure}
\centering
\includegraphics[width=0.7\textwidth]{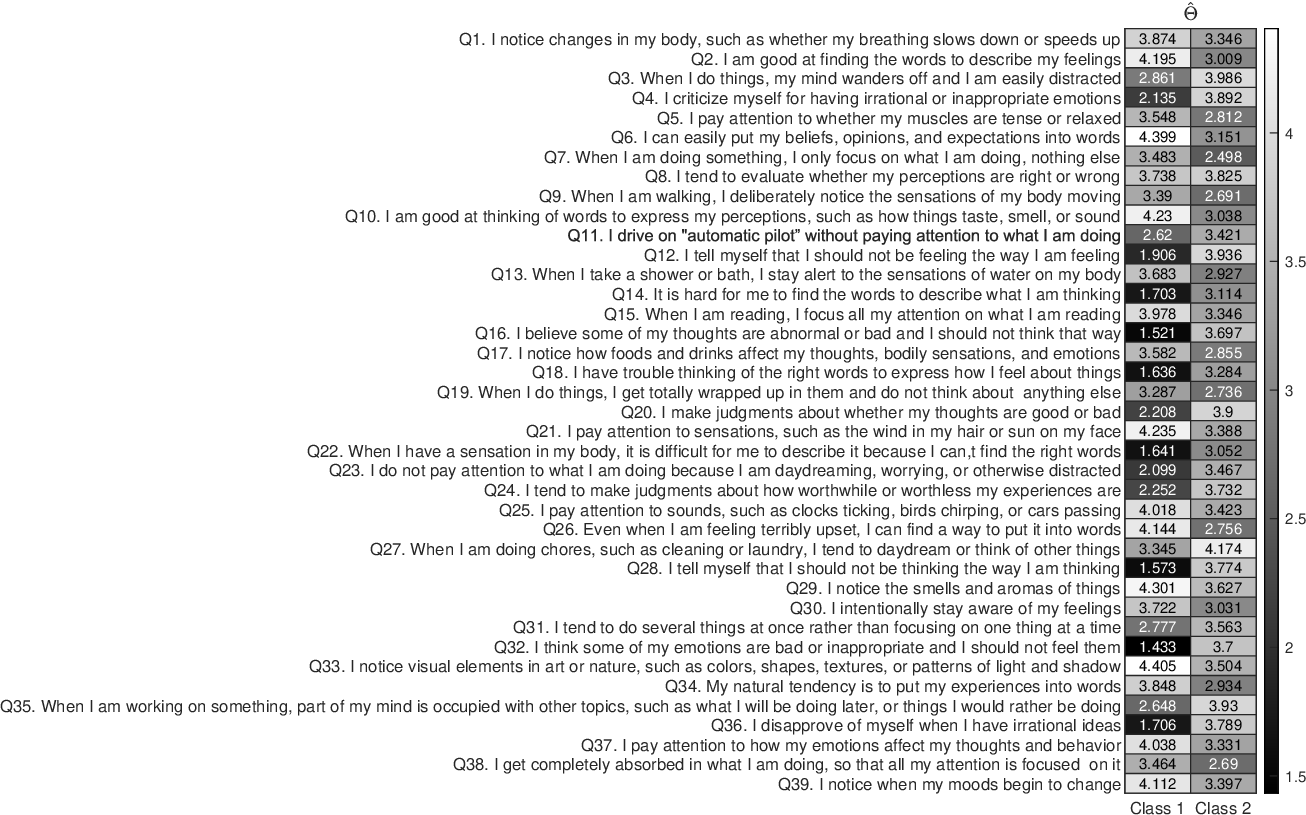}
\caption{Heatmap of $\hat{\Theta}$ for the KIMS data.}
\label{KIMSTheta} 
\end{figure}
\begin{figure}
\centering
\includegraphics[width=0.7\textwidth]{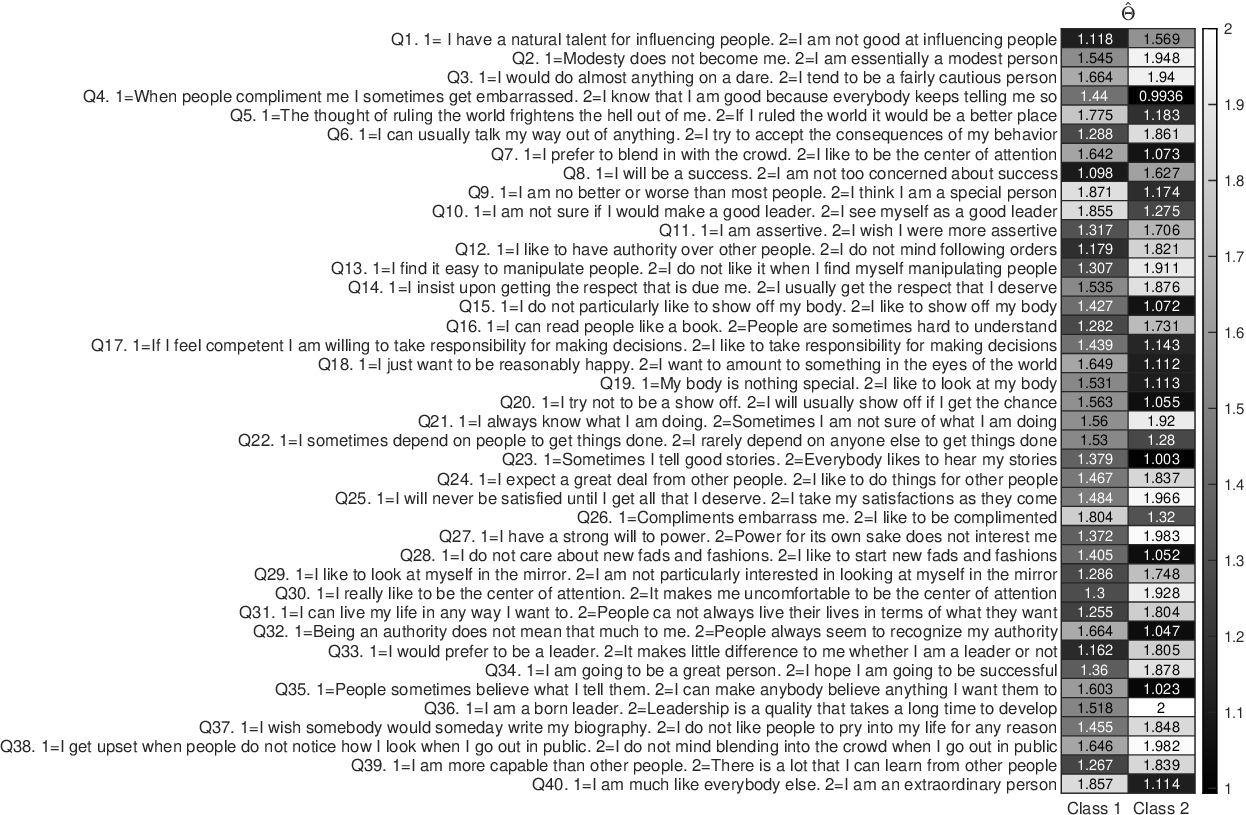}
\caption{Heatmap of $\hat{\Theta}$ for the NPI data.}
\label{NPITheta} 
\end{figure}

\added{For KIMS, Fig.~\ref{KIMSTheta} offers a heatmap visualization of $\hat{\Theta}$. Based on this data, we classify individuals into two distinct categories: Class 1, representing mindful individuals, and Class 2, comprising those who are mindless. Notably, individuals belonging to Class 1 tend to exhibit a higher degree of self-focus compared to those in Class 2. Additionally, Fig.~\ref{RealPiIPIPKIMS} depicts the binary diagrams of $\hat{\Pi}$ for both the IPIP and KIMS datasets. The interpretation of this figure closely resembles the explanation provided for Fig.~\ref{RealPi100k}, therefore, a detailed discussion is omitted here for the sake of brevity.} \deleted{For KIMS, Fig.~\ref{KIMSTheta} shows $\hat{\Theta}$'s heatmap. For this data, we interpret Class 1 as people who are mindful and Class 2 as people who are mindless. People in Class 1 are more focused on himself/herself than those in Class 2. Fig.~\ref{RealPiIPIPKIMS} presents the binary diagrams of $\hat{\Pi}$ for the IPIP dataset and the KIMS dataset. The interpretation of this figure aligns closely with the explanation provided for Fig.~\ref{RealPi100k}, hence a detailed discussion is omitted here for brevity.}

\added{Fig.~\ref{NPITheta} presents a heatmap depiction of $\hat{\Theta}$ for the NPI dataset. Within this context, Class 1 represents individuals who exhibit narcissistic traits, while Class 2 signifies those who are more modest. Notably, individuals categorized under Class 1 exhibit a higher degree of narcissism compared to those belonging to Class 2.} \deleted{Fig.~\ref{NPITheta} displays $\hat{\Theta}$'s heatmap for the NPI data. For this data, Class 1 can be interpreted as people who are narcissistic and Class 2 as people who are modest. People in Class 1 are more narcissistic than those in Class 2.}
\section{Conclusion}\label{sec8}
In this paper, we present two novel algorithms for mixed membership estimation and item parameter estimation under the Grade of Membership (GoM) model for categorical data with polytomous responses. These algorithms not only offer remarkable efficiency but also consistently yield accurate parameter estimations under mild conditions on data sparsity. To assess the quality of the latent mixed membership analysis in categorical data, we have developed a metric that computes fuzzy modularity based on the product of the response matrix and its transpose. By combining this metric with our algorithms, we can determine the number of latent classes $K$ for categorical data generated from the GoM model. Our experimental findings highlight the superior performance of our algorithms in estimating the mixed membership matrix, the item parameter matrix, and the number of latent classes under the GoM model. Notably, our methods for estimating $K$ exhibit remarkable accuracy, further validating the utility of our metric in evaluating the quality of the latent mixed membership analysis. This research represents a leap forward in comprehending mixed membership estimation and item parameter estimation under the GoM model for categorical data. The proposed algorithms and metric offer practical tools for analyzing categorical data and evaluating the quality of latent mixed membership analysis, with the potential to revolutionize latent class analysis and enable researchers to gain deeper insights into the structures of categorical data.
\section*{CRediT authorship contribution statement}
\textbf{Huan Qing} is the sole author of this article.
\section*{Declaration of competing interest}
The author declares no competing interests.
\section*{Data availability}
Data and code will be made available on request.
\section*{Acknowledgements}
H.Q. was sponsored by \added{the Scientific Research Foundation of Chongqing University of Technology (Grant No: 0102240003) and the} Natural Science Foundation of Chongqing, China (Grant No: CSTB2023NSCQ-LZX0048).
\appendix
\section{Proofs under GoM}\label{SecProofs}
\subsection{Proof of Lemma \ref{SVDPopulationLtau}}
\begin{proof}
For the 1st statement, $\mathscr{R}=\Pi\Theta'\rightarrow \mathscr{R}'=\Theta\Pi'\rightarrow\mathscr{R}'\Pi=\Theta\Pi'\Pi\rightarrow\Theta=\mathscr{R}'\Pi(\Pi'\Pi)^{-1}$, where the $K\times K$ matrix $\Pi'\Pi$ is nonsingular since $\Pi$'s rank is $K$ when Condition (C1) hold.

For the 2nd statement, $\mathscr{L}_{\tau}=\mathscr{D}^{-1/2}_{\tau}\mathscr{R}=\mathscr{D}^{-1/2}_{\tau}\Pi\Theta'=U\Sigma V'$ under GoM gives $U_{\tau}=\mathscr{D}^{1/2}_{\tau}U=\Pi\Theta'V\Sigma^{-1}$. By $\Pi(\mathcal{I},:)=I_{K\times K}$, we have $U_{\tau}(\mathcal{I},:)=\Pi(\mathcal{I},:)\Theta'V\Sigma^{-1}=\Theta'V\Sigma^{-1}$, i.e., $U_{\tau}=\Pi U_{\tau}(\mathcal{I},:)$. Sure, when $\Pi(i,:)=\Pi(\bar{i},:)$ for two distinct subjects $i$ and $\bar{i}$, we get $U_{\tau}(i,:)=U_{\tau}(\bar{i},:)$.

For the 3rd statement, since $U=\mathscr{D}^{-1/2}_{\tau}\Pi\Theta'V\Sigma^{-1}$, we have $U(i,:)=\mathscr{D}^{-1/2}_{\tau}(i,i)\Pi(i,:)\Theta'V\Sigma^{-1}$, which gives that $U_{*}(i,:)=\frac{\Pi(i,:)\Theta'V\Sigma^{-1}}{\|\Pi(i,:)\Theta'V\Sigma^{-1}\|_{F}}=\frac{\Pi(i,:)U_{\tau}(\mathcal{I},:)}{\|\Pi(i,:)U_{\tau}(\mathcal{I},:)\|_{F}}=\frac{\Pi(i,:)\mathscr{D}^{1/2}_{\tau}(\mathcal{I},\mathcal{I})U(\mathcal{I},:)}{\|\Pi(i,:)U_{\tau}(\mathcal{I},:)\|_{F}}=\frac{\Pi(i,:)\mathscr{D}^{1/2}_{\tau}(\mathcal{I},\mathcal{I})D^{-1}_{U}(\mathcal{I},\mathcal{I})D_{U}(\mathcal{I},\mathcal{I})U(\mathcal{I},:)}{\|\Pi(i,:)U_{\tau}(\mathcal{I},:)\|_{F}}=\frac{\Pi(i,:)\mathscr{D}^{1/2}_{\tau}(\mathcal{I},\mathcal{I})D^{-1}_{U}(\mathcal{I},\mathcal{I})U_{*}(\mathcal{I},:)}{\|\Pi(i,:)U_{\tau}(\mathcal{I},:)\|_{F}}$ for $i\in[N]$. Sure, when $\Pi(i,:)=\Pi(\bar{i},:)$, we have $U_{*}(i,:)=U_{*}(\bar{i},:)$.
\end{proof}
\subsection{Proof of Lemma \ref{ConeCon}}
\begin{proof}
By the proof of Lemma \ref{SVDPopulationLtau}, we know that $U=\mathscr{D}^{-1/2}_{\tau}\Pi U_{\tau}(\mathcal{I},:)=\mathscr{D}^{-1/2}_{\tau}\Pi\mathscr{D}^{1/2}_{\tau}(\mathcal{I},\mathcal{I})U(\mathcal{I},:)$. By $I_{K\times K}=U'U$, we have $(U(\mathcal{I},:)U'(\mathcal{I},:))^{-1}=\mathscr{D}^{1/2}_{\tau}(\mathcal{I},\mathcal{I})\Pi'\mathscr{D}^{-1}_{\tau}\Pi\mathscr{D}^{1/2}_{\tau}(\mathcal{I},\mathcal{I})$. Then we have $(U_{*}(\mathcal{I},:)U'_{*}(\mathcal{I},:))^{-1}=(N_{U}(\mathcal{I},\mathcal{I})U(\mathcal{I},:)U'(\mathcal{I},:)N_{U}(\mathcal{I},\mathcal{I}))^{-1}=N^{-1}_{U}(\mathcal{I},\mathcal{I})\mathscr{D}^{1/2}_{\tau}(\mathcal{I},\mathcal{I})\Pi'\mathscr{D}^{-1}_{\tau}\Pi\mathscr{D}^{1/2}_{\tau}(\mathcal{I},\mathcal{I})N^{-1}_{U}(\mathcal{I},\mathcal{I})$. Since all entries of $N_{U}, \mathscr{D}_{\tau}$, and $\Pi$ are nonnegative, we have $(U_{*}(\mathcal{I},:)U'_{*}(\mathcal{I},:))^{-1}\mathbf{1}>0$.
\end{proof}
\subsection{Proof of Lemma \ref{ZstarPi}}
\begin{proof}
Since $Y=D_{o}\Pi\mathscr{D}^{1/2}_{\tau}(\mathcal{I},\mathcal{I})D^{-1}_{U}(\mathcal{I},\mathcal{I})$ by Lemma \ref{SVDPopulationLtau}, we have $D_{o}\Pi=YD_{U}(\mathcal{I},\mathcal{I})\mathscr{D}^{-1/2}_{\tau}(\mathcal{I},\mathcal{I})=U_{*}U^{-1}_{*}(\mathcal{I},:)D_{U}(\mathcal{I},\mathcal{I})\mathscr{D}^{-1/2}_{\tau}(\mathcal{I},\mathcal{I})=D_{U}UU^{-1}_{*}(\mathcal{I},:)D_{U}(\mathcal{I},\mathcal{I})\mathscr{D}^{-1/2}_{\tau}(\mathcal{I},\mathcal{I})$, which gives that $D^{-1}_{U}D_{0}\Pi=UU^{-1}_{*}(\mathcal{I},:)D_{U}(\mathcal{I},\mathcal{I})\mathscr{D}^{-1/2}_{\tau}(\mathcal{I},\mathcal{I})\equiv Z_{*}$. Because $D^{-1}_{U}D_{0}$ is a diagonal matrix, we have $\Pi(i,:)=\frac{Z_{*}(i,:)}{\|Z_{*}(i,:)\|_{1}}$ for $i\in[N]$.
\end{proof}
\subsection{Proof of Theorem \ref{Main}}
\begin{proof}
Let $\lambda_{k}(X)$ denote the $k$-th largest eigenvalue in magnitude , $\kappa(X)$ denote the conditional number,  $\|X\|$ denote the spectral norm, and $\|X\|_{2\rightarrow\infty}$ denote the maximum $l_{2}$-norm among all rows for any matrix $X$ in this paper. Set $\delta_{\mathrm{min}}=\mathrm{min}_{i\in[N]}\mathscr{D}(i,i)$ and $\delta_{\mathrm{max}}=\mathrm{max}_{i\in[N]}\mathscr{D}(i,i)$. To prove this theorem, we need Lemma \ref{BoundUiF} and Lemma \ref{rowwiseerror} below.
\begin{lem}\label{BoundUiF}
Under $\mathrm{GoM}(\Pi,\Theta)$, we have below results:
\begin{itemize}
  \item $\sqrt{\frac{\tau+\delta_{\mathrm{min}}}{\tau+\delta_{\mathrm{max}}}}\frac{1}{\sqrt{K\lambda_{1}(\Pi'\Pi)}}\leq \|U(i,:)\|_{F}\leq\sqrt{\frac{\tau+\delta_{\mathrm{max}}}{\tau+\delta_{\mathrm{min}}}}\frac{1}{\sqrt{\lambda_{K}(\Pi'\Pi)}}\mathrm{~for~}i\in[N]$.
  \item $\frac{\tau+\delta_{\mathrm{min}}}{\lambda_{1}(\Pi'\Pi)}\leq\lambda_{K}(U_{\tau}(\mathcal{I},:)U'_{\tau}(\mathcal{I},:))\leq\lambda_{1}(U_{\tau}(\mathcal{I},:)U'_{\tau}(\mathcal{I},:))\leq\frac{\tau+\delta_{\mathrm{max}}}{\lambda_{K}(\Pi'\Pi)}$.
  \item $\frac{\tau+\delta_{\mathrm{min}}}{(\tau+\delta_{\mathrm{max}})\kappa(\Pi'\Pi)}\leq\lambda_{K}(U_{*}(\mathcal{I},:)U'_{*}(\mathcal{I},:))\leq\lambda_{1}(U_{*}(\mathcal{I},:)U'_{*}(\mathcal{I},:))\leq K\frac{\tau+\delta_{\mathrm{max}}}{\tau+\delta_{\mathrm{min}}}\kappa(\Pi'\Pi)$.
  \item $\frac{\rho^{2}\lambda_{K}(\Pi'\Pi)\lambda_{K}(B'B)}{\tau+\delta_{\mathrm{max}}}\leq \lambda_{K}(\mathscr{L}_{\tau}\mathscr{L}'_{\tau})\leq \lambda_{1}(\mathscr{L}_{\tau}\mathscr{L}'_{\tau})\leq\frac{\rho^{2}\lambda_{1}(\Pi'\Pi)\lambda_{1}(B'B)}{\tau+\delta_{\mathrm{min}}}$.
  \item $\|V(j,:)\|_{F}\leq\frac{K^{1.5}\kappa(\Pi)}{\sigma_{K}(B)}\sqrt{\frac{\tau+\delta_{\mathrm{max}}}{\tau+\delta_{\mathrm{min}}}}$ for $j\in[J]$.
\end{itemize}
\end{lem}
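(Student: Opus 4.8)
The plan is to reduce all five bounds to two foundational facts: an exact identity for the Gram matrix $U_{\tau}(\mathcal{I},:)U'_{\tau}(\mathcal{I},:)$ of the pure rows of $U_{\tau}$, and multiplicative singular-value bounds for the product $\mathscr{D}^{-1/2}_{\tau}\Pi\Theta'$. Throughout I would repeatedly use that $\mathscr{D}^{-1/2}_{\tau}$ is diagonal with entries in $[(\tau+\delta_{\mathrm{max}})^{-1/2},(\tau+\delta_{\mathrm{min}})^{-1/2}]$, so that for any matrix $X$ the value $\sigma_{k}(\mathscr{D}^{-1/2}_{\tau}X)$ is sandwiched between $(\tau+\delta_{\mathrm{max}})^{-1/2}\sigma_{k}(X)$ and $(\tau+\delta_{\mathrm{min}})^{-1/2}\sigma_{k}(X)$ (the standard singular-value inequalities for multiplication by an invertible factor).

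First I would establish the linchpin for the second bullet. From Lemma \ref{SVDPopulationLtau} we have $U=\mathscr{D}^{-1/2}_{\tau}\Pi U_{\tau}(\mathcal{I},:)$, so $U'U=I_{K}$ reads $U_{\tau}(\mathcal{I},:)'(\Pi'\mathscr{D}^{-1}_{\tau}\Pi)U_{\tau}(\mathcal{I},:)=I_{K}$; since $U_{\tau}(\mathcal{I},:)$ is an invertible $K\times K$ matrix under Conditions (C1)--(C2), this is equivalent to the exact identity $U_{\tau}(\mathcal{I},:)U'_{\tau}(\mathcal{I},:)=(\Pi'\mathscr{D}^{-1}_{\tau}\Pi)^{-1}$. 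The eigenvalues of the left side are therefore reciprocals of those of $\Pi'\mathscr{D}^{-1}_{\tau}\Pi$, and the Loewner sandwich $(\tau+\delta_{\mathrm{max}})^{-1}\Pi'\Pi\preceq\Pi'\mathscr{D}^{-1}_{\tau}\Pi\preceq(\tau+\delta_{\mathrm{min}})^{-1}\Pi'\Pi$ immediately yields the second bullet. The first bullet then follows by writing $U(i,:)=(\tau+\mathscr{D}(i,i))^{-1/2}\Pi(i,:)U_{\tau}(\mathcal{I},:)$, bounding the quadratic form $\Pi(i,:)U_{\tau}(\mathcal{I},:)U'_{\tau}(\mathcal{I},:)\Pi(i,:)'$ by the eigenvalue bounds of the second bullet, and using $\tfrac{1}{K}\le\|\Pi(i,:)\|^{2}_{F}\le1$, a consequence of $\Pi(i,:)\ge0$ and $\|\Pi(i,:)\|_{1}=1$.

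For the third bullet I would combine the first two. Setting $P=D_{U}(\mathcal{I},\mathcal{I})\mathscr{D}^{-1/2}_{\tau}(\mathcal{I},\mathcal{I})$, the key simplification is that for a pure index $i\in\mathcal{I}$ one has $\|U(i,:)\|_{F}=(\tau+\mathscr{D}(i,i))^{-1/2}\|U_{\tau}(i,:)\|_{F}$, so $P$ is exactly the diagonal matrix normalizing the rows of $U_{\tau}(\mathcal{I},:)$, namely $P(i,i)=1/\|U_{\tau}(i,:)\|_{F}$, and $U_{*}(\mathcal{I},:)U'_{*}(\mathcal{I},:)=P\,U_{\tau}(\mathcal{I},:)U'_{\tau}(\mathcal{I},:)\,P$. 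Since each diagonal entry $\|U_{\tau}(i,:)\|^{2}_{F}$ of the Gram matrix lies between its smallest and largest eigenvalues, $P(i,i)^{2}$ is controlled by the second bullet; the congruence sandwich $\lambda_{\mathrm{min}}(P^{2})\lambda_{K}(M)\le\lambda_{k}(PMP)\le\lambda_{\mathrm{max}}(P^{2})\lambda_{1}(M)$ with $M=U_{\tau}(\mathcal{I},:)U'_{\tau}(\mathcal{I},:)$ then produces the stated condition-number bounds.

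For the fourth bullet I would factor $\mathscr{L}_{\tau}=\mathscr{D}^{-1/2}_{\tau}\Pi\Theta'$ and use the sandwich above to reduce to $\sigma_{k}(\Pi\Theta')^{2}=\lambda_{k}(\Pi\Theta'\Theta\Pi')$; writing compact SVDs $\Pi=U_{\Pi}S_{\Pi}V'_{\Pi}$ and $\Theta=U_{\Theta}S_{\Theta}V'_{\Theta}$ shows $\sigma_{k}(\Pi\Theta')=\sigma_{k}(S_{\Pi}WS_{\Theta})$ with $W=V'_{\Pi}V_{\Theta}$ orthogonal, whence $\sigma_{K}(\Pi\Theta')\ge\sigma_{K}(\Pi)\sigma_{K}(\Theta)$ and $\sigma_{1}(\Pi\Theta')\le\sigma_{1}(\Pi)\sigma_{1}(\Theta)$; substituting $\Theta=\rho B$ gives the fourth bullet. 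Finally, the fifth bullet uses $V=\Theta\Pi'\mathscr{D}^{-1/2}_{\tau}U\Sigma^{-1}=:\Theta H$, so $\|V(j,:)\|_{F}\le\|\Theta(j,:)\|_{F}\|H\|\le\rho\sqrt{K}\,\|H\|$, and $\|H\|\le\sigma_{1}(\mathscr{D}^{-1/2}_{\tau}\Pi)/\sigma_{K}(\mathscr{L}_{\tau})$ is bounded by combining the sandwich for $\sigma_{1}(\mathscr{D}^{-1/2}_{\tau}\Pi)$ with the lower bound on $\sigma_{K}(\mathscr{L}_{\tau})$ from the fourth bullet. I expect the fourth and fifth bullets to be the main obstacle: the orthogonal-sandwich step for $\sigma_{k}(S_{\Pi}WS_{\Theta})$ must be handled carefully to pass from the singular values of the two factors to those of their product, and the fifth bullet requires threading the $\sigma_{K}(\mathscr{L}_{\tau})$ lower bound through a chain of spectral-norm inequalities while keeping track of $\kappa(\Pi)$ and the factor $\sqrt{(\tau+\delta_{\mathrm{max}})/(\tau+\delta_{\mathrm{min}})}$ (the slack in the powers of $K$ relative to the stated bounds is harmless).
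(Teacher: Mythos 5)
Your proposal is correct, and for the fourth and fifth bullets it follows essentially the same route as the paper: factor $\mathscr{L}_{\tau}=\rho\mathscr{D}^{-1/2}_{\tau}\Pi B'$, sandwich the diagonal factor between $(\tau+\delta_{\mathrm{max}})^{-1}$ and $(\tau+\delta_{\mathrm{min}})^{-1}$, and bound the eigenvalues of the product of Gram matrices; for the fifth bullet the paper chains Frobenius norms ($\|V(j,:)\|_{F}\leq\rho\|B(j,:)\|_{F}\|\Pi'\mathscr{D}^{-1/2}_{\tau}U\|_{F}\|\Sigma^{-1}\|_{F}$) where you use spectral norms, which actually yields $K^{0.5}$ in place of the stated $K^{1.5}$ and so implies the claim. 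The only genuine difference is in the first three bullets: the paper does not prove them at all, delegating them to Lemmas C.2 and C.3 of the cited reference on regularized spectral clustering, whereas you derive them self-containedly from the exact identity $U_{\tau}(\mathcal{I},:)U'_{\tau}(\mathcal{I},:)=(\Pi'\mathscr{D}^{-1}_{\tau}\Pi)^{-1}$ (obtained from $U'U=I_{K\times K}$ and $U=\mathscr{D}^{-1/2}_{\tau}\Pi U_{\tau}(\mathcal{I},:)$), the Loewner sandwich on $\Pi'\mathscr{D}^{-1}_{\tau}\Pi$, the bound $\tfrac{1}{K}\leq\|\Pi(i,:)\|_{2}^{2}\leq1$, and the congruence $U_{*}(\mathcal{I},:)U'_{*}(\mathcal{I},:)=PU_{\tau}(\mathcal{I},:)U'_{\tau}(\mathcal{I},:)P$ with $P(i,i)=1/\|U_{\tau}(i,:)\|_{F}$. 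This is almost certainly the content of the cited lemmas, so it is not a different method so much as a filling-in of an outsourced step; your version even sharpens the third bullet's upper bound by dropping the factor $K$. All steps you invoke (diagonal entries of a PSD matrix lying between its extreme eigenvalues, $\lambda_{K}(AB)\geq\lambda_{K}(A)\lambda_{K}(B)$ for the relevant positive semidefinite products, and $\sigma_{K}(\Pi\Theta')\geq\sigma_{K}(\Pi)\sigma_{K}(\Theta)$ for full-column-rank factors) are standard and used correctly.
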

\begin{proof}
The first three bullets are guaranteed by Lemmas C.2 and C.3 of \citep{qing2023regularized}. For the fourth statement, since $\mathscr{L}_{\tau}=\mathscr{D}^{-1/2}_{\tau}\mathscr{R}=\mathscr{D}^{-1/2}_{\tau}\Pi\Theta'=\rho \mathscr{D}^{-1/2}_{\tau}\Pi B'$, we have
\begin{align*}
\lambda_{K}(\mathscr{L}_{\tau}\mathscr{L}'_{\tau})=\rho^{2}\lambda_{K}(\mathscr{D}^{-1/2}_{\tau}\Pi B'B\Pi'\mathscr{D}^{-1/2}_{\tau})=\rho^{2}\lambda_{K}(\mathscr{D}^{-1}_{\tau}\Pi B'B\Pi')\geq\rho^{2}\lambda_{K}(\mathscr{D}^{-1}_{\tau})\lambda_{K}(\Pi B'B\Pi')\geq\frac{\rho^{2}\lambda_{K}(\Pi'\Pi)\lambda_{K}(B'B)}{\tau+\delta_{\mathrm{max}}},
\end{align*}
and
\begin{align*}
\lambda_{1}(\mathscr{L}_{\tau}\mathscr{L}'_{\tau})=\rho^{2}\lambda_{1}(\mathscr{D}^{-1/2}_{\tau}\Pi B'B\Pi'\mathscr{D}^{-1/2}_{\tau})=\rho^{2}\lambda_{1}(\mathscr{D}^{-1}_{\tau}\Pi B'B\Pi')\leq\rho^{2}\lambda_{1}(\mathscr{D}^{-1}_{\tau})\lambda_{1}(\Pi B'B\Pi')\leq\frac{\rho^{2}\lambda_{1}(\Pi'\Pi)\lambda_{1}(B'B)}{\tau+\delta_{\mathrm{min}}}.
\end{align*}

For the last statement, since $\mathscr{L}_{\tau}=\mathscr{D}^{-1/2}_{\tau}\mathscr{R}=\mathscr{D}^{-1/2}_{\tau}\Pi\Theta'=\rho\mathscr{D}^{-1/2}_{\tau}\Pi B'=U\Sigma V'$, we have $V=\rho B\Pi'\mathscr{D}^{-1/2}_{\tau}U\Sigma^{-1}$, which gives that
\begin{align*}
\|V(j,:)\|_{F}&=\rho \|B(j,:)\Pi'\mathscr{D}^{-1/2}_{\tau}U\Sigma^{-1}\|_{F}\leq\rho \|B(j,:)\|_{F}\|\Pi'\mathscr{D}^{-1/2}_{\tau}U\|_{F}\|\Sigma^{-1}\|_{F}\leq \rho\sqrt{K}\|\Pi'\mathscr{D}^{-1/2}_{\tau}\|_{F}\|\Sigma^{-1}\|_{F}\\
&\leq\frac{\rho K\|\mathscr{D}^{-1/2}_{\tau}\Pi\|_{F}}{\sigma_{K}(\mathscr{L}_{\tau})}\leq\frac{\rho K\|\Pi\|_{F}}{\sigma_{K}(\mathscr{L}_{\tau})\sqrt{\tau+\delta_{\mathrm{min}}}}\leq\frac{\rho K^{1.5}\|\Pi\|}{\sigma_{K}(\mathscr{L}_{\tau})\sqrt{\tau+\delta_{\mathrm{min}}}}\leq\frac{K^{1.5}\kappa(\Pi)}{\sigma_{K}(B)}\sqrt{\frac{\tau+\delta_{\mathrm{max}}}{\tau+\delta_{\mathrm{min}}}},
\end{align*}
where the last inequality holds by the 4th statement of this lemme.
\end{proof}

\begin{lem}\label{rowwiseerror}
Under $\mathrm{GoM}(\Pi,\Theta)$, suppose that Assumption \ref{Assum1} and Condition \ref{Con1} are satisfied, and $\sigma_{K}(\mathscr{L}_{\tau})\gg \sqrt{\frac{\rho N\mathrm{log}(N)}{\tau}}$. With high probability, we have
\begin{align*}	
\varpi:=\|\hat{U}\hat{U}'-UU'\|_{2\rightarrow\infty}=O(\sqrt{\frac{\mathrm{log}(N+J)}{\rho NJ}}).
\end{align*}
\end{lem}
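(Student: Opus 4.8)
The plan is to treat $E:=L_{\tau}-\mathscr{L}_{\tau}$ as a perturbation of the rank-$K$ population matrix $\mathscr{L}_{\tau}=U\Sigma V'$ and to invoke a row-wise (two-to-infinity) singular-subspace perturbation bound of the kind established in \citep{qing2023regularized}, whose Lemmas C.2 and C.3 already underlie Lemma \ref{BoundUiF}. Such a bound controls the target by a handful of population quantities and noise statistics, giving something of the form
\[
\varpi \lesssim \frac{\|E\|\,\|U\|_{2\rightarrow\infty}+\|EV\|_{2\rightarrow\infty}}{\sigma_{K}(\mathscr{L}_{\tau})},
\]
and is valid precisely once $\sigma_{K}(\mathscr{L}_{\tau})\gg\|E\|$, which is exactly the hypothesis $\sigma_{K}(\mathscr{L}_{\tau})\gg\sqrt{\rho N\mathrm{log}(N)/\tau}$. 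The factors $\|U\|_{2\rightarrow\infty}$, $\sigma_{K}(\mathscr{L}_{\tau})$, and $\max_{j}\|V(j,:)\|_{F}$ are already supplied by Lemma \ref{BoundUiF}; under Condition \ref{Con1} and the choice $\tau\asymp M\mathrm{max}(N,J)$ these read $\|U\|_{2\rightarrow\infty}=O(1/\sqrt{N})$, $\sigma_{K}(\mathscr{L}_{\tau})\gtrsim\rho\sqrt{J/M}$, and $\max_{j}\|V(j,:)\|_{F}=O(1/\sqrt{J})$. So the genuine work reduces to bounding the two noise quantities $\|E\|$ and $\|EV\|_{2\rightarrow\infty}$ with high probability.

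For $\|E\|$ I would separate the two sources of randomness, the responses and the degrees, by writing
\[
E=\mathscr{D}^{-1/2}_{\tau}(R-\mathscr{R})+\bigl(D^{-1/2}_{\tau}-\mathscr{D}^{-1/2}_{\tau}\bigr)R.
\]
The first summand has a deterministic scaling, so matrix Bernstein applies directly: its entries are independent, mean-zero, bounded by $M/\sqrt{\tau+\delta_{\mathrm{min}}}$, with variances at most $\rho/(\tau+\delta_{\mathrm{min}})$, and Assumption \ref{Assum1} makes the variance proxy dominate the boundedness term, yielding $\|\mathscr{D}^{-1/2}_{\tau}(R-\mathscr{R})\|\lesssim\sqrt{\rho\,\mathrm{max}(N,J)\,\mathrm{log}(N+J)}/\sqrt{\tau+\delta_{\mathrm{min}}}$. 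For the degree-correction summand I would show each degree $D(i,i)=\sum_{j}R(i,j)$ concentrates around $\mathscr{D}(i,i)$ (a sum of independent bounded variables with variance $O(\rho J)$), so that $|D^{-1/2}_{\tau}(i,i)-\mathscr{D}^{-1/2}_{\tau}(i,i)|\lesssim\tau^{-3/2}\sqrt{\rho J\,\mathrm{log}(N+J)}$; since $\tau$ dominates $\delta_{\mathrm{max}}$, a quick comparison (using $\|\mathscr{R}\|\lesssim\rho\sqrt{NJ}$ and $\rho\le M$) shows this correction is of the same or smaller order than the first summand. Hence $\|E\|=O(\sqrt{\rho\,\mathrm{max}(N,J)\,\mathrm{log}(N+J)}/\sqrt{\tau})$.

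For $\|EV\|_{2\rightarrow\infty}=\mathrm{max}_{i}\|E(i,:)V\|_{2}$ I would argue row by row: up to the negligible degree correction,
\[
E(i,:)V\approx\mathscr{D}^{-1/2}_{\tau}(i,i)\sum_{j\in[J]}\bigl(R(i,j)-\mathscr{R}(i,j)\bigr)V(j,:)
\]
is a $\mathscr{D}^{-1/2}_{\tau}(i,i)$-scaled sum of $J$ independent mean-zero vectors whose total variance is at most $\rho\sum_{j}\|V(j,:)\|_{F}^{2}=\rho\|V\|_{F}^{2}=\rho K$, with each summand of norm $\lesssim M/\sqrt{J}$. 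A vector Bernstein inequality (the variance term dominating by Assumption \ref{Assum1}) plus a union bound over $i\in[N]$ then gives $\|EV\|_{2\rightarrow\infty}=O(\sqrt{\rho\,\mathrm{log}(N+J)}/\sqrt{\tau})$. Substituting the four bounds into the perturbation inequality and simplifying with $\tau\asymp M\mathrm{max}(N,J)$ and $\sigma_{K}(\mathscr{L}_{\tau})\gtrsim\rho\sqrt{J/M}$ makes \emph{both} terms collapse to the same order $O(\sqrt{\mathrm{log}(N+J)/(\rho NJ)})$, which is the claim.

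The main obstacle is twofold. First, the random degree normalization couples the rows of $L_{\tau}$, so the clean independence available for $R-\mathscr{R}$ must be recovered by peeling off the degree-correction term and proving it genuinely lower order; this is exactly where Assumption \ref{Assum1} and the dominance of $\tau$ over $\delta_{\mathrm{max}}$ are indispensable. Second, and more importantly, obtaining the extra $\sqrt{J}$ improvement requires working with the projected noise $\|EV\|_{2\rightarrow\infty}$ rather than the raw $\|E\|_{2\rightarrow\infty}$: a naive row-wise bound would only deliver $O(\sqrt{\mathrm{log}(N)/(\rho N)})$, and it is the incoherence of the right singular vectors, $\mathrm{max}_{j}\|V(j,:)\|_{F}=O(1/\sqrt{J})$ from Lemma \ref{BoundUiF}, that, through the refined two-to-infinity bound, supplies the missing factor.
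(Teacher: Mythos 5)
Your proposal is correct and arrives at the stated rate, but it executes the argument differently from the paper. The paper's proof is shorter and more black-box: it sets $H=\hat{U}'U$, verifies the hypotheses of Theorem 4 of Chen et al.\ (2021) --- namely that the entries of $L_{\tau}-\mathscr{L}_{\tau}$ are mean-zero with variance $O(\rho/\tau)$ and magnitude $O(M/\sqrt{\tau})$, that the incoherence parameter $\mu=\mathrm{max}(N\|U\|_{2\rightarrow\infty}^{2}/K,\,J\|V\|_{2\rightarrow\infty}^{2}/K)=O(1)$ and $\kappa(\mathscr{L}_{\tau})=O(1)$ under Condition \ref{Con1} and $\tau\geq M\mathrm{max}(N,J)$, and that Assumption \ref{Assum1} keeps the boundedness-to-variance ratio $c_{b}=O(1)$ --- and then reads off $\|\hat{U}\mathrm{sgn}(H)-U\|_{2\rightarrow\infty}=O(\sqrt{K\rho\log(N+J)}/(\sigma_{K}(\mathscr{L}_{\tau})\sqrt{\tau}))$ before converting to $\varpi$ via $\|\hat{U}\hat{U}'-UU'\|_{2\rightarrow\infty}\leq 2\|\hat{U}\mathrm{sgn}(H)-U\|_{2\rightarrow\infty}$. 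You instead unroll what that cited theorem does internally: a deterministic two-to-infinity perturbation inequality driven by $\|E\|$ and $\|EV\|_{2\rightarrow\infty}$, with matrix and vector Bernstein supplying the concentration and the incoherence of $V$ from Lemma \ref{BoundUiF} supplying the crucial extra $\sqrt{J}$. Your route is longer but buys two things the paper's does not make explicit: (i) it isolates and controls the dependence created by the random degree normalization (the paper bounds $\mathbb{E}[(L_{\tau}(i,j)-\mathscr{L}_{\tau}(i,j))^{2}]$ by $\mathrm{Var}(R(i,j))/\mathrm{min}(\tau+1,\tau+\delta_{\mathrm{min}})$ and then treats the entries as independent mean-zero noise, which silently ignores that $D(i,i)$ couples all entries in row $i$; your decomposition $E=\mathscr{D}^{-1/2}_{\tau}(R-\mathscr{R})+(D^{-1/2}_{\tau}-\mathscr{D}^{-1/2}_{\tau})R$ with the degree-concentration estimate makes this step rigorous), and (ii) it makes transparent exactly where the hypothesis $\sigma_{K}(\mathscr{L}_{\tau})\gg\sqrt{\rho N\log(N)/\tau}$ enters, namely as the signal-to-noise condition $\sigma_{K}(\mathscr{L}_{\tau})\gg\|E\|$. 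Your order-of-magnitude bookkeeping ($\|U\|_{2\rightarrow\infty}=O(1/\sqrt{N})$, $\sigma_{K}(\mathscr{L}_{\tau})\gtrsim\rho\sqrt{J/M}$, $\mathrm{max}_{j}\|V(j,:)\|_{F}=O(1/\sqrt{J})$, both terms collapsing to $\sqrt{\log(N+J)/(\rho NJ)}$) is consistent with the constants in Lemma \ref{BoundUiF} and Condition \ref{Con1}.
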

\begin{proof}
Set $H=\hat{U}'U$. Let $H=U_{H}\Sigma_{H}V'_{H}$ be $H$'s top-$K$ SVD. Set $\mathrm{sgn}(H)=U_{H}V'_{H}$. Under $\mathrm{GoM}(\Pi,\Theta)$, for $i\in[N], j\in[J]$, we get $\mathbb{E}(R(i,j)-\mathscr{R}(i,j))=0$, $\mathbb{E}((L_{\tau}(i,j)-\mathscr{L}_{\tau}(i,j))^{2})=\mathbb{E}[(\frac{R(i,j)}{\sqrt{\tau+D(i,i)}}-\frac{\mathscr{R}(i,j)}{\sqrt{\tau+\mathscr{D}(i,i)}})^{2}]\leq \frac{\mathrm{Var}(R(i,j))}{\mathrm{min}(\tau+1,\tau+\delta_{\mathrm{min}})}=\frac{M\frac{\mathscr{R}(i,j)}{M}(1-\frac{\mathscr{R}(i,j)}{M})}{\mathrm{min}(\tau+1,\tau+\delta_{\mathrm{min}})}=\frac{\mathscr{R}(i,j)(1-\frac{\mathscr{R}(i,j)}{M})}{\mathrm{min}(\tau+1,\tau+\delta_{\mathrm{min}})}\leq\frac{\rho}{\mathrm{min}(\tau+1,\tau+\delta_{\mathrm{min}})}=O(\rho/\tau), |L_{\tau}(i,j)-\mathscr{L}_{\tau}(i,j)|\leq \frac{M}{\sqrt{\mathrm{min}(\tau+1,\tau+\delta_{\mathrm{min}})}}=O(M/\sqrt{\tau})$. Meanwhile, since $0<\delta_{\mathrm{min}}\leq\delta_{\mathrm{max}}=\mathrm{max}_{i\in[N]}\sum_{j=1}^{J}\mathscr{R}(i,j)=\rho\mathrm{max}_{i\in[N]}\sum_{j=1}^{J}\Pi(i,:)B'(j,:)\leq \rho J\leq\rho\mathrm{max}(N,J)\leq M\mathrm{max}(N,J)$ and $\frac{\tau+\delta_{\mathrm{max}}}{\tau+\delta_{\mathrm{min}}}$ approximates to 1 as $\tau$ increases, we see that to make $\frac{\tau+\delta_{\mathrm{max}}}{\tau+\delta_{\mathrm{min}}}$ close to 1, a larger $\tau$ is preferred. Suppose that $\tau\geq M\mathrm{max}(N,J)$, we have $\frac{\tau+\delta_{\mathrm{max}}}{\tau+\delta_{\mathrm{min}}}=O(1)$ though $\delta_{\mathrm{max}}\geq\delta_{\mathrm{min}}$. Then, by Lemma \ref{BoundUiF}, Condition \ref{Con1}, and $\tau\geq M\mathrm{max}(N,J)$, we have $\mu=\mathrm{max}(\frac{N\|U\|^{2}_{2\rightarrow\infty}}{K}, \frac{J\|V\|^{2}_{2\rightarrow\infty}}{K})=O(\frac{\tau+\delta_{\mathrm{max}}}{\tau+\delta_{\mathrm{min}}})=O(1)$ and $\kappa(\mathscr{L}_{\tau})=O(1)$. When Assumption \ref{Assum1} is satisfied, we get $c_{b}=\frac{O(M/\sqrt{t})}{\sqrt{\frac{\rho}{\tau}\frac{\mathrm{max}(N,J)}{\mu\mathrm{log}(N+J)}}}\leq O(1)$. Therefore, when $\sigma_{K}(\mathscr{L}_{\tau})\gg \sqrt{\frac{\rho N\mathrm{log}(N)}{\tau}}$ is satisfied, according to Theorem 4 of \cite{chen2021spectral}, with high probability, we have
\begin{align*}
\|\hat{U}\mathrm{sgn}(H)-U\|_{2\rightarrow\infty}\leq O(\frac{\sqrt{K\rho\mathrm{log}(N+J)}}{\sigma_{K}(\mathscr{L}_{\tau})\sqrt{\tau}}).
\end{align*}
By Lemma \ref{BoundUiF}, we have $	\|\hat{U}\mathrm{sgn}(H)-U\|_{2\rightarrow\infty}=O(\frac{1}{\sigma_{K}(\Pi)\sigma_{K}(B)}\sqrt{\frac{K\mathrm{log}(N+J)}{\rho}})$.
$\hat{U}'\hat{U}=I_{K\times K}$ and $U'U=I_{K\times K}$ give $\|\hat{U}\hat{U}'-UU'\|_{2\rightarrow\infty}\leq2\|U-\hat{U}\mathrm{sgn}(H)\|_{2\rightarrow\infty}$. Hence, by Condition \ref{Con1}, we get
\begin{align*}	
\|\hat{U}\hat{U}'-UU'\|_{2\rightarrow\infty}=O(\frac{1}{\sigma_{K}(\Pi)\sigma_{K}(B)}\sqrt{\frac{K\mathrm{log}(N+J)}{\rho}})=O(\sqrt{\frac{\mathrm{log}(N+J)}{\rho NJ}}).
\end{align*}
\end{proof}
For GoM-SRSC, its per-subject error rates of memberships equal that of community detection for mixed networks. Therefore, by the proof of Theorem 1 in \citep{qing2023regularized}, there exist two $K$-by-$K$ permutation matrices $\mathcal{P}$ and $\mathcal{P}_{*}$ such that with high probability,
\begin{align*}
\mathrm{max}_{i\in[N]}\|e'_{i}(\hat{\Pi}-\Pi\mathcal{P})\|_{1}=O(K^{1.5}\varpi\kappa(\Pi)\sqrt{\lambda_{1}(\Pi'\Pi)}) \mathrm{~and~}\mathrm{max}_{i\in[N]}\|e'_{i}(\hat{\Pi}_{*}-\Pi\mathcal{P}_{*})\|_{1}=O(\frac{K^{6.5}\varpi \kappa^{4.5}(\Pi)\lambda_{1}^{1.5}(\Pi'\Pi)}{\pi_{\mathrm{min}}}),
\end{align*}
where we have used the fact that $\frac{\tau+\delta_{\mathrm{max}}}{\tau+\delta_{\mathrm{min}}}=O(1)$ when $\tau\geq M\mathrm{max}(N,J)$ since $\delta_{\mathrm{min}}\leq\delta_{\mathrm{max}}\leq M\mathrm{max}(N,J)$. Then by Condition \ref{Con1} and Lemma \ref{rowwiseerror}, we get
\begin{align*}
\mathrm{max}_{i\in[N]}\|e'_{i}(\hat{\Pi}-\Pi\mathcal{P})\|_{1}=O(\sqrt{\frac{\mathrm{log}(N)}{\rho N}}) \mathrm{~and~}\mathrm{max}_{i\in[N]}\|e'_{i}(\hat{\Pi}_{*}-\Pi\mathcal{P}_{*})\|_{1}=O(\sqrt{\frac{\mathrm{log}(N)}{\rho N}}).
\end{align*}

Next, we bound $\|\hat{\Theta}-\Theta\mathcal{P}\|$. Because $R'\hat{\Pi}(\hat{\Pi}'\hat{\Pi})^{-1}$ is usually almost the same as $\hat{\Theta}$, we use the bound of $\|R'\hat{\Pi}(\hat{\Pi}'\hat{\Pi})^{-1}-\Theta\mathcal{P}\|$ as that of $\|\hat{\Theta}-\Theta\mathcal{P}\|$.
\begin{align}
&\|R'\hat{\Pi}(\hat{\Pi}'\hat{\Pi})^{-1}-\Theta\mathcal{P}\|=\|R'\hat{\Pi}(\hat{\Pi}'\hat{\Pi})^{-1}-\mathscr{R}'\Pi(\Pi'\Pi)^{-1}\mathcal{P}\|\leq\|(R'-\mathscr{R}')\hat{\Pi}(\hat{\Pi}'\hat{\Pi})^{-1}\|+\|\mathscr{R}'(\hat{\Pi}(\hat{\Pi}'\hat{\Pi})^{-1}-\Pi(\Pi'\Pi)^{-1}\mathcal{P})\|\notag\\
&\leq\|R-\mathscr{R}\|\|\hat{\Pi}(\hat{\Pi}'\hat{\Pi})^{-1}\|+\|\mathscr{R}\|\hat{\Pi}(\hat{\Pi}'\hat{\Pi})^{-1}-\Pi(\Pi'\Pi)^{-1}\mathcal{P}\|\leq\frac{\|R-\mathscr{R}\|}{\sigma_{K}(\hat{\Pi})}+\rho\sigma_{1}(\Pi)\sigma_{1}(B)\|\hat{\Pi}(\hat{\Pi}'\hat{\Pi})^{-1}-\Pi(\Pi'\Pi)^{-1}\mathcal{P}\|\notag\\
&=O(\frac{\|R-\mathscr{R}\|}{\sigma_{K}(\Pi)})+\rho\sigma_{1}(\Pi)\sigma_{1}(B)\|\hat{\Pi}(\hat{\Pi}'\hat{\Pi})^{-1}-\Pi(\Pi'\Pi)^{-1}\mathcal{P}\|\label{boundThetahat1},
\end{align}
where we use $\sigma_{K}(\Pi)$ to approximate $\sigma_{K}(\hat{\Pi})$ because $\hat{\Pi}$ is a good approximation of $\Pi$. By the proof of Theorem 1 \citep{qing2023latent}, we know that when Assumption \ref{Assum1} holds, with high probability, we have $\|R-\mathscr{R}\|=O(\sqrt{\rho\mathrm{max}(N,J)\mathrm{log}(N+J)})$, which gives that
\begin{align*}
\|R'\hat{\Pi}(\hat{\Pi}'\hat{\Pi})^{-1}-\Theta\mathcal{P}\|_{F}\leq\sqrt{K}\|R'\hat{\Pi}(\hat{\Pi}'\hat{\Pi})^{-1}-\Theta\mathcal{P}\|=O(\frac{\sqrt{\rho K\mathrm{max}(N,J)\mathrm{log}(N+J)}}{\sigma_{K}(\Pi)}).
\end{align*}

Because $\|\Theta\|_{F}=\rho\|B\|_{F}\geq\rho\sigma_{1}(B)$, we obtain
\begin{align*}
\frac{\|R'\hat{\Pi}(\hat{\Pi}'\hat{\Pi})^{-1}-\Theta\mathcal{P}\|_{F}}{\|\Theta\|_{F}}=O(\sqrt{\frac{K\mathrm{max}(N,J)\mathrm{log}(N+J)}{\rho\sigma^{2}_{K}(\Pi)\sigma^{2}_{1}(B)}}).
\end{align*}
By Condition \ref{Con1}, we have
\begin{align*}
\frac{\|\hat{\Theta}-\Theta\mathcal{P}\|_{F}}{\|\Theta\|_{F}}=O(\sqrt{\frac{\mathrm{log}(N)}{\rho N}}).
\end{align*}
Similarly, we have $\frac{\|\hat{\Theta}_{*}-\Theta\mathcal{P}_{*}\|_{F}}{\|\Theta\|_{F}}=O(\sqrt{\frac{\mathrm{log}(N)}{\rho N}})$.
\end{proof}
\bibliographystyle{elsarticle-num}
\bibliography{refRSCGoM}
\end{document}